\tikzset{
  treenode/.style = {align=center, inner sep=0pt, text centered,
    font=\sffamily},
  arn_n/.style = {treenode, circle, white, font=\sffamily\bfseries, draw=black,
    fill=black, text width=1.5em},
  arn_r/.style = {treenode, circle, black ,font=\sffamily\bfseries, draw=black,  fill=white, text width=1.5em, very thick},
  arn_x/.style = {treenode, rectangle ,font=\sffamily\bfseries, draw=red,
    minimum width=1.5em, minimum height=1.5em}
}
\newtheorem{theorem}{Theorem}[section]
\newtheorem{claim}[theorem]{Claim}
\newtheorem{defi}[theorem]{Definition}
\newtheorem{observation}[theorem]{Observation}
\newtheorem{note}[theorem]{Note}
\newtheorem{notation}[theorem]{Notation}
\newtheorem{open}[theorem]{Open Problem}
\definecolor{lgreen}{rgb}{0.6, 1.0, 0.0}
\newcommand{\N}{\mathbb{N}}
\newcommand{\zone}{\{0, 1\}}
\newcommand{\sym}{\mathsf{S}}
\renewcommand{\deg}{\mathrm{deg}}
\newcommand{\adeg}{\widetilde{\mathrm{deg}}}
\newcommand{\dqc}{\mathsf{D}}
\newcommand{\UC}{\mathsf{UC}}
\newcommand{\rqc}{\mathsf{R}}
\newcommand{\qqc}{\mathsf{Q}}
\newcommand{\roc}{\mathsf{R_{0}}}
\newcommand{\rac}{\mathsf{R}}
\newcommand{\cert}{\mathsf{C}}
\newcommand{\qec}{\mathsf{Q_{E}}}
\newcommand{\flip}{\mathsf{flip}}
\newcommand{\swap}{\mathsf{Swap_{\frac{1}{2}}}}
\newcommand{\AND}{\mathsf{AND}}
\newcommand{\NAND}{\mathsf{NAND}}
\newcommand{\OR}{\mathsf{OR}}
\newcommand{\MAJORITY}{\mathsf{MAJORITY}}
\newcommand{\PARITY}{\mathsf{PARITY}}
\newcommand{\bp}{\text{BPointer}}
\newcommand{\lp}{\text{LPointer}}
\newcommand{\rp}{\text{RPointer}}
\newcommand{\vl}{\text{Value}}
\newcommand{\cl}{\text{cell}}
\newcommand{\NW}{\mathsf{NW}}
\newcommand{\ku}{\mathsf{K}}
\newcommand{\RUB}{\mathsf{RUB}}
\newcommand{\GSS}{\mathsf{GSS}}
\newcommand{\desentransform}{\mathsf{DT}}
\newcommand{\abbl}{\mathsf{A1}}
\newcommand{\mabbl}{\mathsf{ModA1}}
\newcommand{\De}{\mathsf{Dec}}
\newcommand{\ksum}{\mathsf{k\mbox{-}sum}}
\newcommand{\encodedksum}{\mathsf{ENC\mbox{-}k\mbox{-}Sum}}
\newcommand{\blockksum}{\mathsf{Block\mbox{-}k\mbox{-}Sum}}
\newcommand{\type}{\mathsf{Type}}
\newcommand{\bt}{\mathsf{Bt}}
\newcommand{\Gr}{\mathsf{G}}
\newcommand{\ucmin}{\mathsf{UC_{min}}}
\renewcommand{\hat}{\widehat}
\renewcommand{\Tilde}{\widetilde}
\author[Chakraborty, Kayal, Paraashar]{Sourav Chakraborty\affiliationmark{1}\thanks{https://orcid.org/0000-0001-9518-6204}
  \and Chandrima Kayal\affiliationmark{2}\thanks{https://orcid.org/0009-0006-4827-3640}
  \and Manaswi Paraashar\affiliationmark{3}\thanks{https://orcid.org/0009-0005-3805-5095}}
\title[Transitive Functions]{Separations between Combinatorial Measures for Transitive Functions\footnote{A preliminary version of this work appeared in ICALP, 2022~\cite{SKP22}.}}
\affiliation{ 
  Indian Statistical Institute, Kolkata, India.\\
 Université Paris Cité, CNRS, IRIF, Paris, France.\\
  University of Copenhagen, Copenhagen, Denmark. }
\keywords{Boolean function, Transitive function, Partial Symmetry}
\begin{document}

\publicationdata{vol. 27:2}{2025}{18}{10.46298/dmtcs.11133}{2023-03-30; 2023-03-30; 2024-10-13; 2025-05-22}{2025-06-07}


\maketitle
\begin{abstract}
The role of symmetry in Boolean functions $f:\zone^n \to \zone$ has been extensively studied in complexity theory. For example, symmetric functions, that is, functions that are invariant under the action of $\sym_n$, is an important class of functions in the study of Boolean functions. A function $f:\zone^n \to \zone$ is called transitive (or weakly-symmetric) if there exists a transitive sub-group $\Gr$ of $\sym_n$ such that $f$ is invariant under the action of $\Gr$. In other words, the value of a transitive function remains unchanged even after the input bits of $f$ are moved around according to some permutation $\sigma\in \Gr$. Understanding various complexity measures of transitive functions has been a rich area of research for the past few decades.


This work investigates relations and separations between various complexity measures for the class of transitive functions. A class of functions called ``pointer functions'' is well known for demonstrating several state-of-the-art separations for general Boolean functions. The main contribution of this work is to extend this technique to transitive functions, constructing new functions that preserve the structural properties of pointer functions while being transitive. In particular, this allows us to show separations between query complexity and other measures for the class of transitive functions.

Our results advance the understanding of the transitivity of Boolean functions and highlight the utility of certain transitive groups, which may be of independent interest in mathematics and theoretical computer science.

A comprehensive summary of the relationships between combinatorial measures for transitive functions is presented, modeled after the known table for general Boolean functions.

\end{abstract}

\section{Introduction}

 For a Boolean function $f: \zone^n \to \zone$ what is the relationship between its various combinatorial measures, like deterministic query complexity ($\dqc(f)$),  bounded-error randomized and quantum query complexity ($\rqc(f)$ and $\qqc(f)$ respectively),  zero -randomized query complexity ($\roc(f)$), exact quantum query complexity ($\qec(f)$),  sensitivity ($\mathsf{s}(f)$), block sensitivity ($\mathsf{bs}(f)$), certificate complexity ($\cert(f)$), randomized certificate complexity ($\mathsf{RC}(f)$), unambiguous certificate complexity ($\mathsf{UC}(f)$),  degree ($\deg(f)$), approximate degree ($\adeg(f))$ and spectral sensitivity ($\lambda(f)$)\footnote{We provide the formal definitions of the various measures used in this paper in Section~\ref{sec: complexity measures of Boolean functions}}?
 For over three decades, understanding the relationships between these measures has been an active area of research in computational complexity theory.  These combinatorial measures have applications in many other areas of theoretical computer science, and thus the above question takes a central position. 

In the last couple of years, some of the more celebrated conjectures have been answered - like the quadratic relation between sensitivity and degree of Boolean functions \cite{Huang}.  We refer the reader to the survey~\cite{BW} for an introduction to this area. 

Understanding the relationship between various combinatorial measures involves two parts:
\begin{itemize}
    \item Relationships - proving that one measure is upper-bounded by a function of another measure. For example, for any Boolean function $f$, $\deg(f)\leq \mathsf{s}(f)^2$ and $\dqc(f) \leq \rqc(f)^3$.
 
    \item Separations - constructing functions that demonstrate separation between two measures. For example, there exists a Boolean function $f$ with $\deg(f) \geq \mathsf{s}(f)^2$. Also, there exists another Boolean function $g$ with $\dqc(g) \geq \rqc(g)^2$.

\end{itemize} 
Obtaining tight bounds between pairs of combinatorial measures - that is when the relationship and the separation results match - is the holy grail of this area of research. The current best-known results for different pairs of functions have been nicely compiled in \cite{ABK+}. 

For special classes of Boolean functions, the relationships and the separations might be different than those of general Boolean functions. For example, while it is known that there exists $f$ such that $\mathsf{bs}(f) = \Theta(\mathsf{s}(f)^2)$ \cite{Rub95}, for a 
symmetric function a tighter result is known, $\mathsf{bs}(f) = \Theta(\mathsf{s}(f))$. The best-known relationship of $\mathsf{bs}(f)$ for general Boolean functions is $\mathsf{s}(f)^4$ \cite{Huang}.
How the various measures behave for different classes of functions has been studied since the dawn of this area of research.

\vspace{0.5em}

\noindent\textbf{Transitive Functions:} One of the most well-studied classes of Boolean functions is that of the transitive functions. A function $f:\zone^n \to \zone$ is transitive if there is a transitive group $G\leq \sym_n$ such that the function value remains unchanged even after the indices of the input is acted upon by a permutation from $G$.  Note that, when $G = \sym_n$ then the function is symmetric. Transitive functions (also called ``weakly symmetric'' functions) have been studied extensively in the context of various complexity measures. This is because symmetry is a natural measure of the complexity of a Boolean function. 
It is expected that functions with more symmetry must have less variation among the various combinatorial measures. A recent work \cite{DCK+20} has studied the functions under various types of symmetry in terms of quantum speedup. So, studying functions in terms of symmetry is important in various aspects.

For example, for symmetric functions, where the transitive group is $\sym_n$, most of the combinatorial measures become the same up to a constant 
\footnote{There are still open problems on the tightness of the constants.}. Another example of transitive functions is the graph properties. The input is the adjacency matrix, and the transitive group is the graph isomorphism group acting on the bits of the adjacency matrix.  \cite{Turan84, Sun11, LiS17, GaoMSZ13}  tried to obtain tight bounds on the relationship between sensitivity and block sensitivity for graph properties. They also tried to answer the following question: how low can sensitivity and block sensitivity go for graph properties?

In~\cite{SYZ04, Chakraborty11, Sun07, Drucker11} it has been studied how low can the combinatorial measures go for transitive functions. The behavior of transitive functions can be very different from general Boolean functions. For example, it is known that there are Boolean functions for which the sensitivity is as low as $\Theta(\log n)$ where $n$ is the number of effective variables\footnote{A variable is effective if the function is dependent on it.}, it is known (from  \cite{Sun07} and \cite{Huang}) that if $f$ is a transitive function on $n$ effective variables then its sensitivity $s(f)$ is at least $\Omega(n^{1/12})$\footnote{It is conjectured that the sensitivity of a transitive function is $\Omega(n^{1/3})$.}. Similar behavior can be observed in other measures too. For example, it is easy to see that for a transitive function, the certificate complexity is $\Omega(\sqrt{n})$, while the certificate complexity for a general Boolean function can be as low as $O(\log n)$. In Table~\ref{tab:my_label 3} we have summarized the best-known separations of the combinatorial measures for transitive functions. 

A naturally related question is: 
\begin{center}
    {\em What are tight relationships between various pairs of combinatorial measures for transitive functions?}
\end{center}

By definition, the known relationship results for general functions hold for transitive functions, but tighter relationships may be obtained for transitive functions. On the other hand, the existing separations don't extend easily since the example used to demonstrate separation between certain pairs of measures may not be transitive. Some of the most celebrated examples are not transitive. For example, some of the celebrated function constructions like the pointer function in \cite{ABBL+17}, used for demonstrating tight separations between various pairs like $\dqc(f)$ and $\roc(f)$, are not transitive. Similarly, the functions constructed using the cheat sheet techniques \cite{ABK16} used for separation between quantum query complexity and degree, or approximate degree, are not transitive. 
Constructing transitive functions that demonstrate tight separations between various pairs of combinatorial measures is very challenging. 

\vspace{0.5em} 

\noindent\textbf{Our Results: } We try to answer the above question for various pairs of measures. More precisely, our main contribution is to construct transitive functions that have similar complexity measures as the {\em pointer functions}. Hence for those pairs of measures where pointer functions can demonstrate separation for general functions, we prove that transitive functions can also demonstrate similar separation. 

Our results and the current known relations between various pairs of complexity measures of transitive functions are compiled in Table~\ref{table: main_table}. This table is along the lines of the table in~\cite{ABK+} where the best-known relations between various complexity measures of general Boolean functions were presented.

Deterministic query complexity and zero-error randomized query complexity are two of the most basic measures and yet the tight relation between these measures was not known until recently. In \cite{Snir85} they showed that for the ``balanced $\NAND$-tree'' function, $\widetilde\wedge$-tree, $\dqc(\widetilde\wedge\mbox{-tree}) \geq \roc(\widetilde\wedge\mbox{-tree})^{1.33}$. Although the function  $\widetilde\wedge$-tree is transitive, the best-known relationship was quadratic, that is for all Boolean function $f$, $\dqc(f) = O(\roc(f)^{2})$. In \cite{ABBL+17} a new function, $\abbl$, was constructed for which deterministic query complexity and zero-error randomized query complexity can have a quadratic separation between them, and this matched the known relationship results. The function in \cite{ABBL+17} was a variant of the pointer functions - a class of functions introduced by~\cite{GPW18}  that has found extensive usage in showing separations between various complexity measures of Boolean functions. Note that this classes of functions are defined on non-Boolean domain, we can make such function Boolean using the canonical Boolean encoding which is of size at most log of the size of the alphabets we are using for non-Booelan domain. The function, $\abbl$, also gave (the current best-known) separations between deterministic query complexity and other measures like quantum query complexity and degree, but the function $\abbl$ is not transitive. 
Using the $\abbl$ function
we construct a transitive function that demonstrates a similar gap between deterministic query complexity and zero-error randomized query complexity, quantum query complexity, and degree.

\begin{theorem}
\label{thm:R_0vsD}
There exists a transitive function $F_{\ref{thm:R_0vsD}}$ such that
\begin{align*}
    \dqc(F_{\ref{thm:R_0vsD}}) = \Tilde{\Omega}(\qqc(F_{\ref{thm:R_0vsD}})^4), \hspace{2em}
    \dqc(F_{\ref{thm:R_0vsD}}) = \Tilde{\Omega}(\roc(F_{\ref{thm:R_0vsD}})^{2}), \hspace{2em}
    \dqc(F_{\ref{thm:R_0vsD}}) = \Tilde{\Omega}(\deg(F_{\ref{thm:R_0vsD}})^{2}).
\end{align*}
\end{theorem}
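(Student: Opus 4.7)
The plan is to start with the Ambainis--Balodis--Belovs--Lee pointer function $\abbl$, which is the source of the best known quadratic separation $\dqc=\tilde\Omega(\roc^{2})$ and of $\dqc=\tilde\Omega(\qqc^{4}),\ \tilde\Omega(\deg^{2})$ in the unrestricted setting, and to \emph{symmetrize} it into a transitive function $F_{\ref{thm:R_0vsD}} := \mabbl$ whose query and degree measures each match those of $\abbl$ up to polylogarithmic factors. Since $\abbl$ is built on a grid of cells, and each cell stores a value bit together with several pointer coordinates, the obstructions to transitivity are exactly (i) the distinguished role of rows versus columns in the grid, and (ii) the distinguished roles played by the value bit and by the individual pointer bits inside a cell. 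The construction has to eliminate both obstructions simultaneously.

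I would first recall the precise definition of $\abbl$ and isolate the combinatorial invariants that its lower and upper bounds depend on: a short deterministic description of any $1$-input and the certificate structure driving the randomised and quantum upper bounds. Next, I would replace each cell by an encoded block whose bits carry the same information as a raw cell but admit a transitive automorphism group on their coordinates. A $\ksum$-style block (as hinted at by the macros $\blockksum$ and $\encodedksum$) is a natural choice: it is invariant under a cyclic action on its input bits, a constant number of queries suffice to recover any fixed piece of the encoded data randomly/quantumly, yet an adversary can force a deterministic algorithm to read essentially all of the block before it can decide anything about the encoded content. Finally, I would wrap the cell--grid so that a transitive group (a cyclic row--column action composed with the intra-block cyclic action) acts on the entire coordinate set of $\mabbl$.

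The verification then naturally splits into three parts. Transitivity is checked by explicitly exhibiting generators of the group action and showing that the orbit of a single coordinate is everything. The upper bounds $\qqc(\mabbl)=\tilde O(\qqc(\abbl))$, $\roc(\mabbl)=\tilde O(\roc(\abbl))$, $\deg(\mabbl)=\tilde O(\deg(\abbl))$ follow by composing the algorithm or polynomial for $\abbl$ with the per-block decoder, whose overhead is $\mathrm{polylog}$ per cell access. The deterministic lower bound $\dqc(\mabbl)=\tilde\Omega(\dqc(\abbl))$ is obtained by an adversary reduction: given any deterministic tree for $\mabbl$, simulate it on an adversarial embedding that forces enough block bits to be queried to reveal a cell, thereby yielding a decision tree for $\abbl$ of depth at most a polylog factor larger.

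The step I expect to be the main obstacle is preserving the deterministic lower bound while simultaneously making each cell's bits symmetric. Intra-block symmetry a priori makes information easier to reach from many starting points, so the encoding must be tuned so that no deterministic strategy can shortcut the decoding of a block, while still admitting an efficient randomised and quantum decoder of matching polylogarithmic cost. Calibrating a $\ksum$-style block so that its deterministic complexity matches (up to polylog) both the randomised/quantum decoding cost and the information content of a single cell of $\abbl$ is the technical core of the argument; once this is achieved, the rest of the proof essentially bolts the decoder into the existing bounds for $\abbl$.
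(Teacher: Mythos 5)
Your high-level skeleton matches the paper's: take $\abbl_{(n,n)}$, symmetrize the row/column asymmetry, re-encode the contents of each cell so that a transitive group acts on all bits, and transfer the bounds by composition (lower bound because the composed function is at least as hard as $\abbl$, upper bounds because decoding a polylog-size cell costs polylog queries/degree). However, the concrete mechanism you propose leaves out the step where all the actual work lies, and the gadget you emphasize is solving the wrong problem.

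The genuine gap is equivariance of the cell encoding under the cell permutations. The value of $\abbl$ is only invariant under permuting rows and columns \emph{if the pointers stored inside every other cell are updated accordingly}. For this update to be part of a group action on input bits, the map ``encoding of pointer to row $i$'' $\mapsto$ ``encoding of pointer to row $\sigma(i)$'' must be realized by a fixed permutation of bit positions inside each cell, the same permutation for all inputs. Your ``cyclic row--column action'' cannot be realized this way on a succinct ($O(\log n)$-bit) pointer encoding: a cyclic shift of indices is not a permutation of the bit positions of binary representations, and a unary/cyclic-compatible encoding would blow cells up to size $\Theta(n)$, which destroys the $\roc$, $\qqc$ and $\deg$ upper bounds and hence the separation. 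The paper's solution is exactly this missing piece: indices are written in a balanced binary representation (each index bit doubled), so that flipping an index bit is a transposition of two positions, the group $\bt_{4\log n}$ of position swaps acts transitively on indices, and every cell permutation is applied \emph{simultaneously} with the corresponding position permutation inside every block of every cell; the residual root-swap, which transposes the matrix, is absorbed by modifying the outer function with a $\{\vdash,\dashv\}$ marker whose condition is evaluated on the transpose. Your proposal names the row/column obstruction but gives no mechanism for either the pointer update or the transpose, and these are precisely where invariance can fail. Separately, the feature you call the technical core --- tuning a $\ksum$-style block so that deterministic algorithms must read essentially the whole block --- is unnecessary: each cell carries only $O(\log n)$ bits, so intra-block deterministic hardness is irrelevant at $\tilde\Omega$ precision, and the deterministic lower bound is obtained simply by restricting to standard-form encodings (the decoder is onto), giving $\dqc(F) \geq \tilde\Omega(\dqc(\abbl_{(n,n)})) = \tilde\Omega(n^2)$. (The $\ksum$ gadget appears in the paper only for the quantum-versus-certificate separation, a different theorem.) What the block structure must instead guarantee is that parts and blocks remain identifiable (the paper uses distinct Hamming weights) so the decoder is well defined and invariant under the allowed part/block permutations.
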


The proof of Theorem~\ref{thm:R_0vsD} is presented in Section~\ref{sec: variant 1}. In \cite{ABBL+17, DHT17} various variants of the pointer function have been used to show separation between other pairs of measures like $\roc$ with $\rac$, $\qec$, $\deg$, and $\qqc$,  $\rac$ with $\adeg$, $\deg$, $\qec$ and sensitivity. Using similar techniques as Theorem~\ref{thm:R_0vsD} one can construct transitive versions of other variants of pointer functions(from \cite{ABBL+17, DHT17}) which give matching separations to the best-known separations of general functions. The construction of these functions, though more complicated and involved, are similar in flavor to that of $F_{\ref{thm:R_0vsD}}$. 



Using standard techniques, we can also obtain the following theorems as corollaries to Theorem~\ref{thm:R_0vsD}.



\begin{theorem}
\label{theo: intro sensitivity separations}
There exists transitive functions $F_{\ref{theo: intro sensitivity separations}}$ such that $\dqc(F_{\ref{theo: intro sensitivity separations}}) = \Tilde{\Omega}(\mathsf{s}(F_{\ref{theo: intro sensitivity separations}})^3)$.
\end{theorem}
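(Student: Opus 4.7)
The plan is to apply the transformation $\De$ of~\cite{DHT17} to the transitive function $F := F_{\ref{thm:R vs approx deg}}$ produced by Theorem~\ref{thm:R vs approx deg}. For any Boolean function $h$, $\De$ outputs a Boolean function $\De(h)$ on polynomially more variables with two key properties established in~\cite{DHT17}: the randomized query complexity is essentially preserved, $\rqc(\De(h)) = \tilde{\Omega}(\rqc(h))$, and the sensitivity is controlled by the original minimum unambiguous certificate complexity, $\mathsf{s}(\De(h)) = \tilde{O}(\ucmin(h))$. Consequently, any sufficiently strong $\rqc$-versus-$\ucmin$ separation for $h$ lifts to an $\rqc$-versus-$\mathsf{s}$ separation for $\De(h)$.

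First, I would verify that the transitive function $F$ already satisfies $\rqc(F) = \tilde{\Omega}(\ucmin(F)^3)$. Since $F$ is the transitive modification of the ABBL variant used to separate $\rqc$ from approximate degree, every $0$- and every $1$-input of $F$ admits a canonical ``pointer-chain plus verification'' certificate whose total length is governed by a single distinguished pointer path. A direct accounting mirroring the computation in~\cite{ABBL+17} for the original non-transitive function gives $\ucmin(F) = \tilde{O}(\rqc(F)^{1/3})$; combined with Theorem~\ref{thm:R vs approx deg}, this yields the required $\rqc(F) = \tilde{\Omega}(\ucmin(F)^3)$.

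Next I would verify the main new observation: that $\De$ preserves transitivity. The construction of $\De$ partitions the variables of $\De(h)$ into $n$ disjoint blocks $S_1, \ldots, S_n$ of equal size $m$, one per coordinate of $h$, and applies the same coordinate-symmetric gadget inside each block. If $h$ is invariant under a transitive subgroup $G \leq \sym_n$, then $\De(h)$ is invariant under the wreath product $\sym_m \wr G \leq \sym_{nm}$, which permutes the blocks $\{S_i\}_i$ via $G$ and permutes the variables inside each block independently by $\sym_m$. This action is transitive on all $nm$ variables, since $G$ is transitive across blocks and $\sym_m$ is transitive within each block. Setting $F_{\ref{theo: intro sensitivity separations}} := \De(F)$ and chaining the three facts gives
\begin{equation*}
\rqc(F_{\ref{theo: intro sensitivity separations}}) = \tilde{\Omega}(\rqc(F)) = \tilde{\Omega}(\ucmin(F)^3) = \tilde{\Omega}(\mathsf{s}(F_{\ref{theo: intro sensitivity separations}})^3).
\end{equation*}

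The main obstacle is the transitivity check for $\De$: one must ensure that the \cite{DHT17} gadget is truly coordinate-symmetric within each block, since otherwise the wreath-product action does not leave $\De(h)$ invariant. If the gadget in the original presentation labels its inputs asymmetrically, it can be replaced by a fully symmetric variant (for instance, an $\OR$-of-$\AND$ with all windows of equal size), and one verifies that $\mathsf{s}$, $\rqc$, and $\ucmin$ of $\De(h)$ change only by polylogarithmic factors under this replacement. Once this is in place, the remainder of the argument is routine.
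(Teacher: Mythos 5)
There is a genuine gap, and it sits in the quantitative middle step of your chain. You need the base function $F=F_{\ref{thm:R vs approx deg}}$ to satisfy $\rqc(F)=\Tilde{\Omega}(\ucmin(F)^{3})$, but this is false: the paper's own bounds give $\rqc(F)=\Tilde{\Omega}(n^{2})$ (Claim~\ref{claim: R(Fnn)}) and $\ucmin(F)=\Tilde{O}(n)$ (Observation~\ref{obs: UC(variant 3)}), and since the input size of $F$ is $\Tilde{O}(n^{2})$ one can never have $\rqc(F)$ exceed $\Tilde{O}(n^2)$, so the separation between $\rqc$ and $\ucmin$ for this function is quadratic, not cubic. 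Your claimed ``direct accounting'' giving $\ucmin(F)=\Tilde{O}(\rqc(F)^{1/3})=\Tilde{O}(n^{2/3})$ cannot hold (it would also contradict $\dqc=O(\ucmin^2)$, since $\dqc(F)\geq\rqc(F)=\Tilde{\Omega}(n^2)$). The reason the paper still gets exponent $3$ is that you have understated what the \cite{DHT17} construction delivers: the final function is not just a desensitized copy of $F$ but $\OR_{3\ucmin(F)}\circ F_{\desentransform}$, and the outer $\OR$ composition \emph{multiplies} the randomized query complexity by roughly $\ucmin(F)$ while the desensitization keeps $\mathsf{s}$ at $\Tilde{O}(\ucmin(F))$. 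That is exactly Theorem~\ref{thm: Tal's result}: a separation $\rqc(f)=\Tilde{\Omega}(\ucmin(f)^{1+k})$ yields $\rqc(\widetilde f)=\Tilde{\Omega}(\mathsf{s}(\widetilde f)^{2+k})$, and the paper applies it with $k=1$. With only the two properties you assert ($\rqc$ preserved, $\mathsf{s}=\Tilde{O}(\ucmin)$), your chain would end at $\rqc=\Tilde{\Omega}(\mathsf{s}^{2})$, which is why you were forced into the untrue cubic claim about $F$.

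A secondary issue is your description of the transform itself. The desensitized function $F_{\desentransform}$ is defined on three copies of the \emph{entire} input ($f_{\desentransform}(x_1x_2x_3)=1$ iff all three copies are $1$-inputs with the same unambiguous certificate), not by substituting a coordinate-wise gadget into each variable; so the relevant symmetry is $\sym_3$ permuting the copies together with the diagonal action $(\sigma,\sigma,\sigma)$ of the transitive group of $F$ (Observation~\ref{obs: desensitized version is transitive}), and transitivity of the final function follows because the outer $\OR$ is symmetric and composition of transitive functions is transitive (Observation~\ref{obs: composition of transitive symmetric}). There is no need to symmetrize any inner gadget, and the wreath product $\sym_m\wr G$ you describe does not act on the actual DHT17 construction. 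Repairing your argument amounts to restating the transform correctly and invoking the full strength of Theorem~\ref{thm: Tal's result} with the quadratic $\rqc$-versus-$\ucmin$ separation, which is the paper's proof.
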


Our proof techniques also help us make transitive versions of other functions like that used in~\cite{ABK16} to demonstrate the gap between $\qqc$ and certificate complexity. 

\begin{theorem}
\label{thm:Qvs C}
There exists a transitive function $F_{\ref{thm:Qvs C}}$ such that
$\qqc(F_{\ref{thm:Qvs C}}) = \widetilde{\Omega}(\cert(F_{\ref{thm:Qvs C}})^{2})$.

\end{theorem}

All our results are compiled (and marked in green) in Table~\ref{table: main_table}.

One would naturally ask what stops us from constructing transitive functions analogous to the other functions, like cheat sheet-based functions. In fact, one could ask why to use ad-hoc techniques to construct transitive functions (as we have done in most of our proofs) and instead, why not design a unifying technique for converting any function into a transitive function that would display similar properties in terms of combinatorial measures \footnote{In~\cite{DCK+20} they have demonstrated a technique that can be used for constructing a transitive partial function that demonstrates gaps (between certain combinatorial measures) similar to a given partial function that need not be transitive, but their construction need not construct a total function even when the given function is total.}. 
If one could do so, all the separation results for general functions (in terms of separation between pairs of measures) would translate to separation for transitive functions. We leave those directions for future work.

\begin{table}[t]
\scalebox{0.66}{
  \begin{threeparttable}
    \caption{\large best-known separations between combinatorial measures for transitive functions.}\label{table: main_table}
\centering
\begin{tabular}
{  m{0.45cm} ||c | c| c | c | c | c | c | c | c | c | c | c | c | c | c | c |} 
& & & & & & & & & & & &\\
& $\dqc$ & $\roc$ & $\rqc$ & $\mathsf{C}$ & $\mathsf{RC}$ & $\mathsf{bs}$ & $\mathsf{s}$ & $\lambda$ & $\qec$ & $\mathsf{deg}$ & $\qqc$ & $\mathsf{\widetilde{deg}}$ \\ 
\hline
\hline
{$\dqc$} & \cellcolor{black} &\cellcolor{green}2  ;  2 &\cellcolor{lgreen} 2  ;  3 & 2  ;  2 &2  ;  3&  2  ;  3& \cellcolor{lgreen}3  ;  6 & \cellcolor{lightgray}4  ;  6 & \cellcolor{lightgray}2  ;  3 & \cellcolor{lgreen}2  ;  3 &\cellcolor{green} 4  ;  4 & \cellcolor{lightgray} 4  ;  4 \\
&\cellcolor{black}  &  \cellcolor{green}T:\ref{thm:R_0vsD}& \cellcolor{lgreen}T:\ref{thm:R_0vsD}& $\land\circ\lor$ & $\land\circ\lor$ &$\land\circ\lor$ & \cellcolor{lgreen}T:\ref{theo: intro sensitivity separations} &\cellcolor{lightgray}  &\cellcolor{lightgray}&\cellcolor{lgreen}T:\ref{thm:R_0vsD}  &\cellcolor{green}T:\ref{thm:R_0vsD}  & \cellcolor{lightgray} \\

\hline
\multirow{1}{0pt}{$\roc$} & 1,1 & \cellcolor{black}& \cellcolor{lightgray}2  ;  2 & 2  ;  2& 2  ;  3  & 2  ;  3 & \cellcolor{lightgray}3  ;  6& \cellcolor{lightgray}4  ;  6 &\cellcolor{lightgray} 2  ;  3 &\cellcolor{lightgray} 2  ;  3 & \cellcolor{lightgray}3  ;  4 &\cellcolor{lightgray} 4 ; 4 \\
& $\oplus$ & \cellcolor{black} & \cellcolor{lightgray} & $\land\circ\lor$ & $\land\circ\lor$ & $\land\circ\lor$ & \cellcolor{lightgray}
 & \cellcolor{lightgray}& \cellcolor{lightgray} & \cellcolor{lightgray} & \cellcolor{lightgray} & \cellcolor{lightgray} \\

\hline
\multirow{2}{0pt}{$\rqc$} & 1 ; 1 & 1 ; 1 & \cellcolor{black}& 2 ; 2 & 2 ; 3 & 2 ; 3 &\cellcolor{lightgray} 3 ; 6 & \cellcolor{lightgray} 4 ; 6 & \cellcolor{lightgray}1.5 ; 3 & \cellcolor{lightgray}2 ; 3 &\cellcolor{yellow}2 ; 4 & \cellcolor{lightgray}4 ; 4\\
&$\oplus$  & $\oplus$ & \cellcolor{black} & $\land\circ\lor$&  $\land\circ\lor$& $\land\circ\lor$ & \cellcolor{lightgray} &\cellcolor{lightgray} &\cellcolor{lightgray} &\cellcolor{lightgray}  & \cellcolor{yellow}$\wedge$ &\cellcolor{lightgray}  \\
\hline

\multirow{2}{0pt}{$\mathsf{C}$} & 1 ; 1 & 1 ; 1 & 1 ; 2 & \cellcolor{black} & 2 ; 2 & 2 ; 2 &\cellcolor{yellow} 2 ; 5 & \cellcolor{yellow}2 ; 6 & 1.15 ; 3 & 1.63 ; 3 & 2 ; 4& 2 ; 4 \\
&$\oplus$  & $\oplus$ & $\oplus$  & \cellcolor{black} & {\scriptsize \cite{GSS16}} & {\scriptsize \cite{GSS16}}  & \cellcolor{yellow} {\scriptsize \cite{Rub95}} & \cellcolor{yellow} $\wedge$ & {\scriptsize \cite{Ambainis16}} &  {\scriptsize \cite{NW94}} & $\wedge$ & $\wedge$ \\

\hline
\multirow{2}{0pt}{$\mathsf{RC}$} & 1 ; 1 & 1 ; 1 & 1 ; 1 & 1 ; 1 & \cellcolor{black} & 1.5 ; 2 & 2 ; 4 &  2 ; 4 &1.15 ; 2 & 1.63 ; 2 & 2 ; 2 & 2 ; 2\\
&$\oplus$ & $\oplus$  & $\oplus$  & $\oplus$  &\cellcolor{black}  &{\scriptsize \cite{GSS16}} & {\scriptsize \cite{Rub95}} & $\land$& {\scriptsize \cite{Ambainis16}} & {\scriptsize \cite{NW94}} & $\wedge$ & $\wedge$ \\
\hline
\multirow{2}{0pt}{$\mathsf{bs}$} & 1 ; 1 & 1 ; 1 & 1 ; 1 & 1 ; 1 & 1 ; 1 & \cellcolor{black} & 2 ; 4 &  2 ; 4 & 1.15 ; 2& 1.63 ; 2 & 2, 2 & 2 ; 2\\
&$\oplus$  & $\oplus$ & $\oplus$ &$\oplus$  &  $\oplus$&  \cellcolor{black} & {\scriptsize \cite{Rub95}} & $\land$ & {\scriptsize \cite{Ambainis16}} & {\scriptsize\cite{NW94}} & $\wedge$ & $\wedge$ \\

\hline
\multirow{2}{0pt}{$\mathsf{s}$} & 1 ; 1 & 1 ; 1 & 1 ; 1 & 1 ; 1 & 1 ; 1 & 1 ; 1 & \cellcolor{black} & 2 ; 2 &1.15 ; 2  & 1.63 ; 2 & 2, 2 & 2 ; 2 \\
&$\oplus$  & $\oplus$ & $\oplus$ & $\oplus$ & $\oplus$ & $\oplus$ & \cellcolor{black} &$\land$  &{\scriptsize \cite{Ambainis16}} &  {\scriptsize \cite{NW94}}  & $\wedge$ & $\wedge$ \\
\hline
\multirow{2}{0pt}{$\lambda$} & 1 ; 1 & 1 ; 1 & 1 ; 1 & 1 ; 1 & 1 ; 1  & 1 ; 1 & 1 ; 1 & \cellcolor{black} &  1 ; 1 & 1 ; 1 & 1 ; 1 & 1 ; 1\\
& $\oplus$ & $\oplus$ &  $\oplus$& $\oplus$ &  $\oplus$&$\oplus$  &$\oplus$  & \cellcolor{black} & $\oplus$  & $\oplus$ & $\oplus$ &  $\oplus$\\
\hline

\multirow{2}{0pt}{$\qec$} & 1 ; 1 & 1.33 ; 2 & 1.33 ; 3 & 2 ; 2 & 2 ; 3 & 2 ; 3 & \cellcolor{yellow}2 ; 6 &\cellcolor{yellow} 2 ; 6 & \cellcolor{black} &\cellcolor{yellow} 1 ; 3 & 2 ; 4 &\cellcolor{yellow} 1 ; 4 \\
& $\oplus$ & $\widetilde{\wedge}$-tree & $\widetilde{\wedge}$-tree & $\land\circ\lor$ & $\land\circ\lor$ &  $\land\circ\lor$ & \cellcolor{yellow}T:\ref{thm:Qvs C} & \cellcolor{yellow}T:\ref{thm:Qvs C} & \cellcolor{black} & \cellcolor{yellow} $\oplus$ & $\wedge$& \cellcolor{yellow} $\oplus$\\
\hline

\multirow{2}{0pt}{$\mathsf{deg}$} & 1 ; 1 & 1.33 ; 2 & 1.33 ; 2 & 2 ; 2 & 2 ; 2 &2 ; 2 &2 ; 2 & 2 ; 2 & 1 ; 1 & \cellcolor{black} & 2 ;  2&2 ; 2\\
& $\oplus$ & $\widetilde{\wedge}$-tree & $\widetilde{\wedge}$-tree&  $\land\circ\lor$&  $\land\circ\lor$& $\land\circ\lor$ & $\land\circ\lor$ &  $\wedge$ & $\oplus$& \cellcolor{black} & $\wedge$ &   $\wedge$\\

\hline
\multirow{2}{0pt}{$\qqc$} & 1 ; 1 & 1 ; 1 & 1 ; 1 &\cellcolor{green}2  ; 2 &\cellcolor{lgreen}2  ; 3 &\cellcolor{lgreen}2  ; 3 &\cellcolor{yellow} 2 ; 6 & \cellcolor{yellow}2 ; 6 & 1, 1 & \cellcolor{yellow}1 ; 3 &  \cellcolor{black} &\cellcolor{yellow} 1 ; 4\\
& $\oplus$ & $\oplus$ & $\oplus$ &\cellcolor{green}T:\ref{thm:Qvs C}  &\cellcolor{lgreen}T:\ref{thm:Qvs C}  & \cellcolor{lgreen}T:\ref{thm:Qvs C}  & \cellcolor{yellow}T:\ref{thm:Qvs C} & \cellcolor{yellow}T:\ref{thm:Qvs C}& $\oplus$ & \cellcolor{yellow}$\oplus$  &  \cellcolor{black} & \cellcolor{yellow} $\oplus$\\
\hline

\multirow{2}{0pt}{$\mathsf{\widetilde{deg}}$} & 1 ; 1 & 1 ; 1 & 1 ; 1 & \cellcolor{yellow}  1 ; 2  & \cellcolor{yellow} 1 ; 2 & \cellcolor{yellow}  1 ; 2 & \cellcolor{yellow}  1 ; 2 & \cellcolor{yellow} 1 ; 2  & 1 ; 1 & 1 ; 1 & 1 ; 1&  \cellcolor{black} \\
& $\oplus$  & $\oplus$ &  $\oplus$& \cellcolor{yellow} $\oplus$ & \cellcolor{yellow} $\oplus$  & \cellcolor{yellow} $\oplus$  & \cellcolor{yellow} $\oplus$  & \cellcolor{yellow} $\oplus$  & $\oplus$ & $\oplus$ & $\oplus$ &  \cellcolor{black} \\
\hline

\end{tabular}
\begin{tablenotes}[flushleft]
\large 
\item[(1)] Entry $a ; b$ in row $A$ and column $B$ represents: 
for any \emph{transitive} function $f$,
$A(f) = O(B(f))^{b + o(1)}$, and there exists a \emph{transitive} function $g$ such that $A(g) = \Omega(B(g))^a$. 
\item[(2)] Cells with a green background are those for which we constructed new \emph{transitive functions} to demonstrate separations that match the best-known separations for general functions. The previously known functions that gave the strongest separations were not transitive. The second row (in each cell) gives the reference to the Theorems where the separation result is proved. 
Although for these green cells, the bounds match that of the general functions, for some cells (with a light green color), there is a gap between the known relationships and best-known separations. 
\item[(3)] Cells with a gray background are those for which \emph{transitive functions} can be constructed in a similar but generalized fashion of our construction to demonstrate separations that match the best-known separations for general functions. 
\item[(4)] In the cells with a white background, the best-known examples for the corresponding separation were already \emph{transitive functions}. For these cells, the second row either contains the function that demonstrates the separation or is a reference to the paper where the separation was proved. So for these cells, the separations for transitive functions matched the current best-known separations for general functions. Note that for some of these cells, the bounds are not tight for general functions. 
\item[(5)] Cells with a yellow background are those where the best-known separations for transitive functions do not match the best-known separations for general functions.

\end{tablenotes}
\end{threeparttable}
}

\end{table}

\section{Basic definitions and notations}
\label{sec: main text prilims}
\subsection{Complexity measures of Boolean functions}
\label{sec: complexity measures of Boolean functions}

In this work, the relation between various complexity measures of Boolean functions is extensively studied. We refer the reader to the survey~\cite{BW} for an introduction to 
the complexity of Boolean functions and complexity measures.
Several additional complexity measures and their relations among each other can also be found in~\cite{DHT17} and~\cite{ABK+}.
Similar to the above references, we define several complexity measures of Boolean functions that are relevant to us.

We refer to~\cite{BW} and~\cite{ABBL+17} for definitions of the deterministic query model, randomized query model, and quantum query model for Boolean functions.

\begin{defi}[Deterministic query complexity]
\label{defi: dqc}
The deterministic query complexity of $f:\zone^n \to \zone$, denoted by $\dqc(f)$, is the worst-case cost of the best deterministic query algorithm computing $f$.
\end{defi}

\begin{defi}[Randomized query complexity]
\label{defi: rqc}
The randomized query complexity of $f:\zone^n \to \zone$, denoted by $\rqc(f)$, is the worst-case cost of the best-randomized query algorithm computing $f$ to error at most $1/3$.
\end{defi}

\begin{defi}[Zero-error randomized query complexity]
\label{defi: roc}
The zero-error randomized query complexity of $f:\zone^n \to \zone$, denoted by $\roc(f)$, is the minimum worst-case expected cost of a (Las Vegas) randomized query algorithm that computes $f$ to zero-error, that is, on every input $x$, if the algorithm gives output then it should give the correct answer $f(x)$ with probability $1$. Otherwise, it returns `undefined'.  
\end{defi}

\begin{defi}[Quantum query complexity]
\label{defi: qqc}
The quantum query complexity of $f:\zone^n \to \zone$, denoted by $\qqc(f)$, is the worst-case cost of the best quantum query algorithm computing $f$ to error at most $1/3$.
\end{defi}

\begin{defi}[Exact quantum query complexity]
\label{defi: qec}
The exact quantum query complexity of $f:\zone^n \to \zone$, denoted by $\qec(f)$, is
the minimum number of queries made by a quantum algorithm that outputs $f(x)$ on every input $x \in \zone^n$ with probability $1$.
\end{defi}

Next, we define the notion of partial assignment that will be used to define several other complexity measures.

\begin{defi}[Partial assignment]
\label{defi :partial assignemtn}
A partial assignment is a function $p:S \to \zone$ where $S \subseteq [n]$ and the size of $p$ is $|S|$.
For $x \in \zone^n$ we say $p \subseteq x$ if
$x$ is an extension of $p$, that is the restriction of $x$ to $S$ denoted
$x|_S = p$.
\end{defi}

\begin{defi}[Certificate complexity]
\label{defi: certificate complexity}
A $1$-certificate is a partial assignment that forces the value of the function to $1$ and similarly, a $0$-certificate is a partial assignment that forces the value of the function to $0$. The certificate complexity of a function $f$ on $x$, denoted as $\cert(x, f)$, is the size of the
smallest $f(x)$-certificate that can be extended to $x$.

Also, define $0$-certificate of $f$ as
$\cert^{0}(f) = \max\{\cert(f,x):x\in \zone^n, f(x)=0\}$ and $1$-certificate of $f$ as
$\cert^{1}(f) = \max\{\cert(f,x):x\in \zone^n, f(x)=1\}$. Finally, define the certificate complexity of $f:\zone^n \to \zone$, denoted by $\cert(f)$, to be $\max\{\cert^{0}(f), \cert^{1}(f)\}$.
\end{defi}

\begin{defi}[Unambiguous certificate complexity]
For any Boolean function $f:\zone^n\to\zone$, a set of partial assignments $U$ is said to form
an unambiguous collection of $0$-certificates for $f$ if

\begin{enumerate}
    \item Each partial assignment in $U$ is a $0$-certificate
    (with respect to $f$)
    \item For each $x\in f^{-1}(0)$, there is some $p\in U$
    with $p\subseteq x$
    \item No two partial assignments in $U$ are consistent.
\end{enumerate}

We then define $\UC_{0}(f)$ to be the minimum value
of $\max_{p\in U} |p|$ over all choices of
such collections $U$.
We define $\UC_{1}(f)$ analogously, and set
$\UC(f)= \max\{\UC_{0}(f),\UC_{1}(f)\}$.
We also define the one-sided version,
$\UC_{\min}(f)= \min\{\UC_{0}(f),\UC_{1}(f)\}$.
\end{defi}

Next, we define randomized certificate complexity (see~\cite{Aar08}).

\begin{defi}[Randomized certificate complexity]
\label{defi: RC}
A randomized verifier for input $x$ is a randomized algorithm that, on input $y$ in the domain of $f$ accepts with probability $1$ if $y=x$, and rejects with probability at least $\frac{1}{2}$ if $f(y) \neq f (x)$. If $y \neq x$ but
$f (y ) =f (x)$, the acceptance probability can be arbitrary. The Randomized certificate complexity of $f$ on input $x$ is denoted by $\mathsf{RC}(f, x)$, is the minimum expected number of queries used by a randomized verifier for $x$. The randomized certificate complexity of $f$, denoted by $\mathsf{RC}(f)$, is defined as $\max\{\mathsf{RC}(f,x): x \in \zone^n\}$.
\end{defi}

\begin{defi}[Block sensitivity]
\label{defi: Block sensitivity}
The \textit{block sensitivity} $\mathsf{bs}(f,x)$ of a function $f:\zone^n \to \zone$ on an input $x$
is the maximum number of disjoint subsets $B_1, B_2, \dots , B_r$ of
$[n]$ such that for all $j$, $f(x) \neq f(x^{B_j})$, where $x^{B_j} \in \zone^n$ is the input obtained by flipping the bits of $x$ in the coordinates in $B_j$. The \textit{block sensitivity} of $f$, denoted by $\mathsf{bs}(f)$, is $ \max \{ \mathsf{bs}\{(f,x)\}: x \in \zone^n\}$.
\end{defi}

\begin{defi}[Sensitivity]
\label{defi: sensitivity}
The \textit{sensitivity} of $f$  on an input $x$ is
defined as the number of bits on which the function is sensitive:
$\mathsf{s}(f,x) = |\{i : f(x) \neq f(x^i)\}|$. We define the sensitivity of $f$ as $\mathsf{s}(f)=\max\{\mathsf{s}(f,x):x\in \zone^n\}$

We also define $0$-sensitivity of $f$ as
$\mathsf{s}^{0}(f) = \max\{\mathsf{s}(f,x):x\in \zone^n, f(x)=0\}$, and $1$-sensitivity of $f$ as
$\mathsf{s}^{1}(f) = \max\{\mathsf{s}(f,x):x\in \zone^n, f(x)=1\}$.
\end{defi}

Here we are defining \emph{Spectral sensitivity} from~\cite{ABK+}. For more details about \emph{Spectral sensitivity} and its updated relationship with other complexity measures, we refer~\cite{ABK+}.

\begin{defi}[Spectral sensitivity]
\label{defi: lambda}
Let $f:\zone^n \to \zone$ be a Boolean function.
	The \emph{sensitivity graph} of $f$, $G_f = (V,E)$ is a subgraph of the Boolean hypercube, where $V= \zone^n$, and $E = \{(x,x\oplus e_i)\in V\times V: i\in [n], f(x)\neq f(x\oplus e_i)\}$, where $x\oplus e_i \in V$ is obtained by flipping the $i$th bit of $x$. That is, $E$ is the set of edges between neighbors on the hypercube that have different $f$ values. Let $A_f$ be the adjacency matrix of the graph $G_f$.  We define the spectral sensitivity of $f$ as the largest eigenvalue of $A_f$.
\end{defi}

\begin{defi}[Degree]
\label{defi: degree}
A polynomial $p : \mathbb{R}^n \to \mathbb{R}$ represents $f: \zone^n \to \zone$ if for all $x \in \zone^n$, $p(x) = f(x)$. The degree of a Boolean function $f$, denoted by $\mathsf{deg}(f)$, is the degree of the unique multilinear polynomial that represents $f$.
\end{defi}

\begin{defi}[Approximate degree]
\label{defi: approx degree}
A polynomial $p : \mathbb{R}^n \to \mathbb{R}$ approximately represents a function $f: \zone^n \to \zone$ if for all $x \in \zone^n$, $|p(x) - f(x)| \leq \frac{1}{3}$. The approximate degree of a Boolean function $f$, denoted by $\widetilde{\mathsf{deg}}(f)$, is the minimum degree of a polynomial that approximately represents $f$.
\end{defi}

The following is a known relation between degree and one-sided unambiguous certificate complexity measure (\cite{DHT17}).

\begin{observation}[\cite{DHT17}]
\label{obs:UC min vs deg}
For any Boolean function $f$, $\mathsf{UC_{min}}(f) \geq \mathsf{deg}(f)$.
\end{observation}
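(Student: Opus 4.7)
The plan is to exhibit, starting from any unambiguous collection of certificates, an explicit multilinear polynomial representing $f$ whose degree is at most the size of the largest certificate in the collection. Since $\deg(f)$ is the degree of the \emph{unique} multilinear representing polynomial, this will immediately give $\deg(f) \le \UC_1(f)$, and, by symmetry between $f$ and $1-f$, also $\deg(f) \le \UC_0(f)$, yielding $\deg(f) \le \UC_{\min}(f)$.

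First I would fix an unambiguous collection $U$ of $1$-certificates for $f$ that witnesses $\UC_1(f)$, i.e.\ $|p| \le \UC_1(f)$ for every $p \in U$. For each partial assignment $p : S \to \zone$ define the monomial
\[
m_p(x) \;=\; \prod_{i \in S,\, p(i)=1} x_i \;\cdot\; \prod_{i \in S,\, p(i)=0} (1-x_i),
\]
which is a multilinear polynomial of degree exactly $|p|$ satisfying $m_p(x) = 1$ if $p \subseteq x$ and $m_p(x) = 0$ otherwise. Consider the polynomial $P(x) = \sum_{p \in U} m_p(x)$. This has degree at most $\max_{p \in U} |p| \le \UC_1(f)$.

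Next I would verify $P \equiv f$ on $\zone^n$ using the defining properties of an unambiguous collection. If $f(x)=1$, then by condition (2) of the definition of unambiguous collection, some $p \in U$ satisfies $p \subseteq x$, and by condition (3) (pairwise inconsistency) this $p$ is unique; hence exactly one term of $P(x)$ equals $1$ and all others vanish, so $P(x)=1$. If $f(x)=0$, then by condition (1) no $p \in U$ is consistent with $x$ (else $x$ would be forced to value $1$), so every term vanishes and $P(x)=0$. Thus $P$ is a multilinear polynomial representing $f$, and by uniqueness of the multilinear representation, $\deg(f) \le \deg(P) \le \UC_1(f)$.

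Finally I would apply the identical argument to $1-f$: an unambiguous collection of $0$-certificates for $f$ of maximal size $\UC_0(f)$ is, by definition, an unambiguous collection of $1$-certificates for $1-f$, so the previous paragraph gives $\deg(1-f) \le \UC_0(f)$, and $\deg(1-f)=\deg(f)$ for Boolean $f$ (subtracting from $1$ does not change the nonconstant part). Combining, $\deg(f) \le \min\{\UC_0(f),\UC_1(f)\} = \UC_{\min}(f)$. There is no real obstacle here; the only subtlety to state cleanly is that the unambiguity conditions (2) and (3) are \emph{exactly} what is needed to ensure the sum $\sum_{p} m_p$ evaluates to $f(x)$ and not to a larger count on the $1$-inputs.
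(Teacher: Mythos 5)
Your proof is correct, and it is the standard argument for this fact (the paper itself gives no proof, simply citing the reference): the unambiguity conditions guarantee that the sum of the certificate-indicator monomials $\sum_{p\in U} m_p$ evaluates to exactly $f$, giving $\deg(f)\le \UC_1(f)$, and passing to $1-f$ handles $\UC_0(f)$. The only point worth stating explicitly, which you do, is that two certificates both contained in the same input $x$ are consistent with each other, so condition (3) forces the surviving term to be unique; everything else, including $\deg(1-f)=\deg(f)$, is routine.
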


Next, we define the composition of two Boolean functions.

\begin{defi}[Composition of functions]
\label{defi: composition}
Let $f: \zone^{n} \to \zone$ and $g: \zone^m \to \zone^k$ be two functions. Then \emph{composition of $f$ and $g$}, denoted by $f \circ g: \zone^{\frac{n}{k}\cdot m} \to \zone$, is defined to be a function on $\frac{n}{k}m$ bits such that on input $x = (x_1, \dots, x_n) \in \zone^{\frac{n}{k}\cdot m}$, where each $x_i \in \zone^m$, $f \circ g(x_1, \dots, x_n) = f(g(x_1), \dots, g(x_n))$. We will refer $f$ as \emph{outer function} and $g$ as \emph{inner function}. Also, we will assume throughout that $k$ divides $n$ whenever $\frac{n}{k}$ appears in the input size of a function. 
\end{defi}

One often tries to understand how the complexity measure of a composed function behaves concerning the measures of the individual functions. The following folklore theorem that we will be using multiple times in our paper. 
\begin{theorem}
\label{theo: composition}
Let $f: \zone^n \to \zone$ and $g: \zone^m \to \zone^k$ be two Boolean functions and the composition $(f \circ g)$ is defined as Definition~\ref{defi: composition} then 
\begin{enumerate}
    \item $\dqc(f \circ g) =  \Omega(\dqc(f)/k)$, assuming $g$ is an onto function.
    \item $\qqc(f \circ g) =  \Omega(\qqc(f)/k)$, assuming $g$ is onto.
    \item $\roc(f\circ g) = O(\roc(f)\cdot m)$ and if $g$ is onto then $\roc(f\circ g) = \Omega(\roc(f)/ k)$
    \item $\rqc(f\circ g) = O(\rqc(f)\cdot m)$ and if $g$ is onto then $\rac(f\circ g) = \Omega(\rac(f)/ k)$.
    \item $\deg(f\circ g) = O(\deg(f)\cdot m)$
    \item $\adeg(f\circ g) = O(\adeg(f)\cdot m)$.
\end{enumerate}
\end{theorem}

\begin{proof}
Here we are giving a proof for (1). The proof of other lower bounds follow from a similar argument, while the upper-bounds are straight forward.

The input to $f$ in $f \circ g$ can be seen as an $n$-bit string divided into $m = n/k$ blocks each of size $k$. Since $g$ is onto, all possible strings in $\{0,1\}^n$ as possible inputs to $f$. In order to compute $f$ correctly on all inputs, at least $\dqc(f)$ may inputs to $f$ bits must be queried. Since one query to $f \circ g$ reveals at most $k$ bits of the input to $f$, we get a lower bound of $\Omega(\dqc(f)/k)$.

\end{proof}

If the inner function $g$ is Boolean valued then we can obtain some tighter results for the composed functions. 

\begin{theorem}
\label{theo: tight composition}
Let $f: \zone^n \to \zone$ and $g: \zone^m \to \zone$ be two Boolean functions then 
\begin{enumerate}
    \item (\cite{Tal13, Montanaro14}) $\dqc(f \circ g) =  \Theta(\dqc(f)\cdot \dqc(g))$.
    \item (\cite{Reichardt11,LMR+11,Kimmel13}) $\qqc(f \circ g) =  \Theta(\qqc(f)\cdot \qqc(g))$.
    \item (folklore) $\deg(f\circ g) = \Theta(\deg(f)\cdot \deg(g))$.
\end{enumerate}
\end{theorem}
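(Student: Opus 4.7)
I would prove the three parts separately, and for each the upper bound is a direct ``substitute and simulate'' construction while the lower bound ultimately invokes the corresponding cited result.

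For part (3) on degree, the upper bound $\deg(f \circ g) \leq \deg(f)\deg(g)$ is immediate: if $p_f$ and $p_g$ are the unique multilinear polynomials representing $f$ and $g$, then $p_f(p_g(x^{(1)}), \dots, p_g(x^{(n)}))$ represents $f \circ g$ over the Boolean cube, and multilinearizing this expression block-by-block can only decrease the degree. The matching lower bound is folklore: given any multilinear polynomial $q$ of degree $d$ representing $f \circ g$, fix all but one block to inputs $x^\ast$ witnessing $\deg(g)$ (that is, inputs at which restricting $g$-hardness is preserved). A symmetrization / block-restriction argument then produces, from $q$, a polynomial in the remaining block that must retain ``$f$-hardness'' across the other blocks, forcing $d \geq \deg(f)\cdot \deg(g)$.

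For part (1) on deterministic query complexity, the upper bound is the standard tree-substitution: take an optimal depth-$\dqc(f)$ decision tree $\mathcal{T}_f$ for $f$ and replace each input query by the optimal depth-$\dqc(g)$ decision tree for $g$ on the appropriate block, giving depth $\dqc(f)\cdot \dqc(g)$. The matching lower bound is the theorem of Tal (and independently Montanaro) and is proved through sabotage complexity $\mathsf{RS}(f)$: one shows $\mathsf{RS}$ composes multiplicatively and satisfies $\dqc(f) = \Theta(\mathsf{RS}(f)^{o(1)}\cdot \mathsf{RS}(f))$ up to polylogarithmic factors, which when combined yields the claimed lower bound.

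For part (2) on quantum query complexity, the lower bound follows from the fact that the general (negative-weights) adversary bound $\mathrm{ADV}^{\pm}$ is exactly multiplicative under composition, together with Reichardt's tight characterization $\qqc(f) = \Theta(\mathrm{ADV}^\pm(f))$. The upper bound, due to Reichardt and refined by Lee--Mittal--Reichardt--\v{S}palek--Szegedy and Kimmel, uses the span program / dual-adversary SDP framework to compose bounded-error quantum algorithms without paying the usual logarithmic error-amplification overhead that a naive majority-of-$O(\log)$-repetitions would incur. I would expect the quantum upper bound to be the main obstacle in a fully self-contained proof, since the degree and deterministic cases reduce to short syntactic arguments plus one black-box invocation, whereas removing the $\log$ factor in the quantum composition genuinely requires the span program machinery; in the present paper, however, all three bounds are black-boxed via the cited references, so the ``proof'' is simply the citation of the upper-bound construction together with the external lower bounds.
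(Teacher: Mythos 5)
The paper offers no proof of this theorem at all: it is stated purely as a black-box import of known results (Tal/Montanaro for the deterministic case, Reichardt/LMRSS/Kimmel for the quantum case, folklore for degree), and your closing observation that ``the proof is simply the citation'' is exactly how the paper treats it. In that sense your proposal matches the paper's (non-)argument. Two cautions about your expository sketches, though, since they contain inaccuracies you should not let propagate. First, the deterministic lower bound $\dqc(f\circ g)\geq \dqc(f)\cdot\dqc(g)$ is \emph{not} proved via sabotage complexity; sabotage complexity is Ben-David--Kothari's tool for \emph{randomized} composition, and the relation you write for it is not a real theorem. Tal's and Montanaro's proofs are direct adversary arguments against a deterministic decision tree for the composed function. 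Second, your degree lower bound sketch (``fix all but one block to a $\deg(g)$-witnessing input'') does not work as stated --- restricting all but one block can only certify degree about $\deg(g)$, not $\deg(f)\cdot\deg(g)$. The standard folklore argument is a top-coefficient computation: since the blocks are variable-disjoint and the representing multilinear polynomial of a Boolean function is unique, the coefficient of the monomial $\prod_{i\in S}T^{(i)}$ (with $S$ a maximal monomial of $f$'s polynomial and $T^{(i)}$ the copy of a maximal monomial of $g$'s polynomial in block $i$) in the composed polynomial equals $\hat{p}(S)\cdot\hat{q}(T)^{|S|}\neq 0$, which pins the degree at exactly $\deg(f)\cdot\deg(g)$. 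Your quantum sketch via the negative-weight adversary bound is accurate.
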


Over the years a number of interesting Boolean functions have been constructed to demonstrate differences between various measures of Boolean functions. Some of the functions have been referred to in the Table~\ref{table: main_table}. We describe the various functions in the Subsection~\ref{sec:BooleanFns}.

\subsection{Some Boolean functions and their properties}\label{sec:BooleanFns}
In this section, we define some standard functions that are either mentioned in Table~\ref{table: main_table} or used somewhere in the paper. We also state some of the properties that we need for our proofs. We start by defining some basic Boolean functions.

\begin{defi}
\label{defi: parity}
Define $\PARITY: \zone^n \to \zone$ to be the $\PARITY(x_1,\dots, x_n) = \sum x_i \text{mod}~2$. We use the notation $\oplus$ to denote $\PARITY$.
\end{defi}

\begin{defi}
\label{defi: AND}
Define $\AND: \zone^n \to \zone$ to be the $\AND(x_1,\dots, x_n) = 0$ if and only if there exists an $i \in [n]$ such that $x_i = 0$. We use the notation $\wedge$ to denote $\AND$.
\end{defi}

\begin{defi}
\label{defi: OR}
Define $\OR: \zone^n \to \zone$ to be the $\OR(x_1,\dots, x_n) = 1$ if and only if there exists an $i \in [n]$ such that $x_i = 1$. We use the notation $\vee$ to denote $\OR$.
\end{defi}

\begin{defi}
\label{defi: MAJORITY}
Define $\MAJORITY: \zone^n \to \zone$ as $\MAJORITY(x) =1$ if and only if $|x| > \frac{n}{2}$.
\end{defi}

We need the following definition of composing iteratively with itself. 

\begin{defi}[Iterative composition of a function]
\label{defi: iterative composition}
Let $f: \zone^n \to \zone$ a Boolean function. For $d \in \N$ we define the function $f^d : \zone^{n^d} \to \zone$ as follows: if $d = 1$ then $f^d = f$, otherwise
\begin{align*}
    f^d(x_1, \dots, x_{n^d}) = 
    f\left(f^{d-1}(x_1, \dots, x_{n^{d-1}}), \dots, f^{d-1}(x_{n^{d}-n+1}, \dots, x_{n^{d}})\right).
\end{align*}
\end{defi}

\begin{defi}
\label{defi: NAND tree}
For $d \in \N$ define $\NAND$-tree of depth $d$ as $\NAND^d$ where $\NAND:\zone^2 \to \zone$ is defined as:
$\NAND(x_1, x_2) = 0$ if and only if $x_1 \neq x_2$. We use the notation $\widetilde{\wedge}$-tree to denote $\NAND$-tree.
\end{defi}

Now we will define a function that gives a quadratic separation between sensitivity and block sensitivity.

\begin{defi}[Rubinstein’s function (\cite{Rub95})]
\label{defi: Rubinstein’s function}
Let $g: \zone^k \to \zone$ be such that $g(x) =1$ iff $x$ contains two consecutive ones and the rest of the bits are $0$. The Rubinstein’s function, denoted by $\RUB: \zone^{k^2} \to \zone$ is defined to be $\RUB = \OR_k \circ g$.
\end{defi}

\begin{theorem}(\cite{Rub95})
\label{fact: quadratic separation s vs bs for Rubinstein’s function}
For the Rubinstein’s function in Definition~\ref{defi: Rubinstein’s function} $\mathsf{s}(\RUB) = k$ and $\mathsf{bs}(\RUB) = k^2/2$. Thus $\RUB$ witnesses a quadratic gap between sensitivity and block sensitivity.
\end{theorem}

\cite{NS94} first introduced a function whose $\deg$ is significantly smaller than $\mathsf{s}$ or $\mathsf{bs}$. This appears in the footnote in~\cite{NW94} that E. Kushilevitz also introduced a similar function with $6$ variables which gives a slightly better gap between $\mathsf{s}$ and $\deg$.  Later Ambainis computed $\qec$ of that function and gave a separation between $\qec$ and $\dqc$~\cite{Ambainis16}. This function is fully sensitive at all zero input, consequently, this gives a separation between $\qec$ and $\mathsf{s}$. 
\begin{defi}[\cite{NS94}]
\label{defi: Nisan's function}
Define $\NW$ as follows:
\begin{align*}
    \NW(x_1, x_2, x_3) =
    \begin{cases}
    1  \text{ iff } x_i \neq x_j \text{ for some  } i, j \in \{1, 2, 3\} \\
    0 \text{ otherwise}.
    \end{cases}
\end{align*}
Now define the $d$-th iteration $\NW^d$ on $(x_1, x_2, \dots, x_{3^d})$ as Definition~\ref{defi: iterative composition} where $d \in \N$.
\end{defi}

\begin{defi}[Kushilevitz’s function]
\label{defi:Kushilevitz’s function}
Define $\ku$ as follows:
\begin{align*}
    \ku(x_1, x_2, x_3,x_4,x_5,x_6) =
    \Sigma x_i + \Sigma x_iy_i + x_1x_3x_4 + x_1x_4x_5 +x_1x_2x_5 + x_2x_3x_4 + x_2x_3x_5 + x_1x_2x_6\\ +x_1x_3x_6 + x_2x_4x_6 + x_3x_5x_6 + x_4x_5x_6.
\end{align*}
Now define the $d$-th iteration $\ku^d$ on $(x_1, x_2, \dots, x_{6^d})$ as Definition~\ref{defi: iterative composition} where $d \in \N$.
\end{defi}

Next, we will describe two examples that were introduced in~\cite{GSS16} and give the separation between $\cert$ vs. $\mathsf{RC}$ and $\mathsf{RC}$ vs. $\mathsf{bs}$ respectively. They also have introduced some new complexity measures for the iterative version of a function and how to use them to get the critical measure between two complexity measures. For more details we refer to~\cite{GSS16}. 

\begin{defi}[\cite{GSS16}]
\label{defi: C vs RC}
Let $n$ be an even perfect square, let $k = 2 \sqrt{n}$ and $d= \sqrt{n}$. Divide the $n$ indices of the input into $n/k$ disjoint blocks. 
Define $\GSS_1: \zone^n \to \zone$ as follows: $\GSS_1(x) = 1$ if and only if $|x| \geq d$ and all the $1$'s in $x$ are in a single block. Define $\GSS_1^t$ with $\GSS_1^1 = \GSS_1$ and $\GSS_1^{t} = \GSS_1^{t-1}\circ \GSS_1^1$.
\end{defi}

\begin{defi}[\cite{GSS16}]
\label{defi: RC vs bs}
Define $\GSS_2: \zone^n \to \zone$ where $n$ is of the form $\binom{t}{2}$ for some integer $t$. Identify the input bits of $\GSS_2$ with the edges of the complete graph $K_t$. An input $x \in \zone^n$ induces a subgraph of $K_t$
consisting of edges assigned $1$ by $x$.
The function $\GSS_2(x)$ is defined to
be $1$ iff the subgraph induces by $x$ has a star graph.
\end{defi}

\begin{defi}
\label{defi: k-sum}
For $\Sigma= [n^k]$ the function $\ksum: \Sigma^n \to \zone$ is defined as follows: on input $x_1,x_2, \dots, x_n \in \Sigma$, if there exists $k$ element $x_{i_1}, \dots, x_{i_k}$, $i_1, \dots, i_k \in [n]$, that sums to $0~(\text{mod}~|\Sigma|)$ then output $1$, otherwise output $0$.
\end{defi}

\begin{theorem}[\cite{ABK16, BSK13}]
\label{theo: ksum qcc lower bound}
For the function $\ksum: \Sigma^n \to \zone$, if $|\Sigma| \geq 2 \binom{n}{n}$ then 
\begin{align*}
    \qqc(\ksum) = \Omega(n^{k/(k+1)}/\sqrt{k}).
\end{align*}
\end{theorem}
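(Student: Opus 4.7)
The plan is to establish this lower bound via the general (negative-weight) quantum adversary method (GAM), exploiting the strong symmetry of $\ksum$ under coordinate permutations and alphabet translations. First I would set up two hard input distributions. The assumption $|\Sigma| \geq 2\binom{n}{k}$ ensures, by a union bound over $k$-subsets, that a uniformly random $x \in \Sigma^n$ contains a zero-summing $k$-tuple with probability at most $1/2$, so $\ksum^{-1}(0)$ has density at least $1/2$ in $\Sigma^n$. Let $\mu_0$ be the uniform distribution on $\ksum^{-1}(0)$, and let $\mu_1$ be obtained by drawing $x \sim \mu_0$, picking a uniformly random $k$-subset $S \subseteq [n]$, and re-sampling $x|_S$ uniformly conditioned on summing to zero in $\Sigma$. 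By the GAM, $\qqc(\ksum) = \Omega\bigl(\|\Gamma\|/\max_i \|\Gamma \circ D_i\|\bigr)$ for any zero-diagonal adversary matrix $\Gamma$ supported on $\ksum^{-1}(0) \times \ksum^{-1}(1)$, where $D_i$ is the $0/1$ matrix indicating disagreement on coordinate $i$.

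Second, I would symmetrize in order to make the SDP tractable. The symmetric group $\sym_n$ acts on $\Sigma^n$ by permuting coordinates, and $\Sigma$ acts by simultaneous additive translation $x \mapsto x + t\mathbf{1}$ on all coordinates, where $t$ satisfies $kt = 0$ so that the zero-sum property of every $k$-tuple is preserved. The product $H$ of these two groups stabilises both $\ksum$ and the pair $(\mu_0, \mu_1)$. By the standard representation-theoretic reduction (Hoyer--Lee--Spalek), the optimal $\Gamma$ may be taken $H$-invariant, which block-diagonalises the adversary SDP into blocks indexed by the \emph{type} of a pair $(x,y) \in \ksum^{-1}(0) \times \ksum^{-1}(1)$: the intersection pattern between the planted $k$-subset of $y$ and the Hamming-difference set $\{i : x_i \neq y_i\}$, together with a few algebraic invariants of the differences. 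This reduces the infinite-dimensional SDP to the spectral analysis of a small family of combinatorial operators closely related to the Johnson scheme.

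Third, and this is the main technical step, I would carry out the combinatorial eigenvalue estimate. Following the Belovs--Spalek template for $k$-distinctness, I would assign adversary weights to types via suitable Hahn-polynomial values on the Johnson scheme and show (i) that the dominant eigenvalue gives $\|\Gamma\| = \Omega(1)$ and (ii) that for every coordinate $i$, $\|\Gamma \circ D_i\| = O\bigl(\sqrt{k}/n^{k/(k+1)}\bigr)$, by a careful accounting of the $O(k)$ type-incidences that a single coordinate change can simultaneously alter. Plugging these into the GAM master inequality then yields $\qqc(\ksum) = \Omega(n^{k/(k+1)}/\sqrt{k})$. The main obstacle is precisely the estimate (ii): a single-coordinate flip changes the intersection pattern of $\{i : x_i \neq y_i\}$ with $O(k)$ distinct partial planted tuples at once, and the operator-norm bound on $\Gamma \circ D_i$ requires balancing the chosen weights against this multiplicity. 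It is exactly this balancing — not the density argument of step one, nor the reduction of step two — that produces the $\sqrt{k}$ loss in the denominator of the final bound, and it is the step where the bulk of the work in~\cite{BSK13} is concentrated.
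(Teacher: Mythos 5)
The theorem you are asked to prove is not proved in this paper at all: it is imported verbatim from \cite{ABK16, BSK13} (the condition $|\Sigma|\geq 2\binom{n}{n}$ in the statement is a typo for $|\Sigma|\geq 2\binom{n}{k}$, which you implicitly and correctly repair in your first step). So there is no in-paper argument to compare against; what matters is whether your outline would actually reconstruct the Belovs--\v{S}palek proof. Your first step is right and is exactly the role the alphabet-size hypothesis plays: a union bound over the $\binom{n}{k}$ subsets shows a uniformly random input is a $0$-input with probability at least $1/2$, so the hard distribution can live on $\ksum^{-1}(0)$. Your framing via the negative-weight adversary is also the correct family of techniques.

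However, as a proof this has a genuine gap: the entire technical content of the theorem is concentrated in your ``third'' step, which you describe only as ``carry out the combinatorial eigenvalue estimate \ldots by a careful accounting,'' and you yourself flag it as the main obstacle. Nothing in the proposal actually constructs the adversary matrix or establishes either norm bound, and the route you gesture at does not match how \cite{BSK13} does it: they do not symmetrize an SDP into Johnson-scheme blocks weighted by Hahn polynomials, but instead build $\Gamma$ directly inside the Hamming association scheme on $\Sigma^n$ via their certificate-structure framework, using tensor factors $e_0$ (projection onto the uniform vector of $\mathbb{C}^{\Sigma}$) and $e_1 = I - e_0$, with weights that decay linearly in the number of ``loaded'' coordinates; the bound on $\|\Gamma\circ\Delta_i\|$ then comes from analyzing how the Hadamard product with $D_i$ shifts these tensor factors, not from tracking intersection patterns with planted $k$-subsets. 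Without that construction and the accompanying norm computation, the exponent $k/(k+1)$ --- which is the whole point of the theorem --- is asserted rather than derived. If your intent is only to justify citing the result, say so and cite it; if you intend a self-contained proof, the adversary matrix and both spectral estimates must be written out.
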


Next, we will define a refined version of $\ksum$ function that was defined in \cite{ABK16}. 
\begin{defi}
\label{defi: block-k-sum}
    Define $\blockksum: \zone^n \to \zone$ as follows: split the input into blocks of size $10k\log n $ each and call a block balanced if it has an equal number of $0$s and $1$s. Let the balanced blocks represent numbers in an alphabet $\Sigma$ of size $\Omega(n^k)$. The function $\blockksum$ evaluates to $1$ if and only if there are $k$ balanced blocks whose corresponding numbers sum to $0 (\mod \Sigma)$ and all other blocks have at least as many $1$s as $0$s.
\end{defi}
Next, we define the cheat sheet version of a function from~\cite{ABK16}.
\begin{defi}
\label{defi:cheatsheet}
We define the cheat sheet version of $f$ as follows: the
input to $f_{CS}$ consist of $\log n$ inputs to $f$ , each of size $n$ , followed by $n$ blocks of bits of size $C(f)\times \log n$ each.
Let us denote the input to $f_{CS}$ as $X=(x_1 , x_2 ,\dots , x_{\log n} , Y_1 , Y_2 , . . . , Y_n )$, where $x_i$ is an input to $f$ , and the $Y_i$ are the aforementioned cells of size $C(f)\times \log n$.
The first part  $x_1 , x_2 ,\dots , x_{\log n}$ of the string, we call the input section, and the rest of the part we call as certificate section of the whole input. Define $f_{CS}: \zone^{n \times \log n + n \times (C(f) \log n)} \to \zone$ to be
$1$ if and only if the following conditions hold:
\begin{itemize}
    \item
 For all $i, x_i$ is in the domain of $f$. If this condition is satisfied, let $l$ be the positive integer
corresponding to the binary string $(f (x_1 ), f (x_2 ), . . . , f (x_{\log n} ))$.
    \item $Y_l$ certifies that all $x_i$ are in the domain of $f$ and that $l$ equals the binary string formed by
their output values, $(f (x_1 ), f (x_2 ), . . . , f (x_{\log n} ))$.
\end{itemize}
\end{defi}

Finally, we present the pointer functions and their variants introduced in \cite{ABBL+17}. They are used to demonstrate the separation between several complexity measures like \emph{deterministic query complexity}, \emph{Randomized query complexity}, \emph{Quantum query complexity} etc. These functions were originally motivated from {}~\cite{GPW18} function.  The functions that we construct for many of our theorems a composition functions whose outer function is these pointer functions, or a slight variant of these. In the next section, we present the formal definition of pointer functions. 

\subsubsection{Pointer function}
\label{subsec: Pointer function and some of its variant }
For the sake of completeness first, we will describe the ``pointer function" introduced in \cite{ABBL+17} that achieves separation between several complexity measures like \emph{Deterministic query complexity}, \emph{Randomized query complexity}, \emph{Quantum query complexity} etc. This function was originally motivated by a function in \cite{GPW18}. There are three variants of the pointer function that have some special kind of non-Boolean domain, which we call \emph{pointer matrix}. Our function is a special ``encoding'' of that non-Boolean domain such that the resulting function becomes transitive and achieves the separation between complexity measures that match the known separation between the general functions. Here we will define only the first variant of the pointer function.

\begin{defi}[Pointer matrix over $\Sigma$]
\label{defi: pointer_matrix}
For $m, n \in \N$, let $M$ be a $(m\times n)$ matrix with $m$ rows and $n$ columns. We refer to each of the $m\times n$ entries of $M$ as \emph{cells}. Each cell of the matrix is from a alphabet set $\Sigma$ where $\Sigma = \zone \times \widetilde{P} \times \widetilde{P} \times \widetilde{P}$ and $\widetilde{P} = \{ (i, j) | i \in [m], j\in [n]\} \cup \{\perp\}$. We call $\widetilde{P}$ as set of pointers where, pointers of the form $\{(i,j)| i \in [m], j\in [n] \}$ pointing to the cell $(i, j)$ and $\perp$ is the null pointer. Hence,each entry $x_{(i,j)}$ of the matrix $M$ is a $4$-tuple from $\Sigma$. The elements of the $4$-tuple we refer as \emph{value}, \emph{left pointer}, \emph{right pointer} and \emph{back pointer} respectively and denote by $\vl(x_{(i,j)})$, $\lp(x_{(i,j)})$, $\rp(x_{(i,j)})$ and $\bp(x_{(i,j)})$ respectively where $\vl \in \{0, 1\}$, $\lp,\rp,\bp \in \widetilde{P}$. We call this type of matrix as \emph{pointer matrix} and denote by $\Sigma^{n \times n}$.

A special case of the pointer-matrix, which we call \emph{$\type_1$\footnote{In \cite{ABBL+17} other variations of pointer functions were defined and have been used to give various separation results, we also give transitive pointer functions for such separations. A detailed proof can be found in the previous version of the paper~\cite{CKP21}.} pointer matrix over $\Sigma$}, is when for each cell of $M$, $\bp \in \{[n] \cup \perp\}$ that is backpointers are pointing to the columns of the matrix.

\end{defi}

 Now we will define some additional properties of the domain that we need to define the pointer function.
\begin{defi}[Pointer matrix with marked column]
\label{defi: marked column ad special element}
Let $M$ be an $m\times n$ \emph{pointer-matrix} over $\Sigma$. A column $j \in [n]$ of $M$ is defined to be a \emph{marked column} if there exists exactly one cell $(i,j)$, $i \in [m]$, in that column with entry $x_{(i,j)}$ such that $x_{(i,j)} \neq (1, \perp, \perp, \perp)$ and every other $\cl$ in that column is of the form $(1, \perp, \perp, \perp)$. The $\cl$ $(i,j)$ is defined to be the \emph{special element} of the \emph{marked column} $j$.
\end{defi}

Let $n$ be a power of $2$. Let $T$ be a rooted, directed, and balanced binary tree with $ n$ leaves and $(n-1)$ internal vertices. We will use the following notations that will be used in defining some functions formally. 

\begin{notation}
\label{defi: balanced binary tree}
Let $n$ be a power of $2$.
Let $T$ be a rooted, directed, and balanced binary tree with $n$ leaves and $(n-1)$ internal vertices. 
Labels the edges of $T$ as follows: the outgoing edges from each node are labeled by either $left$ or $right$. The leaves of the tree are labeled by the elements of $[n]$ from left to right, with each label used exactly once. For each leaf $j\in[n]$ of the tree, the path from the $root$ to the leaf $j$ defines a sequence of $left$ and $right$ of length $O(\log n)$, which we denote by $T(j)$.

When $n$ is not a power of $2$, choose the largest $k \in \N$ such that $2^k \leq n$, consider a completely balanced tree with $2^k$ leaves, and add a pair of child nodes to each $n - 2^k$ leaves starting from the left. Define $T(j)$ as before.
\end{notation}

Now we are ready to describe the \emph{Variant 1} of the pointer function.

\begin{defi}[Variant 1,~\cite{ABBL+17}]
\label{defi: variant 1}
 Let $\Sigma^{m \times n}$ be a $\type_1$ \emph{pointer matrix} where $\bp$ is a pointer of the form $\{j| j\in[n]\}$ that points to other column and $\lp$, $\rp$ are as usual points to other cell.  
 Define $\abbl_{(m,n)}: \Sigma^{m \times n} \to \zone$ on a $\type_1$ \emph{pointer matrix} such that for all $x= (x_{i, j}) \in \Sigma^{m \times n}$, the function $\abbl_{(m,n)}(x_{i,j})$ evaluates to $1$ if and only if it has a \emph{$1$- cell certificate} of the following form:

\begin{enumerate}
   \item there exists exactly one \emph{marked column} $j^{\star}$ in $M$,
   \item There is a special cell, say $(i^{\star},j^{\star})$ which we call the special element in the the \emph{marked column} $j^{\star}$ and there is a balanced binary tree $T$ rooted at the special cell,
    \item for each non-marked column $j\in [n]\setminus \{j^{\star}\}$ there exist a \emph{cell} $l_j$ such that $\vl(l_j)=0$ and $\bp(l_j)=j^{\star}$ where $l_j$ is the end of the path that starts at the
\emph{special element} and follows the pointers $\lp$ and $\rp$ as specified by the sequence
$T(j)$. $l_j$ exists for all $j\in [n]\setminus \{j^{\star}\}$ i.e. no pointer on the path is $\perp$. We refer to $l_j$ as the leaves of the tree. 
\end{enumerate}
\end{defi}

\begin{figure}[!h]
    \centering
    \includegraphics[scale=0.15]{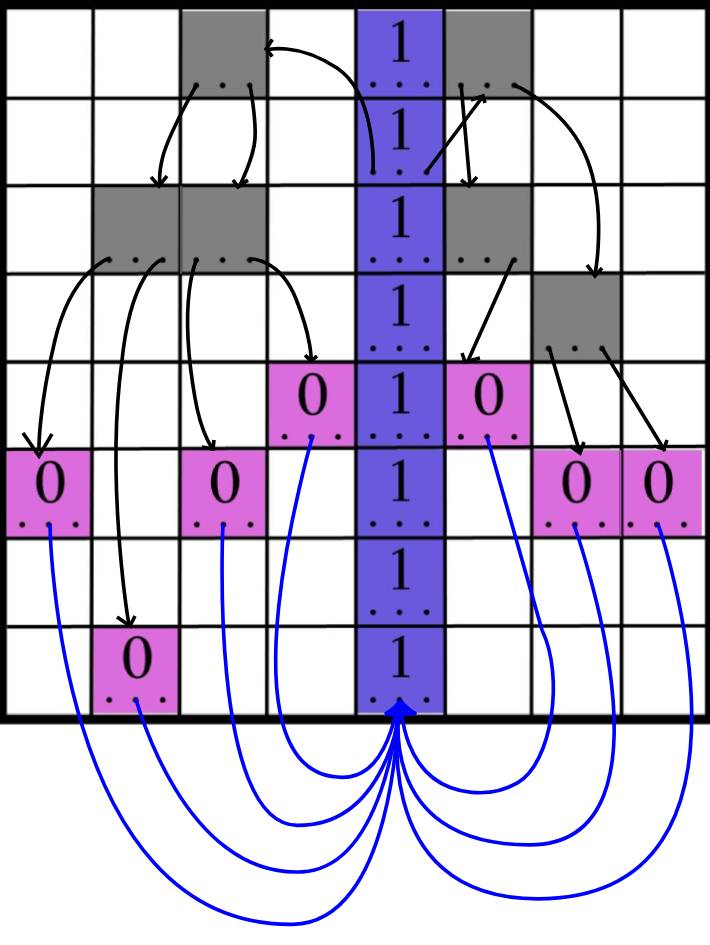}
    \caption{Example of $1$-instance of $\abbl$ function on $8 \times 8$ grid }
    \label{fig:pointer function}
\end{figure}

The above function achieves the separation between $\dqc$ vs. $\roc$ and $\dqc$ vs. $\qqc$ for $m=2n$. Here we will restate some of the results from~\cite{ABBL+17} which we will use to prove the results for our function:

\begin{theorem}[\cite{ABBL+17}]
\label{thm: Properties of variant 1}
The function $\abbl_{(m.n)}$ in Definition~\ref{defi: variant 1} satisfies  
\begin{align*}
  \dqc &= \Omega(n^2) \text{ for $m=2n$ where $m,n \in \N$}, \\
  \roc &= \widetilde{O}(m+n) \text{ for any $m,n \in \N$},\\
  \qqc &= \widetilde{O}(\sqrt{m}+\sqrt{n}) \text{ for any $m,n \in \N$}.
\end{align*}
\end{theorem}
 
Though~\cite{ABBL+17} gives the deterministic lower bound for the function $\abbl$ precisely for $2m \times m$ matrices following the same line of argument it can be proved that $\dqc( \Omega(n^2))$ holds for $n \times n$ matrices also.
For the sake of completeness, we give a proof for $n \times n$ matrices.

\begin{theorem}
\label{thm:D(abbl)}
 $\dqc(\abbl_{(n,n)}) = \Omega(n^2)$.
\end{theorem}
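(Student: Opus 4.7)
The plan is to prove the lower bound by a standard adversary argument that follows the template used in Ambainis et al.\ for the $2n\times n$ case, adapted to the square shape. The adversary I will employ uses a single ``null'' answer policy: on every query to a cell $(i,j)$ it returns the tuple $(1,\perp,\perp,\perp)$, which is simultaneously consistent with the cell being a padding cell of an unmarked column (supporting a $0$-output) and with the cell being a non-special cell of the marked column (supporting a $1$-output). I will argue that until $\Omega(n^2)$ queries have been made, both the all-padding input (a $0$-input) and some completion to a valid $1$-certificate remain consistent with the transcript, so no deterministic algorithm can commit to an answer.

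Fix a small constant $c>0$ and suppose the algorithm stops after at most $cn^2$ queries. First I will use averaging to locate a candidate marked column $j^\star$: since the total number of queried cells is at most $cn^2$, at least $(1-2c)n$ columns have at most $n/2$ queried cells, so I can pick such a column and designate an arbitrary unqueried cell of it as the special cell. Every non-special cell of $j^\star$ can be left as $(1,\perp,\perp,\perp)$, which is exactly what the adversary returned for queried cells in $j^\star$ and what I stipulate for the remaining unqueried ones. I then need to construct a balanced binary tree rooted at the special cell, with $n-1$ internal nodes and $n-1$ leaves, such that the internal nodes have $\lp,\rp$ pointers wiring the tree and each leaf satisfies $\vl=0$ and $\bp=j^\star$, exactly as required by Definition~\ref{defi: variant 1}.

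The main technical obstacle is a subtle well-definedness issue: the $1$-certificate requires that $j^\star$ be the \emph{only} marked column, but a column becomes marked as soon as it contains exactly one non-padding cell. A naive placement that scatters tree nodes one-per-column would therefore inadvertently create many additional marked columns and invalidate the certificate. To sidestep this, I will concentrate the $2(n-1)-1$ non-root tree cells into a constant number (two or three suffice) of auxiliary columns distinct from $j^\star$, each of which receives $\Omega(n)$ tree cells and so has many non-padding entries, hence is not marked. The same averaging that produced $j^\star$ guarantees that enough columns contain $\Omega(n)$ unqueried cells to serve as the auxiliary columns, and within these cells the internal-node pointers $\lp,\rp$ can be wired arbitrarily to realise the balanced binary tree with leaves carrying $(0,\cdot,\cdot,j^\star)$.

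Concluding the argument: if the algorithm's output is $0$, the $1$-completion just described is consistent with the transcript and evaluates $\abbl_{(n,n)}$ to $1$, contradicting the output; if the output is $1$, the all-padding completion (leaving every unqueried cell as $(1,\perp,\perp,\perp)$) is consistent with the transcript and has no marked column, hence evaluates to $0$, again a contradiction. The only routine verification left is that the explicit $1$-completion satisfies the three clauses of Definition~\ref{defi: variant 1}, which follows directly from the construction. This forces the algorithm to make more than $cn^2$ queries on some input, yielding $\dqc(\abbl_{(n,n)})=\Omega(n^2)$.
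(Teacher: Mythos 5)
Your proof has a genuine gap, and it traces to the choice of adversary.

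\textbf{The all-padding adversary is too weak.} Your adversary answers every query with $(1,\perp,\perp,\perp)$. Suppose the algorithm simply queries all $n$ cells of column~$1$ (only $n$ queries). It then sees a column that is entirely padding. Such a column is not a marked column, so under any consistent completion it is a \emph{non}-marked column $j\neq j^\star$ and must contain the leaf $l_j$ with $\vl(l_j)=0$ and $\bp(l_j)=j^\star$. But every cell of that column already has $\vl=1$. Hence no consistent $1$-input exists, and the algorithm can safely output~$0$ after only $n$ queries — far from $\Omega(n^2)$. In other words, the stubborn null policy surrenders the $1$-side of the game as soon as any single column is exhausted.

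\textbf{The ``concentrate the tree in a few auxiliary columns'' fix is not consistent with the structure of the certificate.} In Definition~\ref{defi: variant 1} (and in the original G\"o\"os--Pitassi--Watson / Ambainis et al.\ pointer functions that it reproduces), the leaf $l_j$ reached by following $T(j)$ must sit \emph{in column $j$}: each non-marked column must itself contain its back-pointing $\vl=0$ cell. The paper's own $1$-completion places one leaf per column for exactly this reason. Packing all $2(n-1)-1$ tree cells into two or three columns therefore fails clause~3 of the definition for every column $j$ that is neither $j^\star$ nor an auxiliary column — which is almost all columns. The spurious-marked-column worry that motivated your workaround does not actually arise in the correct construction: a column that receives its leaf $l_j$ can also be given a second non-padding cell among its unqueried entries, and a heavily-queried column already contains several non-padding cells by the adversary's answers, so no column with a leaf ends up with \emph{exactly one} non-padding cell.

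\textbf{What the paper does instead.} The paper's adversary distinguishes by depth: for the first $\lfloor n/2\rfloor$ queries in any column it answers $(1,\perp,\perp,\perp)$, and for later queries it answers $(0,\perp,\perp,n-k)$ where $k$ is the number of queried cells in that column. Consequently a heavily-queried column already contains cells with $\vl=0$ whose $\bp$ values sweep through the low column indices, so it can serve as a leaf column for any candidate marked column $s\le n/2$; and a lightly-queried column still has at least $n/2$ free cells to receive a leaf plus an extra non-padding cell. From there the averaging over the first $n/2$ columns and a budget of about $3n$ unqueried cells (for the marked column, the internal tree nodes, and the leaves) completes the argument. You will need to adopt this (or a similarly proactive) adversary; the always-padding policy cannot be repaired by the $1$-completion alone.
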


\textbf{Adversary Strategy for $\abbl_{(n,n)}$}:
We describe an adversary strategy that ensures that the value of the function is undetermined after $\Omega(n^2)$ queries. Assume that a deterministic query algorithm queries a cell $(i,j)$. Let $k$ be the number of queried cells in the column $j$. If $k\leq \frac{n}{2}$ adversary will return $(1,\perp,\perp,\perp)$. Otherwise adversary  will return $(0,\perp,\perp,n-k)$.

\begin{claim}
The value of the function $\abbl_{(n, n)}$ will be undetermined if there is a column with at most $n/2$ queried cells in the first $\frac{n}{2}$ columns $\{1, 2, \dots, \frac{n}{2}\}$ and at least $3n$ unqueried cells in total.
\end{claim}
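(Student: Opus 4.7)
The plan is to exhibit two extensions of the matrix that are both consistent with the adversary's transcript---one making $\abbl_{(n,n)} = 0$ and one making $\abbl_{(n,n)} = 1$---so no deterministic algorithm could have pinned down the value. Let $j^{\star} \in \{1,\ldots, n/2\}$ be the column promised by the hypothesis, with at most $n/2$ queried cells; by the adversary's rule, every queried cell in $j^{\star}$ was answered $(1,\perp,\perp,\perp)$, and $j^{\star}$ has at least $n/2$ unqueried cells. For the $0$-completion I would simply set every unqueried cell to $(0,\perp,\perp,\perp)$. A column with $k \leq n/2$ queries then ends up with $n - k \geq n/2$ non-$(1,\perp,\perp,\perp)$ cells, while a column with $k > n/2$ queries ends up with $(k - n/2) + (n-k) = n/2$ such cells; in both cases the count is at least $2$ (for $n \geq 4$), so no column is marked and the function evaluates to $0$.

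For the $1$-completion I would turn $j^{\star}$ into the unique marked column and graft onto it a balanced binary tree witnessing a $1$-certificate. Choose one unqueried cell $c^{\star} \in j^{\star}$ to be the special cell (which exists since $j^{\star}$ has at least $n/2$ unqueried cells), and set every remaining unqueried cell of $j^{\star}$ to $(1,\perp,\perp,\perp)$, making $c^{\star}$ the unique non-$(1,\perp,\perp,\perp)$ cell of $j^{\star}$. Next, build a balanced binary tree with $n-1$ leaves---one per non-marked column---rooted at $c^{\star}$, using $2n-3$ of the unqueried cells as tree nodes: each internal node receives $\lp,\rp$ pointers to its chosen children, and each leaf $l_j$ is assigned $\vl(l_j) = 0$ and $\bp(l_j) = j^{\star}$. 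Since the hypothesis supplies at least $3n$ unqueried cells in total and at most $n$ of them lie in column $j^{\star}$, there remain at least $2n \geq 2n-4$ unqueried cells outside $j^{\star}$ available as tree hosts. All remaining unqueried cells (neither in $j^{\star}$ nor used in the tree) I would fill with $(0,\perp,\perp,\perp)$.

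It then remains to verify that the $1$-completion is valid, i.e.\ that $j^{\star}$ is the \emph{only} marked column. For any $j \neq j^{\star}$, every non-$j^{\star}$ unqueried cell is either a tree node (whose tuple is not $(1,\perp,\perp,\perp)$, since a leaf has $\vl=0$ and an internal node has non-null pointers) or a $(0,\perp,\perp,\perp)$ filler; combined with the adversary's responses, the same counting as in the $0$-case yields at least $n/2 \geq 2$ non-$(1,\perp,\perp,\perp)$ cells in column $j$, so $j$ is not marked. Thus $j^{\star}$ is the unique marked column, the tree rooted at $c^{\star}$ is consistent by construction, and each leaf $l_j$ is a valid certificate leaf, so the $1$-completion yields function value $1$. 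Since both completions agree with every adversary response, the value is undetermined. The main obstacle in the argument is the second step: ensuring that placing a tree of size $\Theta(n)$ throughout the matrix does not accidentally create a second marked column; this is handled precisely by the uniform $(0,\perp,\perp,\perp)$ fill outside $j^{\star}$, which guarantees at least two non-$(1,\perp,\perp,\perp)$ cells in every other column regardless of where the tree nodes land.
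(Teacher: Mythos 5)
The $1$-completion has a gap. The definition of $\abbl_{(n,n)}$ requires the leaf $l_j$ (the cell reached from the special element by following the pointer path $T(j)$) to sit \emph{in column $j$} -- this is what ties each non-marked column to a distinguished cell in it; the paper's Observation~\ref{obs: UC(Fnn)} even says explicitly that the certificate consists of the marked column together with a ``rooted tree of pointers with leaves in every other column.'' Your construction grafts all $2n-3$ tree nodes, leaves included, onto arbitrary unqueried cells, so $l_j$ can land in whichever column happens to have room. In particular, if some column $j \neq j^{\star}$ has been fully queried, there is no unqueried cell of $j$ left to host $l_j$, and you have not argued that any of its already-queried cells can serve. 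The paper's proof handles exactly this: it places $l_j$ in column $j$, using an unqueried cell of column $j$ when one exists and otherwise the cell that the adversary already answered with $(0,\perp,\perp,n-k)$ -- the non-null back pointer in the adversary's reply is chosen precisely so that a fully-queried column already contains a cell of the form $(0,\perp,\perp,s)$. Your completion never uses that back pointer, which is a tell that the adversary strategy and your $1$-extension no longer fit together: if leaves could go anywhere, the adversary might as well have answered $(0,\perp,\perp,\perp)$. On the positive side, your explicit $0$-completion (fill with $(0,\perp,\perp,\perp)$ and count non-$(1,\perp,\perp,\perp)$ cells per column) and your careful check that the $1$-completion creates no second marked column are both more detailed than the paper's terse assertion that the adversary can always force a~$0$.
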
 

\begin{proof}
The adversary can always set the value of the function to $0$ if the conditions of the claim are satisfied.

\textbf{Adversary can also set the value of the function to $1$:} If $s\in[\frac{n}{2}]$ be the column with at most $\frac{n}{2}$ queried cell, then all the queried cells of the column are of the form $(1,\perp,\perp,\perp)$. Assign $(1,\perp,\perp,\perp)$ to the other cell and leave one cell for the $special$ $element$ $a_{p, s}$(say).

For each non-marked column $j \in [n]\setminus \{s\}$ define $l_j$ as follows: If column $j$ has one unqueried cell then assign $(0, \perp, \perp, s)$ to that cell. If all the cells of the column $j$ were already queried then the column contains a cell with $(0, \perp, \perp, s)$ by the adversary strategy. So, in either case, we can form a \emph{leave} $l_j$ in each of the non-marked columns.

Now using the cell of \emph{special element} $a_{p, s}$ construct a rooted tree of pointers isomorphic to tree $T$ as defined in Definition~\ref{defi: variant 1} such that the internal nodes we will use the other unqueried cells and assign pointers such that $l(j)$'s are the leaves of the tree and the \emph{special element} $a_{p,s}$ is the $root$ of the tree.
Finally, assign anything to the other cell. Now the function will evaluate to $1$.
 
 To carry out this construction we need at most $3n$ number of unqueried cells. Outside of the $marked$ $column$ total $n-2$ cells for the internal nodes of the tree, at most $n-1$ unqueried cell for the $leaves$ and the \emph{all $1$ unique marked column} contains total $n$ cell, so total $3n$ unqueried cell will be sufficient for our purpose.

Now there are a total $n$ number of columns and to ensure that each of the columns in $\{1, 2, \dots, \frac{n}{2} \}$ contains at least $\frac{n}{2}$ queried cell we need at least $\frac{n^2}{4}$ number of queries. Since $n^2-3n\geq \frac{n^2}{4}$ for all $n\geq 6$. Hence $D(\abbl_{(n,n)} )= \Omega(n^2)$. 
\end{proof}

Hence Theorem~\ref{thm:D(abbl)} follows.

Also~\cite{GPW18}'s function realizes quadratic separation between $\dqc$ and $\deg$ and the proof goes via $\UC_{min}$ upper bound. The function $\abbl_{(n,n)}$ exhibits the same properties corresponding to $\UC_{min}$. So, from the following observation, it follows that $\abbl_{(n,n)}$ also achieves quadratic separation between $\dqc$ and $\deg$.

\begin{observation}
 \label{obs: UC(Fnn)}
$\UC_{min}(\abbl_{(n,n)}) = O(n)$ which implies $\deg(\abbl_{(n,n)})$ is also $O(n)$ for any $n\in \N$.
\end{observation}

Another important observation that we need is the following:
 
\begin{observation}[\cite{ABBL+17}]
\label{thm: invariance of variant 1}
For any input $\Sigma^{n\times n}$ to the function $\abbl_{(n,n)}$ (in Definition~\ref{defi: variant 1}) if we permute the 
rows of the matrix using a permutation $\sigma_r$ and permute the columns of the matrix using a permutation $\sigma_c$ and we update the pointers in each of the cells of the matrix accordingly then the function value does not change. 
\end{observation}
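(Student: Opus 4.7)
The plan is to reduce the invariance statement to the observation that the definition of $\abbl_{(n,n)}$ is expressed purely in terms of the abstract combinatorial structure of the input: the contents of cells (values and pointer data), the identity of the marked column, and a binary tree of pointers rooted at the special element with leaves placed in each non-marked column (Definition~\ref{defi: variant 1}). Any such structure on $M$ transports to an isomorphic structure on the permuted matrix $M' = (\sigma_r,\sigma_c)(M)$ by carrying out exactly the same relabelling $(i,j) \mapsto (\sigma_r(i),\sigma_c(j))$ on every cell location that appears in the data.

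I would argue one direction, namely that $\abbl_{(n,n)}(M)=1$ implies $\abbl_{(n,n)}(M')=1$; the converse follows by applying the inverse permutation to $M'$ with the analogous pointer updates, which recovers $M$. So suppose $M$ has a valid $1$-cell certificate with marked column $j^\star$, special element $(i^\star, j^\star)$, and tree $T$ whose leaves $\{l_j : j \neq j^\star\}$ witness the back-pointer condition. Take as candidate certificate for $M'$ the marked column $\sigma_c(j^\star)$, the special element at position $(\sigma_r(i^\star), \sigma_c(j^\star))$, and the image of $T$ under the cell relabelling. Three conditions must be verified: (i) the non-special cells of the new marked column are still $(1, \perp, \perp, \perp)$, which holds because these cells carry no cell/column coordinate data to update and were already of this form in $M$; (ii) following the updated left/right pointers in $M'$ from the new special element reproduces the image of $T$, which holds because both the pointer targets and the cell positions are moved under the same permutation, so the traversal specified by each sequence $T(j)$ commutes with the relabelling; (iii) the back pointer of each leaf in $M'$ points to column $\sigma_c(j^\star)$, which holds because each back pointer was $j^\star$ in $M$ and the update rule on $\type_1$ back pointers sends $b \mapsto \sigma_c(b)$.

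The main obstacle is purely bookkeeping: keeping straight the three flavours of pointer (left and right, which are cell pointers, versus back, which in a $\type_1$ matrix is a column pointer) and their distinct update rules, and checking that the prescribed left/right sequence $T(j)$ from the special element produces the correctly relocated leaf after relabelling. There is no combinatorial depth beyond this; every clause of Definition~\ref{defi: variant 1} transfers term by term once the correspondence between original cells and their images is fixed, which is the reason the claim is posed as an observation rather than a theorem.
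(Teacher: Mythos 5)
The paper does not prove Observation~\ref{thm: invariance of variant 1}; it simply cites~\cite{ABBL+17} and later uses the observation as a black box in the proof of Theorem~\ref{thm:transitivefn}. So there is no proof in the paper to compare against, and your attempt has to be judged on its own. Your high-level plan --- the defining conditions of $\abbl_{(n,n)}$ are ``combinatorial'' (marked column, special element, a tree of pointers with leaves of a prescribed form), so they transport along the bijection $(i,j)\mapsto(\sigma_r(i),\sigma_c(j))$ once all three flavours of pointer are updated consistently --- is the right idea, and items~(i) and~(iii) of your verification are fine.

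There is, however, a genuine gap hiding in item~(ii), and it is exactly the piece you dismiss as ``bookkeeping.'' Under a literal reading of Definition~\ref{defi: variant 1} together with Notation~\ref{defi: balanced binary tree}, the tree $T$ is a \emph{fixed} abstract balanced binary tree whose leaves carry the fixed left-to-right labelling by $[n]$, and the certificate condition is that for each non-marked column $j\in[n]\setminus\{b\}$ the specific path $T(j)$ from the special element terminates at a valid leaf cell. Your transport argument correctly shows that, for any fixed sequence $T(j)$, traversing it in $M'$ from $(\sigma_r(i^\star),\sigma_c(j^\star))$ lands on the image under $(\sigma_r,\sigma_c)$ of where $T(j)$ lands in $M$. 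But this only certifies the paths $T(j)$ for $j\in[n]\setminus\{b\}$, whereas what $M'$ needs is certification of the paths $T(j')$ for $j'\in[n]\setminus\{\sigma_c(b)\}$. When $\sigma_c(b)\neq b$ these two index sets differ in one element: in $M'$ you must certify the path $T(b)$, about which the certificate for $M$ says nothing (and which may very well hit a $\perp$ pointer). So the argument as written does not close. The statement is rescued only if one reads the definition so that the assignment of tree leaves to columns is itself part of the existentially quantified certificate --- i.e.\ the certificate includes a bijection between leaves of $T$ and non-marked columns, which is then re-assigned by $\sigma_c$ when you transport --- and your phrase ``the image of $T$ under the cell relabelling'' is implicitly assuming exactly this. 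You should make that assumption explicit and explain why it is consistent with Definition~\ref{defi: variant 1}, because under the fixed-labelling reading the invariance claim is false, not merely unproven.
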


\subsection{Some useful notations}
We use $[n]$ to denote the set $\{1,\dots, n\}$. $\{0,1\}^n$ denotes the set of all $n$-bit binary strings. For any $X\in \{0,1\}^n$ the Hamming Weight of $X$ (denoted $|X|$) will refer to the number of $1$ in $X$. $0^n$ and $1^n$ denotes all $0$'s string of $n$-bit and all $1$'s string of $n$-bit, respectively.

We denote by $\sym_n$ the set of all permutations on $[n]$. Given an element $\sigma \in \sym_n$ and a $n$-bit string $x_1,\dots,x_n \in \{0,1\}^n$ we denote by $\sigma[x_1, \dots, x_n]$ the string obtained by permuting the indices according to $\sigma$. That is $\sigma[x_1, \dots, x_n] = x_{\sigma^(1)}, \dots, x_{\sigma^(n)}$. This is also called the action of $\sigma$ on the $x_1, \dots, x_n$. 

Following are a couple of interesting elements of $\sym_n$ that will be used in this paper. 

\begin{defi}
\label{defi: flip swap}
For any $n=2k$ the $\flip$ swaps $(2i-1)$ and $2i$ for all $1\leq i\leq k$.
The permutation $\swap$ swaps $i$ with $(k+i)$, for all $1\leq i \leq k$.
That is,
\begin{align*}
    \flip = (1,2)(3,4)\dots (n-1, n) \hspace{2em} \& \hspace{2em}  \swap[x_1, \dots, x_{2k}] = x_{k+1},\dots, x_{2k},x_1 \dots, x_k. 
\end{align*}
\end{defi}

Every integer $\ell \in [n]$ has the canonical $\log n$ bit string representation. However, the number of $1$'s and $0$'s in such a representation is not the same for all $\ell \in [n]$. The following representation of $\ell \in [n]$ ensures that for all $\ell \in [n]$ the encoding has the same Hamming weight.

\begin{defi}[Balanced binary representation]
\label{defi: bb(i)}
For any $\ell \in [n]$, let $\ell_1, \dots, \ell_{\log n}$ be the binary representation of the number $\ell$ where $\ell_i \in \zone$ for all $i$. Replacing $1$ by $10$ and $0$ by $01$ in the binary representation of $\ell$, we get a $2\log n$-bit unique representation, which we call \emph{Balanced binary representation} of $\ell$ and denote as $bb(\ell)$.
\end{defi}

In this paper, all the functions considered are of the form $F:\zone^n \to \zone^k$.  By Boolean functions, we would mean a Boolean valued function that is of the form $f: \zone^n \to \zone$. 

An input to a function $F:\zone^n \to \zone^k$  is a $n$-bit string but also the input can be thought of as different objects.  For example, if the $n= NM$ then the input may be thought of as a $(N\times M)$-matrix 
with Boolean values. It may also be thought of as a $(M\times N)$-matrix.

If $\Sigma= \{0,1\}^k$ then $\Sigma^{(n\times m)}$ denotes an 
$(n\times m)$-matrix with an element of $\Sigma$ (that is, a $k$-bit string)
stored in each cell of the matrix. Note that 
  $\Sigma^{(n\times m)}$ is actually $\{0,1\}^{mnk}$. Thus, a function
$F:\Sigma^{(n\times m)}\to \{0,1\}$ is actually a Boolean function from
a $\{0,1\}^{nmk}$ to $\zone$, where we think of the input as an
$(n\times m)$-matrix over the alphabet $\Sigma$.

One particular nomenclature that we use in this paper is that of $1$-cell certificate.  
\begin{defi}[$1$-cell certificate]
\label{defi: cell certificate} Given a function $f:\Sigma^{(n\times m)} \to \{0,1\}$ (where $\Sigma= \{0,1\}^k$) the $1$-cell certificate is a partial assignment to the cells which forces the value of the function to $1$. So a $1$-cell certificate is of the form $(\Sigma \cup \{*\})^{(n\times m)}$. Note that here we assume that the contents in any cell are either empty or a proper element of $\Sigma$ (and not a partial $k$-bit string). 
\end{defi}

Another notation that is often used is the following: 

\begin{notation}
If $A \leq \sym_n$ and $B \leq \sym_m$ are groups on $[n]$ and $[m]$ then the group $A \times B$ acts on the cells on the matrix. Thus for any $(\sigma, \sigma')\in A\times B$ and a $M\in \Sigma^{(n\times m)}$ by $(\sigma, \sigma')[M]$
we would mean the permutation on the cell of $M$ according to  $(\sigma, \sigma')$ and move the contains in the cells accordingly. Note that the relative position of bits within the contents in each cell is not touched.
\end{notation}



\subsection{Transitive groups and transitive functions}

The central objects in this paper are transitive Boolean functions. We first define transitive groups. 

\begin{defi}
A group $G\leq \sym_n$ is transitive if for all $i, j\in [n]$
there exists a $\sigma\in G$ such that $\sigma(i) = j$.
\end{defi}

\begin{defi}
For $f:A^n \to \{0,1\}$ and $G\leq \sym_n$ we say $f$ is invariant under the action of $G$, if for all $\alpha_1, \dots, \alpha_n\in A$.  $$f(\alpha_1, \dots, \alpha_n) = f(\alpha_{\sigma(1)}, \dots, \alpha_{\sigma(n)}).$$
\end{defi}

The following observation proves that the composition of transitive functions is also a transitive function.

\begin{observation}
\label{obs: composition of transitive symmetric}
Let $f: \zone^n \to \zone$ and $g: \zone^m \to \zone$ be transitive functions. Then $f \circ g: \zone^{nm} \to \zone$ is also transitive.
\end{observation}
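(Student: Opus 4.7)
The plan is to exhibit an explicit transitive group acting on the $nm$ input coordinates of $f \circ g$ and verify that $f \circ g$ is invariant under that group. Let $A \leq \sym_n$ be a transitive group under which $f$ is invariant and $B \leq \sym_m$ a transitive group under which $g$ is invariant. I think of an input to $f \circ g$ as a tuple $(x_1, \dots, x_n)$ with $x_i \in \zone^m$, or equivalently as a function $M : [n] \times [m] \to \zone$ where $M(i,j)$ is the $j$-th bit of $x_i$. The group I will use is essentially the wreath product $B \wr A$, acting on the coordinate set $[n] \times [m]$.

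First I would define the group $G$ whose elements are tuples $(\sigma; \tau_1, \dots, \tau_n)$ with $\sigma \in A$ and each $\tau_i \in B$; composition is the usual wreath product composition. I declare that $(\sigma; \tau_1, \dots, \tau_n)$ sends the coordinate $(i,j)$ to $(\sigma(i), \tau_{\sigma(i)}(j))$. Equivalently, on an input $M$ the action is $\left[(\sigma;\tau_1,\dots,\tau_n) \cdot M\right](i,j) = M\!\left(\sigma^{-1}(i),\, \tau_i^{-1}(j)\right)$. This is a genuine group action on $[n]\times[m]$, and hence on $\zone^{n\times m} \cong \zone^{nm}$.

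Next I would verify that $G$ is transitive on $[n]\times[m]$. Given $(i_1,j_1)$ and $(i_2,j_2)$, pick $\sigma \in A$ with $\sigma(i_1) = i_2$ (exists by transitivity of $A$) and $\tau \in B$ with $\tau(j_1) = j_2$ (exists by transitivity of $B$). Let $\tau_{i_2} = \tau$ and $\tau_i = \mathrm{id}$ for all $i \neq i_2$. Then $(\sigma;\tau_1,\dots,\tau_n)$ sends $(i_1,j_1)$ to $(\sigma(i_1), \tau_{\sigma(i_1)}(j_1)) = (i_2,\tau(j_1)) = (i_2,j_2)$, as required.

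Finally I would show $f \circ g$ is invariant under $G$. It suffices to check invariance under a generating set: the ``column'' elements $(\mathrm{id};\tau_1,\dots,\tau_n)$ and the ``row'' elements $(\sigma;\mathrm{id},\dots,\mathrm{id})$. For a column element, the $i$-th block becomes the string whose $j$-th bit is $x_i(\tau_i^{-1}(j))$, i.e.\ a $B$-permutation of $x_i$; invariance of $g$ under $B$ gives $g$ of this block equal to $g(x_i)$, so $f \circ g$ is unchanged. For a row element, the blocks get permuted by $\sigma$, so $f \circ g$ becomes $f(g(x_{\sigma^{-1}(1)}),\dots,g(x_{\sigma^{-1}(n)}))$, which equals $f(g(x_1),\dots,g(x_n))$ by invariance of $f$ under $A$. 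Since $G$ is transitive and $f\circ g$ is $G$-invariant, $f\circ g$ is a transitive function. There is no real obstacle here; the only point demanding care is writing down the wreath-product action so that ``permute rows then permute columns within each row'' composes correctly.
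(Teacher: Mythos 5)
Your proposal is correct and follows essentially the same route as the paper: both use the wreath product of the two transitivity groups, generated by block permutations from the outer group and per-block permutations from the inner group, and observe that $f\circ g$ is invariant under this transitive action. You simply spell out the transitivity of the action on $[n]\times[m]$ and the invariance check on generators more explicitly than the paper does, which is fine.
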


\begin{proof}
Let $T_f \subseteq \sym_n$ and $T_g \subseteq \sym_m$ be the transitive groups corresponding to $f$ and $g$, respectively. On input $x = (X_1, \dots, X_{n})$, $X_i \in \zone^m$ for $i \in [n]$, the function $f \circ g$ is invariant under the action of the group $T_f\wr T_g$ - the wreath product of the $T_f$ with $T_g$. The group $T_f\wr T_g$ acts on the input string through the following permutations:
\begin{enumerate}
    \item any permutation $\pi \in T_f$ acting on indices $\{1, \dots, n\}$ or
    \item any permutations $(\sigma_1, \dots, \sigma_n) \in (T_g)^{n}$ acting on $X_1, \dots, X_n$ i.e. $(\sigma_1, \dots, \sigma_n)$ sends $X_1, \dots, X_n$  to $\sigma_1(X_1), \dots, \sigma_n(X_n)$.
\end{enumerate}
\end{proof}

\begin{observation}
If $A \leq \sym_n$ and $B \leq \sym_m$ are transitive groups on $[n]$ and $[m]$ then the group $A \times B$ is a transitive group acting on the cells on the matrix. 
\end{observation}

There are many interesting transitive groups. The symmetric group is indeed transitive. The graph isomorphism group (that acts on the adjacency matrix - minus the diagonal -  of a graph by changing the ordering on the vertices) is transitive. The cyclic permutation over all the points in the set is a transitive group. The following is another non-trivial transitive group on $[k]$ that we will use extensively in this paper.

\begin{defi}\label{def:transitivegroup}
For any $k$ that is a power of $2$, the Binary-tree-transitive group $\bt_k$ is a subgroup of $\sym_{k}$. To describe its generating set we think of group $\bt_k$ acting on the elements $\{1, \dots, k\}$ and the elements are placed in the leaves of a balanced binary tree of depth $\log k$ - one element in each leaf. Each internal node (including the root) corresponds to an element in the generating set of $\bt_k$. The element corresponding to an internal node in the binary tree swaps the left and right sub-tree of the node.  
The permutation element corresponding to the root node is called the Root-swap as it swaps the left and right sub-tree to the root of the binary tree. 
\end{defi}


\begin{figure}[!h]
    \centering
\begin{tikzpicture}[->,>=stealth',level/.style={sibling distance = 5cm/#1,
  level distance = 1.5cm}] 
\node [arn_r] {n1}
    child{ node [arn_r] {n3} 
            child{ node [arn_x] {1} 
            }
            child{ node [arn_x] {2}
            }                            
    }
    child{ node [arn_r] {n2}
            child{ node [arn_x] {3} 
            }
            child{ node [arn_x] {4}
            }
		}
; 
\end{tikzpicture}
 \caption{Induced group actions for $\bt_4$ group }
    \label{fig:Bt_4 group}
\end{figure}

\begin{claim}
\label{obs: btk is a transitive group}
The group $\bt_k$ is a transitive group. 
\end{claim}
\begin{proof}
For any $i, j \in [k]$, we have to show that there exists a permutation $\pi \in \bt_k$ such that $\pi(i)= j$. Let us form a complete binary tree of height $\log k$ in the following way:
\begin{itemize}
    \item (Base case:) Start from the root node, and label the left and right child as $0$ and $1$ respectively. 
    \item For every node $x$, label the left and right child as $x0$ and $x1$ respectively.
\end{itemize}
Note that our complete binary tree has $k$ leaves, where each of the leaves is labeled by a binary string of the form $x_1x_2\dots x_{\log k}$, which is the binary representation of numbers in $[k]$. Similarly, any node in the tree can be labeled by a binary string $x_1 x_2 \dots x_t$, where $0\leq t\leq \log k$ and $t$ is the distance of the node from the root.

Now for any $i, j \in [k]$, let the binary representation of  $i$ be $(x_1x_2\dots x_{\log k})$ and that of $j$ be $(y_1y_2\dots y_{\log k})$. 
Now we will construct the permutation $\pi \in \bt_k$ such that $\pi(i)= j$. Without loss of generality, we can assume $i \neq j$.

Find the least positive integer $\ell\in [\log k]$ such that $x_{\ell} \neq y_{\ell}$, then go to the node labeled $x_1x_2\dots x_{\ell-1}$ and swap it's left and right child.  Let $\pi_{x_1\dots x_{\ell-1}}\in S_{k}$ be the corresponding permutation of the leaves of the tree, in other words on the set $[k]$. Note that, by definition, the permutation $\pi_{x_1\dots x_{\ell-1}}\in S_{k}$ is in $\bt_k$.
Also note that the permutation $\pi_{x_1\dots x_{\ell-1}}$ acts of the set $[k]$ as follows: 
\begin{itemize}
\item $\pi_{x_1\dots x_{\ell-1}} (z_1 \dots z_{\log k}) = z_1\dots z_{\log k}$ if $z_1\dots z_{\ell-1} \neq x_1\dots x_{\ell-1}$
\item $\pi_{x_1\dots x_{\ell-1}} (x_1 \dots x_{\ell-1}0z_{\ell+1}\dots z_{\log k}) = (x_1 \dots x_{\ell-1}1z_{\ell+1}\dots z_{\log k})$
\item $\pi_{x_1\dots x_{\ell-1}} (x_1 \dots x_{\ell-1}1z_{\ell+1}\dots z_{\log k}) = (x_1 \dots x_{\ell-1}0z_{\ell+1}\dots z_{\log k})$
\end{itemize}

Since $i = x_1\dots x_{\ell-1}x_{\ell}\dots x_{\log k}$ and $j = y_1\dots y_{\ell-1}y_{\ell}\dots y_{\log k}$ with $x_1 \dots x_{\ell-1} = y_1\dots y_{\ell-1}$ and $x_{\ell} \neq y_{\ell}$, so 
$$\pi_{x_1\dots x_{\ell-1}} (i) = y_1\dots y_{\ell-1}y_{\ell}x_{\ell+1}\dots y_{\log k}$$

So the binary representation of $\pi_{x_1\dots x_{\ell-1}} (i)$ and $j$ matching in the first $\ell$ positions which is one more than the number of positions where the binary representation of $i$ and $j$ matched. By doing this trick repeatedly, that is by applying different permutations from $\bt_k$ one after another we can map $i$ to $j$.   
\end{proof}

\begin{note}
     Note that the group $\bt_k$ is the same as an iterated wreath product $\sym_2 \wr \dots \wr \sym_2$ of depth $\log k$.

\end{note}

The following claim describes how the group $\bt_k$ acts on various encodings of integers.
Recall the balance-binary representation (Definition~\ref{defi: bb(i)}).

\begin{claim}\label{obs:bb(i)}
For all $\hat\gamma\in \bt_{2\log n}$ there is a ${\gamma}\in \sym_n$ such that for all $i, j\in [n]$, $\hat\gamma[bb(i)] = bb(j)$ iff $\gamma(i) = j$ where $2 \log n$ is a power of $2$.
\end{claim}

\begin{proof}
Recall the group $\bt_{2\log n}$: assuming that the elements of $[2\log n]$ are placed on the leaves of the binary tree of depth $\log(2\log n)$, the group $\bt_{2\log n}$
is generated by the permutations of the form ``pick a node in the binary tree of and swap the left and right sub-tree of the node''.  So it is enough to prove that for any elementary permutation $\hat\gamma$ of the form ``pick a node in the binary tree and swap the left and right sub-tree of the node'' there is a ${\gamma}\in \sym_n$ such that for all $i, j\in [n]$, $\hat\gamma[bb(i)] = bb(j)$ iff $\gamma(i) = j$.

Any node in the binary tree of depth $\log(2\log n)$ can be labeled by a $0/1$-string of length $t$, where $0\leq t\leq \log(2\log n)$ is the distance of the node from the root. We split the proof of the claim into two cases depending on the value of the $t$ - the distance from the root. 

\paragraph{If $t= \log(2\log n)$:} This is the case when the node is at the last level - just above the leaf level.  Let the node be $u$ and let $s$ be the number whose binary representation is the label of the node $u$. Let the numbers in the leaves of the tree correspond to the $bb(i)$ - the balanced binary representation of $i \in [n]$. Note that because of the balanced binary representation, the children of $u$ are
\begin{itemize}
    \item $0$ (left-child) and $1$ (right-child)  if the $s$-th bit in the binary representation of $i$ is $0$
    \item $1$ (left-child) and $0$ (right-child)  if the $s$-th bit in the binary representation of $i$ is $1$
\end{itemize}
So the permutation (corresponding to swapping the left and right sub-trees of $u$) only changes the order of $0$ and $1$ - which corresponds to flipping the $s$-th bit of the binary representation of $i$. And so in this case the $\gamma$ acting on the set $[n]$ is just collection transpositions swapping $i$ and $j$ iff the the binary representation of $i$ and $j$ are the same except for the $s$-th bit. 

So in this case for all $i, j\in [n]$, $\hat\gamma[bb(i)] = bb(j)$ iff $\gamma(i) = j$.

\paragraph{If $t < \log(2\log n)$:} Let the node be $v$. Note that in this case since the node keeps the order of the $2r-1$ and $2r$ bits unchanged (for any $1\leq r \leq \log n$), it is enough we can visualize the action by an action of swapping the left and right sub-trees of the node $v$ on the binary representation of $i$ (instead of the balance binary representation of $i$). And so we can see that the action of the permutation (corresponding to swapping the left and right sub-trees of $v$) automatically gives a permutation of the binary representations of numbers between $1$ and $n$, as was discussed in the proof of Claim~\ref{obs: btk is a transitive group}.  And hence we have for all $i, j\in [n]$, $\hat\gamma[bb(i)] = bb(j)$ iff $\gamma(i) = j$.

\end{proof}

Now let us consider another encoding that we will be using for the set of rows and columns of a matrix. 

\begin{defi}\label{def:bpencoding}
Given a set $R$ of $n$ rows $r_1, \dots, r_n$ and a set $C$ of $n$ columns $c_1, \dots, c_n$ we define the \textit{balanced-pointer-encoding function} 
$\mathcal{E}: (R\times \{0\})\cup (\{0\}\times C)  \to \{0,1\}^{4\log n},$
as follows: 
\begin{equation*}
\mathcal{E}(r_i, 0)  =  bb(i)\cdot 0^{2\log n}, \mbox{ and, }  \mathcal{E}(0, c_j)  =  0^{2\log n}\cdot bb(j).  
\end{equation*}
\end{defi}

Note that Claim~\ref{cl:trick} directly follows from Claim~\ref{obs:bb(i)}.

\begin{claim}\label{cl:trick}
Let $R$ be a set of $n$ rows $r_1, \dots, r_n$ and $C$ be a set of $n$ columns $c_1, \dots, c_n$ and consider the \textit{balanced-pointer-encoding function} $\mathcal{E}: (R\times \{0\})\cup (\{0\}\times C) \to \{0,1\}^{4\log n}$.
For any elementary permutation $\hat{\sigma}$ in $\bt_{4\log n}$ (other than the Root-swap) there is a $\sigma\in \sym_n$ such that for any $(r_i, c_j) \in (R\times \{0\})\cup (\{0\}\times C)$
$$\hat{\sigma}[\mathcal{E}(r_i, c_j)] = \mathcal{E}(r_{\sigma(i)}, c_{\sigma(j)}),$$ where we assume $r_0 = c_0 = 0$ and any permutation of in $\sym_n$ sends $0$ to $0$.

If $\hat{\sigma}$ is the root-swap then for any $(r_i, c_j) \in (R\times \{0\})\cup (\{0\}\times C)$ $$\hat{\sigma}[\mathcal{E}(r_i, c_j)] = \swap(\mathcal{E}(r_i, c_j)) = \mathcal{E}(c_j, r_i).$$
\end{claim}

\section{High-level description of our techniques}
\label{sec: our technique}
\textit{Pointer functions} are defined over a special domain called \emph{pointer matrix}, which is a $m\times n$ grid matrix. Each cell of the matrix contains some labels and some pointers that point either to some other cell or to a row or column \footnote{We naturally think of a pointer pointing to a cell as two pointers - one pointing to the row and the other to the column.}.
 For more details, refer to Appendix~\ref{subsec: Pointer function and some of its variant }.  
As described in \cite{GPW18}, the high-level idea of pointer functions is the usage of pointers to make certificates unambiguous without increasing the input size significantly. This technique turns out to be very useful in giving separations between various complexity measures as we see in \cite{MS}, \cite{GPJW18}, and \cite{ABBL+17}. Note that \emph{Pointer function} contains non-Boolean inputs alphabets. Consequently, our transitive function also involves non-Boolean input alphabets. 

Now we want to produce a new function that possesses all the properties of pointer functions, along with the additional property of being transitive. To do so, first, we will encode the labels so that we can permute the bits (by a suitable transitive group) while keeping the structure of unambiguous certificates intact so that the function value remains invariant. One such natural technique would be to encode
the contents of each cell in such a way that allows us to permute the bits of the contents of each cell using a transitive group and permute the cells among each other using another transitive group, and doing all of these while ensuring the unambiguous certificates remains intact
\footnote{Here, we use the word ``encode" since we can view the function defined only over codewords, and when the input is not a codeword, then it evaluates to $0$. In our setting, since we are trying to preserve the one-certificates, the codewords are those strings where the unambiguous certificate is encoded correctly. At the same time, we must point out that the encoding of an unambiguous certificate is not necessarily unique.}.
This approach has a significant challenge: namely how to encode the pointers.

The information stored in each cell (other than the pointers) can be encoded using fixed logarithmic length strings of different Hamming weights so that even if the strings are permuted and/or the bits in each string are permuted, the content can be ``decoded". Unfortunately, this can only be done when the cell's contents have a constant amount of information - which is the case for pointer functions (except for the pointers). Since the pointers in the cell are strings of size $O(\log n)$ (as they are pointers to other columns or rows) if we want to use the similar Hamming weight trick, the size of the encoding string would need to be polynomial in $O(n)$. That would increase the size of the input compared to the unambiguous certificate. This would not give us tight separation results.

Also, there are three more issues concerning the encodings of pointers: 
\begin{itemize}
    \item As we permute the cells of the matrix according to some transitive group, the pointers within each cell need to be appropriately changed.  In other words, when we move some cell's content to some other cell, the pointers pointing to the previous cell should point to the current cell now.
     
    \item If a pointer is encoded using a certain $t$-bit string, different permutations of bits of the encoded pointer can only generate a subset of all $t$-bit strings. 
    
    {\em For example: if we encode a pointer using a string of Hamming weight 10 then however we permute the bits of the string, the pointer can at most be modified to point to cells (or rows or columns) the encoding of whose pointers also have Hamming weight 10. (The main issue is that permuting the bits of a string cannot change the Hamming weight of a string).}
    
    The encoding of all the pointers should have the same Hamming weight.

    \item The encoding of the pointers has to be transitive. That is, we should be able to permute the bits of the encodings of the pointer using a transitive group in such a way that either the pointer value does not change or as soon as the pointer values change, the cells get permuted accordingly - kind of like an ``entanglement". 
    
\end{itemize}

The above three problems are somewhat connected. Our first innovative idea is to use \emph{binary balance representation} (Definition~\ref{defi: bb(i)}) to represent the pointers. This way, we take care of the second issue. For the first and third issues, we define the transitive group - both the group acting on the contents of the cells (and hence on the encoding of the pointers) and the group acting on the cells itself - in an ``entangled" manner. For this, we induce a group action acting on the nodes of a \emph{balanced binary tree} and generate a transitive subgroup in $S_{n}$ and $S_{2 \log n}$ with the same action which will serve our purpose (Definition~\ref{def:transitivegroup}, Claim~\ref{obs:bb(i)}).  
This helps us to permute the rows (or columns) using a permutation while updating the encoding of the pointers accordingly.

By Claim~\ref{obs:bb(i)}, for every allowed permutation $\sigma$ acting on the rows (or columns), there is a unique $\hat{\sigma}$ acting on the encodings of the pointers in each of the cells such that the pointers are updated according to $\sigma$. This still has a delicate problem. Namely, each pointer is either pointing to a row or column, but the permutation $\hat{\sigma}$ has no way to understand whether the encoding on which it is being applied points to a row or column. To tackle this problem, we think of the set of rows and columns as a single set. All of them are encoded by a string of size (say) $2t$, where for the rows, the second half of the encoding is all $0$ while the columns have the first $t$ bits all $0$. This is the encoding described in Definition~\ref{def:bpencoding} using binary balanced representation.
However, this adds another delicate issue about permuting between the first $t$ bits of the encoding and the second $t$ bits.

To tackle this problem, we modify the original function appropriately. We define a slightly modified version of existing pointer functions called $\mabbl$.  
This finally helps us obtain our ``transitive pointer function," which has almost the same complexities as the original pointer function. 

We have so far only described the high-level technique to make the 1st variation of pointer functions (Definition~\ref{defi: variant 1}) transitive where there is the same number of rows and columns. The further variations need a more delicate handling of the encoding and the transitive groups - though the central idea is similar.

\section{Separations between deterministic query complexity and some other complexity measures}
\label{sec: variant 1}

\subsection{Transitive pointer function $F_{\ref{thm:R_0vsD}}$ for Theorem~\ref{thm:R_0vsD}}
\label{subsec: F1.1}
Our function $F_{\ref{thm:R_0vsD}}:\Gamma^{n \times n} \to \{0,1\}$ is a composition of two functions - an outer function $\mabbl_{(n,n)}: \bar\Sigma^{n\times n} \to \{0,1\}$ and an inner function $\De : \Gamma \to \bar\Sigma$. We will set $\Gamma$ to be  $\{0,1\}^{96 \log n}$.

The outer function is a modified version of the $\abbl_{(n,n)}$ - pointer function described in \cite{ABBL+17}  (see Definition~\ref{defi: variant 1} for a description). The function $\abbl_{(n,n)}$ takes as input a $(n\times n)$-matrix whose entries are from a set $\Sigma$, and the function evaluates to $1$ if a certain kind of $1$-cell-certificate exists.  Let us define a slightly modified function 
$\mabbl_{(n,n)}: \bar{\Sigma}^{n\times n} \to \{0,1\}$ where  $\bar\Sigma = \Sigma\times \{\vdash, \dashv\}$. 
We can think of an input $A \in \bar{\Sigma}^{n\times n}$ as a pair of matrices $B\in {\Sigma}^{n\times n}$ and $C\in {\{\vdash, \dashv\}}^{n\times n}$. The function $\mabbl_{(n,n)}$ is defined as 
$$\mabbl_{(n,n)}(A) = 1 \mbox{ iff } \left\{
    \begin{array}{rl}
        \mbox{ Either, }   \mbox{ (i) } &  \abbl_{(n,n)}(B) = 1, \mbox{ and, all the cells in the }\\
         &  \mbox{$1$-cell-certificate have $\vdash$ in the corresponding cells in $C$}  \\
        \mbox{ Or, }  \mbox{ (ii) } & \abbl_{(n,n)}(B^T) = 1, \mbox{ and, all the cells in the}\\
        & \mbox{$1$-cell-certificate have $\dashv$ in the corresponding cells in $C^T$}  \\
    \end{array}
\right.$$

Note that both the two conditions (i) and (ii) cannot be satisfied simultaneously.  From this it is easy to verify that the function $\mabbl_{(n,n)}$
has all the properties as $\abbl_{(n,n)}$ as described in Theorem~\ref{thm: Properties of variant 1}.

The inner function $\De$ (we call it a decoding function) is a function from $\Gamma$ to $\bar\Sigma$, where $\Gamma = 96 \log n$. Thus our final function is $$F_{\ref{thm:R_0vsD}} := \left(\mabbl_{(n, n)}\circ \De\right) : \Gamma^{n \times n} \to \{0, 1\}.$$

\subsubsection{Inner function $\De$}
\label{sec: encoding scheme of variant 1}

The input to $\abbl_{(n,n)}$ is a $\type_1$ pointer matrix $\Sigma^{n\times n}$. Each cell of a $\type_1$ pointer matrix contains a 4-tuple of the form $(\vl, \lp, \rp, \bp)$ where $\vl$ is either  $0$ or $1$ and $\lp,\rp$ are pointers to the other cells of the matrix and $\bp$ is a pointer to a column of the matrix (or can be a null pointer also).
Hence, $\Sigma = \{0,1\}\times [n]^2 \times [n]^2 \times [n]$.
For the function $\abbl_{(n,n)}$
it was assumed (in \cite{ABBL+17}) that the elements of $\Sigma$ is encoded as a $k$-length\footnote{For the canonical encoding $k = (1 + 5\log n)$ was sufficient} binary string in a canonical way.

The main insight for our function $F_{\ref{thm:R_0vsD}} := \left(\mabbl_{(n, n)}\circ \De\right)$ is that we want to maintain the basic structure of the function $\abbl_{(n,n)}$ (or rather of $\mabbl_{(n, n)}$) but at the same time we want to encode the $\bar\Sigma=\Sigma\times \{\vdash, \dashv\}$ in such a way that the resulting function becomes transitive. To achieve this, instead of having a unique way of encoding an element in $\bar\Sigma$ we produce a number of possible encodings\footnote{We use the term ``encoding'' a bit loosely in this context as technically an encoding means a unique encoding. What we actually mean is the pre-images of the function $\De$.} for any element in $\bar\Sigma$. 
The inner function $\De$ is therefore a decoding algorithm that given any proper encoding of an element in $\bar\Sigma$ will be able to decode it back.

For ease of understanding we start by describing the possible ``encodings'' of $\bar\Sigma$, that is by describing the pre-images of any element of $\bar\Sigma$ in the function $\De$.

\vspace{0.8em}

\noindent\textbf{``Encodings'' of the content of a cell in $\bar\Sigma^{n\times n}$ :}

\vspace{0.8em}

We will encode any element of $\bar\Sigma$ using a string of size $96 \log n $ bits. Recall that, an element in $\bar\Sigma$ is of the form $(V, (r_{L}, c_{L}), (r_{R}, c_{R}), (c_{B}), T)$, where $V$ is the Boolean value, $(r_L, c_L)$, $(r_R, c_R)$ and $c_B$ are the left pointer, right pointers and bottom pointer respectively and $T$ take the value $\vdash$ or $\dashv$.  The overall summary of the encoding is as follows:

\begin{itemize}
    \item \textbf{Parts:} We will think of the tuple as 7 objects, namely $V$, $r_{L}$, $c_{L}$, $r_{R}$, $c_{R}$, $c_{B}$ and $T$.  We will use $16\log n$ bits to encode each of the first 6 objects. The value of $T$ will be encoded cleverly. So the encoding of any element of $\bar\Sigma$ contains 6 \emph{parts} - each a binary string of length $16\log n$. 
    \item \textbf{Blocks:} Each of 6 \emph{parts} will be further broken into 4 \emph{blocks} of equal length of $4\log n$. One of the blocks will be a special block called the ``encoding block''. 
\end{itemize}

Now we explain, for a tuple  $(V, (r_{L}, c_{L}), (r_{R}, c_{R}), (c_{B}), T)$ what is the 4 blocks in each part. 
We will start by describing a ``standard-form'' encoding of a tuple $(V, (r_{L}, c_{L}), (r_{R}, c_{R}), (c_{B}), T)$ where $T = \vdash$. Then we will extend it to describe the standard for the encoding of $(V, (r_{L}, c_{L}), (r_{R}, c_{R}), (c_{B}), T)$ where $T = \dashv$.
Finally, we will explain all other valid encodings of a tuple  $(V, (r_{L}, c_{L}), (r_{R}, c_{R}), (c_{B}), T)$ by describing all the allowed permutations on the bits of the encoding.

\begin{table}[]
\scalebox{0.9}{
    \centering
    \begin{tabular}{| c || c | c | c | c ||c |}
    \hline
    $\dots$ & $B_1$ ``encoding''-block  & $B_2$ &  $B_3$  & $B_4$ & Hamming weight \\
    \hline
   $P1$ & $\ell_1\ell_2$, where $|\ell_1| = 2 \log n,$ and  & $ 4\log n$ & $2 \log n + 1$& $2 \log n +2 $ & $12 \log n +2 - V$\\
          &$|\ell_2| = 2\log n-1 - V$  & & & &\\
   \hline
  $P2$& $\mathcal{E}(r_L, 0)$ & $2 \log n +3$ & $2 \log n +1$ &$2 \log n +2$ & $7 \log n +6$\\
    \hline
   $P3$  & $\mathcal{E}(0, c_L)$ &$2 \log n +4$ & $2 \log n +1$  &$2 \log n +2$ & $7 \log n +7$\\
    \hline
  $P4$  &  $\mathcal{E}(r_R, 0)$ & $2 \log n +5$& $2 \log n +1$  &$2 \log n +2$ & $7 \log n +8$\\
    \hline
    $P5$ & $\mathcal{E}(0, c_R)$& $2 \log n +6$ & $2 \log n +1$  &$2 \log n +2$ & $7 \log n +9$\\
    \hline
    $P6$ & $\mathcal{E}(0, c_B)$ & $2 \log n +7$& $2 \log n +1$  &$2 \log n +2$ & $7 \log n +10$\\
    \hline
     \hline
      \end{tabular}
      }
     \caption{{\footnotesize 
     * An integer entry $k$ in the table indicates it contains a Boolean string of length $4 \log n$ with Hamming weight $k$.\\
     The standard form of encoding of element
     $(V, (r_L, c_L), (r_R,c_R),c_B, \vdash)$ by a $96\log n$ bit string that is broken into 6 parts $P_1, \dots, P_6$ of equal size and each Part is further broken into 4 Blocks $B_1, B_2, B_3$ and $B_4$. So all total there are 24 blocks each containing a $4\log n$-bit string.
     For the standard form of encoding of element $(V, (r_L, c_L), (r_R,c_R),c_B,\dashv)$ we encode $(V, (r_L, c_L), (r_R,c_R),c_B,\vdash)$ in the standard form as described in the table and then apply the $\swap$ on each block. The last column of the table indicates the Hamming weight of each Part.}
  }\label{table:standardform}
 \end{table}

\vspace{0.8em}

\noindent\textbf{Standard-form encoding of $(V, (r_{L}, c_{L}), (r_{R}, c_{R}), (c_{B}), T)$ where $T = \vdash$: } 
For the standard-form encoding, we will assume that the information of  $V, r_{L}, c_{L}, r_{R}, c_{R}, c_{B}$ are stored in parts $P1, P2, P3, P4, P5$, and $P6$ respectively. 
For all $i \in [6]$, the part $P_i$ with have blocks $B_{1}, B_{2}, B_3$ and $B_4$, of which the block $B_1$ will be the encoding-block. The encoding will ensure that every parts within a cell will have a distinct Hamming weight. The description is also compiled in the Table~\ref{table:standardform}.

\begin{itemize}
    \item For part $P1$ (that is the encoding of $V$) the encoding block $B_1$ will store  $\ell_1 \cdot \ell_2$ where $\ell_1$ be the $2\log n$ bit binary string with Hamming weight $2\log n$ and $\ell_2$ is any $2\log n$ bit binary string with Hamming weight $2 \log n-1 - V$. The blocks $B_2$, $B_3$ and $B_4$ will store a $4\log n$ bit string that has Hamming weight $4 \log n, 2 \log n+1$ and $2\log n+2$ respectively. Any fixed string with the correct Hamming weight will do. We are not fixing any particular string for the blocks $B_2$, $B_3$, and $B_4$ to emphasize the fact that we will be only interested in the Hamming weights of these strings.
    
    \item The encoding block $B1$ for parts $P2, P3, P4, P5$ and $P6$ will store the string $\mathcal{E}(r_L, 0)$, $\mathcal{E}(0, c_L)$, $\mathcal{E}(r_R, 0)$, $\mathcal{E}(0, c_r)$ and $\mathcal{E}(0, C_B)$ respectively, where $\mathcal{E}$ is the Balanced-pointer-encoding function (Definition~\ref{def:bpencoding}).
    For part $P_i$ (with $2\leq i \leq 6$) block $B_2, B_3$ and $B_4$ will store any $4\log n$ bit string with Hamming weight $2\log n +1 + i$, $ 2\log n +1$ and $ 2 \log n +2$ respectively.
\end{itemize}

\vspace{0.8em}

\noindent\textbf{Standard form encoding of $(V, (r_{L}, c_{L}), (r_{R}, c_{R}), (c_{B}), T)$ where $T = \dashv$: } 
For obtaining a standard-form encoding of $(V, (r_{L}, c_{L}), (r_{R}, c_{R}), (c_{B}), T)$ where $T = \dashv$, first we encode $(V, (r_{L}, c_{L}), (r_{R}, c_{R}), (c_{B}), T)$ where $T = \vdash$ using the standard-form encoding.  Let $(P1, P2, \dots, P6)$ be the standard-form encoding of $(V, (r_{L}, c_{L}), (r_{R}, c_{R}), (c_{B}), T)$ where $T = \vdash$. Now for each of the blocks apply the $\swap$ operator. 

\vspace{0.8em}

\noindent\textbf{Valid permutation of the standard form:}
Now we will give a set of valid permutations to the bits of the encoding of any element of $\bar\Sigma$. The set of valid permutations is classified into 3 categories: 
\begin{enumerate}
    \item Part-permutation: The 6 parts can be permuted using any permutation from $\sym_6$
    \item Block-permutation: In each of the parts, the 4 blocks (say $B_1, B_2, B_3, B_4$) can be permuted in two ways. $(B_1, B_2, B_3, B_4)$ can be send to one of the following
    $$\mbox{(a) Simple Block Swap: } (B_3, B_4, B_1, B_2) \hspace{1.3cm} \mbox{(b) Block Flip: } (B_2, B_1, \flip(B_3), \flip(B_4))$$

\end{enumerate}

\vspace{0.3em}

\noindent\textbf{The ``decoding" function $\De: \zone^{96 \log n} \to \bar\Sigma$: } 


\begin{itemize}
    \item Identify the parts containing the encoding of $V$, $r_L$, $c_L$, $r_R$, $c_R$ and $c_B$. This is possible because every part has a unique Hamming weight.
    
    \item For each part identify the blocks. This is also possible as in any part all the blocks have distinct Hamming weight.  Recall, the valid Block-permutations, namely Simple Block Swap and Block Flip. By seeing the positions of the blocks one can 
    understand if $\flip$ was applied and to what and using that one can 
    revert the blocks back to the standard-form (recall Definition~\ref{defi: bb(i)}).
 
    \item In the part containing the encoding of $V$ consider the encoding block. 
     If the block is of the form $\{(\ell_1\ell_2)\, \text{such that}\, |\ell_1| = 2 \log n, |\ell_2| \leq 2 \log n-1\}$ then $T= \{\vdash\}$.  If the block is of the form $\{(\ell_2\ell_1)\, \text{such that}\, |\ell_1| = 2 \log n, |\ell_2| \leq 2 \log n-1\}$ then $T =\{\dashv\}$. 
     
     \item By seeing the encoding block we can decipher the original values and the pointers. 
   
     \item If the $96 \log n$ bit string doesn't have the form of a valid encoding, then decode it as $(0, \perp, \perp, \perp)$.
     
\end{itemize}

\subsection{Proof of transitivity of the function}
We start with describing the transitive group for which $F_{\ref{thm:R_0vsD}}$ is transitive. 

\vspace{0.8em}

\noindent\textbf{The Transitive Group:} We start with describing a transitive group $\mathcal{T}$ acting on the cells of the matrix $A$. The matrix has rows $r_1, \dots, r_n$ and columns $c_1, \dots, c_n$. And we use the encoding function $\mathcal{E}$ to encode the rows and columns. So the index of the rows and columns are encoded using a $4\log n$ bit string. A permutation from $\bt_{4\log n}$ (see Definition~\ref{def:transitivegroup}) on the indices of a $4\log n$ bit string will therefore induce a permutation on the set of rows and columns which will give us a permutation on the cells of the matrix. We will now describe the group $\mathcal{T}$ acting on the cells of the matrix by describing the permutation group $\hat{\mathcal{T}}$ acting on the indices of a $4\log n$ bit string. The group $\hat{\mathcal{T}}$ will be the group $\bt_{4\log n}$ acting on the set $[4\log n]$. We will assume that $\log n$ is a power of $2$. The group $\mathcal{T}$ will be the resulting group of permutations on the cells of the matrix induced by the group $\hat{\mathcal{T}}$ acting on the indices on the balanced-pointer-encoding. Note that $\mathcal{T}$ is acting on the domain of $\mathcal{E}$ and $\hat{\mathcal{T}}$ is acting on the image of $\mathcal{E}$.
Also $\hat{\mathcal{T}}$ is a transitive subgroup of $\sym_{4 \log n}$ from Claim~\ref{obs: btk is a transitive group}.

\begin{observation}\label{obs:bitflip}
For any $1\leq i\leq 2\log n$ consider the permutation  ``$i$th-bit-flip'' in  $\hat{\mathcal{T}}$ that applies the transposition $(2i-1, 2i)$ to the indices of the balanced-pointer-encoding. Since the $\mathcal{E}$-encoding of the row $(r_k,0)$ uses the balanced binary representation of $k$ in the first half and all zero string in the second half, the $j$th bit in the binary representation of $k$ is stored in the $2j-1$ and $2j$-th bit in the $\mathcal{E}$-encoding of $r_i$. So the $j$-th-bit-flip acts on the sets of rows by swapping all the rows with $1$ in the $j$-th bit of their index with the corresponding rows with $0$ in the $j$-th bit of their index. Also, if $i> \log n$ then there is no effect of the $i$-th-bit-flip operation on the set of rows. 
Similarly the $\mathcal{E}$-encoding of the column $(0,C_j)$ uses the balanced binary representation of $j$ in the second half and all zero sting in the first half. 
\end{observation}

Using Observation~\ref{obs:bitflip} we have the following claim.

\begin{claim}\label{cl:trasitiveM}
The group $\mathcal{T}$ acting on the cells of the matrix is a transitive group. That is, for all $1 \leq i_1,j_1,i_2, j_2 \leq n$ there is a permutation $\hat{\sigma}\in \hat{\mathcal{T}}$ such that $\hat{\sigma}[\mathcal{E}(i_1, 0)] = \mathcal{E}(i_2,0)$ and $\hat{\sigma}[\mathcal{E}(0,j_1)]= (0,j_2)$. In other words, there is a $\sigma\in \mathcal{T}$ acting on the cell of the matrix that would take the cell corresponding to row $r_{i_1}$ and column $c_{j_1}$ to the cell corresponding to row $r_{i_2}$ and column $c_{j_2}$.
\end{claim}

From the Claim~\ref{cl:trasitiveM} we see the group $\mathcal{T}$ acting on the cells of of the matrix is a transitive, but it does not touch the contents within the cells of the matrix. The input to the function $F_{\ref{thm:R_0vsD}}$ contains element of $\Gamma = \{0,1\}^{96\log n}$ in each cell. So we now need to extend the group $\mathcal{T}$ to a group $\Gr$ that acts on all the indices of all the bits of the input to the function $F_{\ref{thm:R_0vsD}}$.

Recall that the input to the function $F_{\ref{thm:R_0vsD}}$ is a $(n\times n)$-matrix with each cell of matrix containing a binary string of length $96\log n$ which has 6 parts of size $16\log n$ each and each part has 4 blocks of size $4\log n$ each. We classify the generating elements of the group $\Gr$ into 4 categories: 
\begin{enumerate}
    \item Part-permutation: In each of the cells the 6 parts can be permuted using any permutation from $\sym_6$
    \item Block-permutation: In each of the Parts the 4 blocks can be permuted in the following ways. $(B_1, B_2, B_3, B_4)$ can be send to one of the following
    \begin{enumerate}
        \item Simple Block Swap: $(B_3, B_4, B_1, B_2)$
        \item Block Flip ($\# 1$): $(B_2, B_1, \flip(B_3), \flip(B_4))$
        \item Block Flip ($\# 2$)\footnote{Actually this Block flip can be generated by a combination of Simple Block Swap and Block Flip ($\# 1$)}: $(\flip(B_1), \flip(B_2), B_4, B_3)$
    \end{enumerate}
    \item Cell-permutation: for any $\sigma \in \mathcal{T}$  the following two action has to be done simultaneously:
    \begin{enumerate}
        \item (Matrix-update) Permute the cells in the matrix according to the permutation $\sigma$. This keeps the contents within each cell untouched - it just changes the location of the cells. 
        \item (Pointer-update) For each of the blocks in each of the parts in each of the cells permute the indices of the $4\log n$-bit strings according to $\sigma$, that applies $\hat{\sigma} \in \hat{\mathcal{T}}$ corresponding to $\sigma$.
    \end{enumerate}
\end{enumerate}

We now have the following theorems that would prove that the function $F_{\ref{thm:R_0vsD}}$ is transitive.

\begin{theorem}\label{thm:transitivegroup}
$\Gr$ is a transitive group and the
function $F_{\ref{thm:R_0vsD}}$ is invariant under the action of the $\Gr$.
\end{theorem}

\begin{proof}[of Theorem~\ref{thm:transitivegroup}]

To prove that the group $\Gr$ is transitive we show that for any indices $p, q \in [96n^2\log n]$ there is a permutation $\sigma\in \Gr$ that would take $p$ to $q$. Recall that the string $\{0,1\}^{96n^2\log n}$ is a matrix $\Gamma^{(n\times n)}$ with $\Gamma = \{0,1\}^{96\log n}$ and every element in $\Gamma$ is broken into 6 parts and each part is broken into 4 blocks of size $4\log n$ each. So we can think of the index $p$ as sitting in $k_p$th position ($1\leq k_p\leq 4\log n$) in the block $B_p$ of the part $P_p$ in the $(r_p, c_p)$-th cell of the matrix. Similarly, we can think of  $q$ as sitting in $k_q$th position ($1\leq k_q\leq 4\log n$) in the block $B_q$ of the part $P_q$ in the $(r_q, c_q)$-th cell of the matrix. 

We will give a step-by-step technique in which permutations from $\Gr$ can be applied to move $p$ to $q$.

\begin{enumerate}
    \item[Step 1] \textbf{Get the positions in the block correct:} If $k_p \neq k_q$ then take a permutation $\hat{\sigma}$ from $\hat{\mathcal{T}}$ that takes $k_p$ to $k_q$.
    Since $\hat{\mathcal{T}}$ is a transitive group, such a permutation exists. Apply the cell-permutation $\sigma \in \mathcal{T}$ corresponding to $\hat{\sigma}$. As a result, the index $p$ can be moved to a different cell in the matrix but, by the choice of $\hat{\sigma}$ its position in the block in which it is will be $k_q$. 
    Without loss of generality, we assume the cell location does not change.
    
    \item[Step 2] \textbf{Get the cell correct:} Using a cell-permutation that corresponds to a series of ``bit-flip'' operations change $r_p$ to $r_q$ and $c_p$ to $c_q$. Since one-bit-flip operations change one bit in the binary representation of the index of the row or column such a series of operations can be made. 
    
    Since each bit-flip operation is executed by applying the bit-flips in each of the blocks this might have once again changed the position of the index $p$ in the block. Note that, even if the position in the block changes it must be a $\flip$ operation away. In other words, since at the beginning of this step $k_p = k_q$, so if $k_q$ is even (or odd) then after the series bit-flip operations the position of $p$ in the block is either  $k_q$ or  $(k_q -1)$ (or $(k_q + 1)$). 
    
    \item[Step 3] \textbf{Align the Part:} Apply a suitable permutation to ensure that the part $P_p$ moves to part $P_q$. Note this does not change the cell or the block within the part of the position in the block.
    
    \item[Step 4] \textbf{Align the Block:} Using a suitable combination of Simple Block Swap and Block Flip ensures the Block number gets matched, that is $B_p$ goes to $B_q$. In this case, the cell or the Part does not change. Depending on whether the Block Flip operation is applied the position in the block can again change. Note that, the current position in the block  $k_p$ is at most one $\flip$ away from $k_q$.
    
    \item[Step 5] \textbf{Apply the final flip:} It might so happen that already we a done after the last step. If not we know that the current position in the block  $k_p$ is at most one $\flip$ away from $k_q$. So we apply the suitable Block-flip operation. This will not change the cell position, Part Number, or Block number and the position in the block will match. 
\end{enumerate}
Hence we have proved that the group $\Gr$ is transitive. Now we show that the function 
 $F_{\ref{thm:R_0vsD}}$ is invariant under the action of $\Gr$, i.e., for any elementary operations $\pi$ from the group $\Gr$ and for any input $\Gamma^{(n\times n)}$ the function value does not change even if after the input is acted upon by the permutation $\pi$. 

\textbf{Case 1: $\pi$ is a Part-permutation:} It is easy to see that the decoding algorithm $\De$ is invariant under Part-permutation. This was observed in the description of the decoding algorithm $\De$ in Section~\ref{sec: encoding scheme of variant 1}. So clearly the function $F_{\ref{thm:R_0vsD}}$ is invariant under any Part-permutation.

\textbf{Case 2: $\pi$ is a Block-permutation:} Here also it is easy to see that the decoding algorithm $\De$ is invariant under Block-permutation. This was observed in the description of the decoding algorithm $\De$ in Section~\ref{sec: encoding scheme of variant 1}.  Thus $F_{\ref{thm:R_0vsD}}$ is also invariant under any block permutation.

\textbf{Case 3: $\pi$ is a Cell-permutation} From Observation~\ref{thm: invariance of variant 1} it is enough to prove that when we permute the cells of the matrix we update the points in the cells accordingly. 

Let $\pi \in \mathcal{T}$ be a permutation that permutes only the rows of the matrix. By Claim~\ref{cl:trick}, we see that the contents of the cells will be updated accordingly. Similarly if $\pi$ only permute the columns of the matrix we will be fine.

Finally, if $\pi$ swaps the row set and the column set (that is if $\pi$ makes a transpose of the matrix) then for all $i$ row $i$ is swapped with column $i$ and it is easy to see that $\hat{\pi}[\mathcal{E}(i,0)] =\mathcal{ E}(0,i)$. In that case, the encoding block of the value part in a cell also gets swapped. This will thus be encoding the $T$ value as $\dashv$. And so the function value will not be affected as the $T = \dashv$ will ensure that one should apply the 
$\pi$ that swaps the row set and the column set to the input before evaluating the function. 
\end{proof}

\subsection{Properties of the function}

\begin{claim}
\label{claim: D(Fnn)}
Deterministic query complexity of $F_{\ref{thm:R_0vsD}}$ is ${\Omega}(n^2)$.
\end{claim}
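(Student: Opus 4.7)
The plan is to reduce the lower bound on $\dqc(F_{\ref{thm:R_0vsD}})$ to the known lower bound on $\dqc(\mabbl_{(n,n)})$, which in turn is inherited from the pointer function $\abbl_{(n,n)}$ of \cite{ABBL+17}. Section~\ref{subsec: F1.1} explicitly notes that $\mabbl_{(n,n)}$ retains all of the properties of $\abbl_{(n,n)}$ listed in Theorem~\ref{thm: Properties of variant 1} (one can see this by restricting the $\{\vdash,\dashv\}$ coordinate of every cell to be $\vdash$, which reduces $\mabbl_{(n,n)}$ to $\abbl_{(n,n)}$). Hence we may take $\dqc(\mabbl_{(n,n)}) = \tilde\Omega(n^2)$, where the cost is one per query to a $\bar\Sigma$-valued cell.

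The key step is a simulation argument showing that any deterministic decision tree for $F_{\ref{thm:R_0vsD}}$ can be converted into a deterministic decision tree for $\mabbl_{(n,n)}$ without increasing the worst-case number of queries. Given a decision tree $\mathcal{A}$ for $F_{\ref{thm:R_0vsD}}$ that makes at most $q$ bit queries, I would construct a decision tree $\mathcal{B}$ for $\mabbl_{(n,n)}$ as follows. On input $N \in \bar\Sigma^{n \times n}$, $\mathcal{B}$ simulates $\mathcal{A}$ as if it were running on the standard-form encoding $M \in \Gamma^{n \times n}$ of $N$ (Table~\ref{table:standardform}). Whenever $\mathcal{A}$ requests a bit lying in some cell $(i,j)$ of $M$, the simulator $\mathcal{B}$ queries $N[i,j]$ (if it has not been queried already) and uses its complete knowledge of the standard-form encoding of $N[i,j]$ to return the requested bit. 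Because $\De$ sends the standard-form encoding of any $\alpha \in \bar\Sigma$ back to $\alpha$, the output of $\mathcal{A}$ on $M$ equals $\mabbl_{(n,n)}(\De(M[1,1]), \ldots, \De(M[n,n])) = \mabbl_{(n,n)}(N)$, so the simulation is correct. Moreover, $\mathcal{B}$ makes at most one cell query per distinct cell that $\mathcal{A}$ touches, so $\mathcal{B}$ uses at most $q$ cell queries.

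Putting the two together yields $\dqc(\mabbl_{(n,n)}) \leq \dqc(F_{\ref{thm:R_0vsD}})$, from which the claim follows immediately. I do not anticipate any substantive obstacle: the only items that need verification are (i) that the standard-form encoding is well defined for every element of $\bar\Sigma$ and that $\De$ inverts it correctly, which is immediate from the encoding scheme described in Section~\ref{sec: encoding scheme of variant 1}, and (ii) that the deterministic lower bound for $\abbl_{(n,n)}$ transfers to $\mabbl_{(n,n)}$, which is the restriction observation recorded above.
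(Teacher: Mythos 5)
Your proposal is correct and takes essentially the same route as the paper: reduce to the known $\Omega(n^2)$ adversary lower bound for $\abbl_{(n,n)}$ (Theorem~\ref{thm:D(abbl)}), noting that $\mabbl_{(n,n)}$ is at least as hard (via the restriction to $\vdash$) and that a decision tree for $F_{\ref{thm:R_0vsD}} = \mabbl_{(n,n)} \circ \De$ yields one for $\mabbl_{(n,n)}$ in the cell-query model with no blow-up. The paper treats the last step as ``clear,'' whereas you spell out the simulation (answering bit queries from the standard-form encoding of the queried cell), which is exactly the justification the paper implicitly relies on; the only cosmetic discrepancy is your use of $\tilde\Omega$ for the $\mabbl$ bound where the paper's Theorem~\ref{thm:D(abbl)} gives a clean $\Omega(n^2)$, which is what the claim needs.
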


\begin{proof}
The function $\mabbl_{(n, n)}$ is a ``harder'' function than $\abbl_{(n,n)}$. Here by ``harder'' we mean $\abbl_{(n,n)}$ is a sub-function of  $\mabbl_{(n, n)}$. $\mabbl_{(n, n)}$ contains extra alphabets compared to $\abbl_{(n,n)}$, upon restriction on those extra alphabets we can get back the function $\abbl_{(n,n)}$, so $\dqc(\mabbl_{(n,n)})$ is at least that of $\dqc(\abbl_{(n, n)})$. Now since, $F_{\ref{thm:R_0vsD}}$ is $\left(\mabbl_{(n, n)}\circ \De\right)$ so clearly the $\dqc(F_{\ref{thm:R_0vsD}})$ is at least $\dqc(\abbl_{(n, n)})$.  Theorem~\ref{thm:D(abbl)} proves that $\dqc(\abbl_{(n,n)})$ is $\Omega(n^2)$. Hence $\dqc(F_{\ref{thm:R_0vsD}}) = \Omega(n^2)$.
\end{proof}

The following Claim~\ref{cl:mabbl} follows from the definition of the function  $\mabbl_{(n,n)}$.
 \begin{claim}\label{cl:mabbl}
The following are some properties of the function $\mabbl_{(n,n)}$
\begin{enumerate}
 \item $\roc(\mabbl_{(n, n)}) \leq 3\roc(\abbl_{(n, n)}) $
 \item $\qqc(\mabbl_{(n, n)}) \leq 3\qqc(\abbl_{(n, n)}) $
 \item $\deg(\mabbl_{(n, n)}) \leq 3\deg(\abbl_{(n, n)}) $
\end{enumerate}
\end{claim}

Finally, from Theorem~\ref{theo: composition} we see that the $\roc(F_{\ref{thm:R_0vsD}})$, $\qqc(F_{\ref{thm:R_0vsD}})$ and $\deg(F_{\ref{thm:R_0vsD}})$ are at most $O(\roc(\mabbl_{(n,n)}\cdot \log n)$, $O(\qqc(\mabbl_{(n, n)}\cdot \log n)$ and $O(\deg(\mabbl_{(n, n)}\cdot \log n)$, respectively. So combining this fact with Claim~\ref{claim: D(Fnn)},  Claim~\ref{cl:mabbl} and Theorem~\ref{thm: Properties of variant 1} (from \cite{ABBL+17}) we have Theorem~\ref{thm:R_0vsD}.

\section{Separation between sensitivity and randomized query complexity }
\label{sec: s vs qqc}
\cite{DHT17} showed that functions that witness a gap between deterministic query complexity (or randomized query complexity), and $\ucmin$ can be transformed to give functions that witness separation between deterministic query complexity (or randomized query complexity) and sensitivity. We observe that transformation the~\cite{DHT17} described preserves transitivity.
Our transitive functions from Theorem~\ref{thm:R_0vsD} along with the transformation from~\cite{DHT17} gives the cubic separations between $\rqc$ and $\mathsf{s}$


We start by defining \emph{desensitization transform} of Boolean functions as defined in~\cite{DHT17}.

\begin{defi}[Desensitized Transformation]
\label{defi:desensitization}
Let $f :\zone^n \to \zone$. Let $U$ be a collection of unambiguous $1$-certificates for $f$, each of size at most $\mathsf{UC}_1(f )$. For each $x \in f^{-1}(1) $, let $p_x \in U$
be the unique certificate in $U$ consistent with $x$. The desensitized version of $f$ is the function $f_{\desentransform}: \zone^{3n} \to \zone$ defined by $f_{\desentransform}(x_1 x_2 x_3) = 1$ if and only if $f(x_1) = f(x_2) = f(x_3) = 1$ and $p_{x_1} = p_{x_2} = p_{x_3}$ .
\end{defi}

\begin{observation}
\label{obs: desensitized version is transitive}
If $f:\zone^n \to \zone$ is transitive, then $f_{\desentransform}:\zone^{3n} \to \zone$ is also transitive.
\end{observation}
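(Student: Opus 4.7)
The plan is to exhibit, given a transitive group $G \leq \sym_n$ witnessing transitivity of $f$, a transitive group $H \leq \sym_{3n}$ under which $f_{\desentransform}$ is invariant. I would take $H$ to be generated by two kinds of permutations: (i) the diagonal action of $G$ sending $(x_1,x_2,x_3) \mapsto (\sigma x_1, \sigma x_2, \sigma x_3)$ for each $\sigma \in G$, and (ii) the natural action of $\sym_3$ that permutes the three blocks of $n$ coordinates. Transitivity of $H$ on $[3n]$ is immediate: the block-permutations move coordinates between the three blocks, and within each block the diagonal $G$-action is transitive because $G$ itself is transitive on $[n]$.

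Next I would verify $H$-invariance of $f_{\desentransform}$ by checking each family of generators. The conjunction $f(x_1)=f(x_2)=f(x_3)=1$ is preserved by the diagonal $G$-action since $f$ is $G$-invariant, and by block permutations since it is symmetric in the three arguments. The equality $p_{x_1}=p_{x_2}=p_{x_3}$ is also symmetric in the three blocks, so block-permutations leave it invariant. The only nontrivial check is that the certificate-agreement condition is preserved by each diagonal $\sigma \in G$.

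The main obstacle is to arrange that the unambiguous cover $U$ is chosen so that the partition it induces on $f^{-1}(1)$ (the equivalence $x \sim_U y \iff p_x = p_y$) is $G$-invariant. Given such a $U$, uniqueness of the certificate in $U$ consistent with any $1$-input forces $\sigma$ to permute the parts of $U$ (equivalently, $p_{\sigma(x)} = \sigma(p_x)$ as partial assignments), whence $p_{x_i}=p_{x_j}$ if and only if $p_{\sigma x_i}=p_{\sigma x_j}$, completing the invariance check. To produce such a $U$, I would exploit the fact that $f \circ \sigma = f$ for every $\sigma \in G$, so that applying $\sigma$ to any optimal unambiguous cover $U_0$ yields another unambiguous cover $\sigma(U_0)$ with the same maximum cert size $\mathsf{UC}_1(f)$. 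I would then build $U$ $G$-equivariantly by selecting a representative $x_O$ from each $G$-orbit $O \subseteq f^{-1}(1)$, assigning it a cert $p_{x_O}$ from $U_0$, and defining $p_{\sigma(x_O)} := \sigma(p_{x_O})$ on the remainder of the orbit. The delicate step, which I expect to be the hardest part of the plan, is verifying well-definedness of this assignment over each orbit — equivalently, ensuring the representative cert can be chosen invariant under the stabilizer of $x_O$ — and this is where care is needed to keep the maximum cert size at $\mathsf{UC}_1(f)$; once this is handled, the transitivity of $f_{\desentransform}$ under $H$ follows immediately.
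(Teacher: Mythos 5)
You choose the same group $H$ as the paper (diagonal $G$-action plus the $\sym_3$ block-swaps), and unlike the paper you correctly spot that invariance of $f_{\desentransform}$ under the diagonal action is not automatic. The paper's justification is the single sentence ``from the definition the value of $f_{\desentransform}$ is invariant under these permutations,'' but $f_{\desentransform}$ depends on a chosen unambiguous cover $U$, and the agreement condition $p_{x_1}=p_{x_2}=p_{x_3}$ is in general not preserved by a diagonal $\sigma\in G$. For a concrete failure, take $f=\OR_3$ (transitive under cyclic shifts) and the staircase cover $U=\bigl\{\{x_1=1\},\{x_1=0,x_2=1\},\{x_1=0,x_2=0,x_3=1\}\bigr\}$: the triple $\bigl((1,1,0),(1,0,0),(1,1,1)\bigr)$ has all three certs equal to the first element of $U$, so $f_{\desentransform}=1$, yet its image under the diagonal cyclic shift $(y_1,y_2,y_3)\mapsto(y_3,y_1,y_2)$ acquires certs $p_2,p_2,p_1$ and so $f_{\desentransform}=0$.

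Your repair --- arrange for the partition of $f^{-1}(1)$ induced by $U$ to be $G$-invariant --- is exactly the missing hypothesis, and a cleaner way to force it is to take $U$ itself to be $G$-invariant as a set of partial assignments: uniqueness of the consistent cert in $U$ then gives $p_{\sigma x}=\sigma(p_x)$ directly. But your proposed orbit-representative construction has the gap you yourself flag, and a further one: beyond the stabiliser-consistency issue, the certs you assign within and across distinct orbits need not remain pairwise inconsistent, so the resulting $U$ may fail to be unambiguous. Neither your sketch nor the paper closes this for arbitrary transitive $f$. The paper's actual application survives because the function being desensitised, $F_{\ref{thm:R vs approx deg}}$, has a unique minimal $1$-certificate per $1$-input (Observation~\ref{obs: UC(variant 3)}); the cover is then forced and is automatically $G$-invariant since $f$ is transitive. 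Read as a standalone claim about all transitive $f$, the observation really asserts the existence of an appropriate $U$, and your instinct that constructing one is the crux of the argument is correct; the paper's one-line justification does not supply it.
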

\begin{proof}
Let $T_f \subseteq \sym_n$ be the transitive group corresponding to $f$ and let $x_1 x_2 x_3 \in \zone^{3n}$ be the input to $f_{\desentransform}$. Consider the following permutations acting on the input $x_1 x_2 x_3$ to $f_{\desentransform}$:
\begin{enumerate}
    \item $\sym_3$ acting on the indices $\{1,2,3\}$ and
    \item $\{(\sigma, \sigma, \sigma) \in \sym_{3n} | \sigma \in T_f\}$ acting on $(x_1, x_2, x_3)$.
\end{enumerate}
Observe that the above permutations act transitively on the inputs to $f_{\desentransform}$. 
Also from the definition of $f_{\desentransform}$ the value of the function $f_{\desentransform}$ is invariant under these permutations.
\end{proof}

Next, we need the following theorem from~\cite{DHT17}. The theorem is true for more general complexity measures. We refer the reader to~\cite{DHT17} for a more general statement.

\begin{theorem}[\cite{DHT17}]
\label{thm: Tal's result}
For any $k \in \mathbb{R}^+$, if there is a family of function with $\dqc(f) = \widetilde{\Omega}(\mathsf{UC}_{min}(f)^{1+ k})$, then for the family of functions defined by $\widetilde{f} = \OR_{3 \UC_{\min}(f)} \circ f_{\desentransform}$ satisfies $\dqc(\widetilde{f}) = \widetilde{\Omega}(s(\widetilde{f})^{2+k}) $. Also, if we replace $\dqc(f)$ by $\rqc(f)$, $\qqc(f)$ or $\cert(f)$, we will get the same result.
\end{theorem}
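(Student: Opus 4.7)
The plan is to build $\widetilde{f}$ in two layers (desensitization followed by $\OR$-composition) and argue that deterministic query complexity scales multiplicatively by $M = 3\UC_{\min}(f)$ while sensitivity stays on the order of $\UC_{\min}(f)$, giving the claimed exponent $2+k$.

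First I would establish the query-complexity lower bound. Any decision tree computing $f_{\desentransform}$ must in particular decide $f$: restricting to inputs of the form $(x,x,x)$ forces $p_{x_1} = p_{x_2} = p_{x_3}$ automatically, so $f_{\desentransform}(x,x,x) = f(x)$, giving $\dqc(f_{\desentransform}) \geq \dqc(f)$. Next, $\OR$-composition multiplies deterministic query complexity (and similarly $\rqc$, $\qqc$, $\cert$) by $M$, because any algorithm for $\OR_M \circ g$ must on a hard input distribution spend $\Omega(\dqc(g))$ queries in each of the $M$ copies before it can certify that no block evaluates to $1$. Setting $M = 3\UC_{\min}(f)$ and using the hypothesis $\dqc(f) = \widetilde{\Omega}(\UC_{\min}(f)^{1+k})$, we obtain $\dqc(\widetilde{f}) \geq M \cdot \dqc(f_{\desentransform}) = \widetilde{\Omega}(\UC_{\min}(f)^{2+k})$.

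The heart of the argument is the upper bound $\mathsf{s}(\widetilde{f}) = \widetilde{O}(\UC_{\min}(f))$. For this I would separately bound the $0$- and $1$-sensitivities of $f_{\desentransform}$ on a single block. On any $1$-input $(x_1,x_2,x_3)$, only bits lying on the shared unambiguous certificate $p_{x_1}$ can flip the value to $0$, so $\mathsf{s}_1(f_{\desentransform}) \leq 3\UC_{\min}(f)$. For $\mathsf{s}_0$, the key property of the unambiguous cover $U$ is that a single bit flip turning $f_{\desentransform}$ from $0$ to $1$ must simultaneously complete certificate-consistency across the three copies with some $p \in U$; a case analysis using the disjointness of certificates in $U$ confines such flips to $O(\UC_{\min}(f))$ positions per block. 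Now $\OR$-composition propagates these bounds: on any $1$-input of $\widetilde{f}$ one may assume a unique $1$-block (else sensitivity is even smaller), so $\mathsf{s}_1(\widetilde{f}) \leq \mathsf{s}_1(f_{\desentransform})$; on a $0$-input of $\widetilde{f}$, all $M$ blocks are $0$ and each contributes at most $\mathsf{s}_0(f_{\desentransform})$, but because the sensitive positions in each block are forced to lie on one of a \emph{bounded} family of completions, a union-bound/amortization argument yields $\mathsf{s}_0(\widetilde{f}) = \widetilde{O}(\UC_{\min}(f))$ as well. Combining gives $\dqc(\widetilde{f}) = \widetilde{\Omega}(\mathsf{s}(\widetilde{f})^{2+k})$, and since the $\OR$-composition step works identically for randomized, quantum, and certificate complexity, the same conclusion follows for $\rqc$, $\qqc$, $\cert$.

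The main obstacle is the $\mathsf{s}_0$-bound: controlling how many single-bit flips can push a $0$-input of $f_{\desentransform}$ into a $1$-input requires that the unambiguous cover $U$ be chosen so that no certificate of a potential $1$-input can be ``almost completed'' by many alternative single-bit flips in an already-inconsistent triple. This is the combinatorial core of the desensitization idea in \cite{DHT17} and requires a careful design of $U$ (and potentially a probabilistic sampling argument) together with a charging scheme from sensitive positions to certificate bits. Once this structural property of $U$ is in hand, the sensitivity bound, the query-complexity multiplication under $\OR$, and the hypothesis on $\dqc(f)$ assemble directly into the stated separation.
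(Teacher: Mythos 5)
The paper does not actually prove this theorem -- it is imported from \cite{DHT17} -- so your proposal has to be measured against the standard argument there. Your skeleton for $\dqc$ and $\rqc$ is the right one (diagonal identification gives $\dqc(f_{\desentransform})\ge \dqc(f)$; composing with $\OR_M$, $M=3\UC_{\min}(f)$, multiplies $\dqc$ and $\rqc$ by $M$; on a $1$-input only bits of the shared certificate are sensitive, so $\mathsf{s}_1(f_{\desentransform})\le 3\UC_{\min}(f)$ and $\mathsf{s}_1(\widetilde f)\le \mathsf{s}_1(f_{\desentransform})$). The genuine gap is your treatment of $\mathsf{s}_0$, which is exactly the point of the three-copy desensitization and which you have replaced by a bound that is too weak to be repaired downstream. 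The correct fact is that $\mathsf{s}_0(f_{\desentransform})\le 3$ for \emph{any} unambiguous collection $U$, with no special design of $U$, no probabilistic sampling, and no charging scheme: if flipping a bit of copy $1$ of a $0$-input produces a $1$-input, then copies $2$ and $3$ must already satisfy $f(x_2)=f(x_3)=1$ and $p_{x_2}=p_{x_3}=p$, and this $p$ does not depend on which bit was flipped; the flipped copy must become consistent with this fixed $p$, which a single bit flip can achieve only if $x_1$ disagrees with $p$ in exactly one coordinate of its support (it cannot agree everywhere, since then the input would already be a $1$-input by unambiguity). Hence at most one sensitive bit per copy. Your per-block bound of $O(\UC_{\min}(f))$ cannot be salvaged by a ``union-bound/amortization'' over the $M$ blocks: on a $0$-input of $\OR_M\circ f_{\desentransform}$ the $0$-sensitivities of the blocks simply add, each block can be at its worst simultaneously, and with your bound the total is $\Theta(\UC_{\min}(f)^2)$, which destroys the exponent $2+k$. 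With the constant bound one gets $\mathsf{s}_0(\widetilde f)\le 3M=9\UC_{\min}(f)$, hence $\mathsf{s}(\widetilde f)=O(\UC_{\min}(f))$, and the theorem follows for $\dqc$ and $\rqc$.

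A secondary problem is your composition step for the other measures. The assertion that $\OR_M$-composition multiplies $\qqc$ by $M$ is false: $\qqc(\OR_M)=\Theta(\sqrt{M})$ and by the quantum composition theorem $\qqc(\OR_M\circ g)=\Theta(\sqrt{M}\,\qqc(g))$, so your route gives at best exponent $1.5+k$ for the quantum case, not $2+k$; likewise for $\cert$ only the $0$-side certificates multiply ($\cert_0(\OR_M\circ g)=M\,\cert_0(g)$, while $\cert_1$ does not grow), so ``the same argument'' does not carry over verbatim and these cases need the more careful treatment given in \cite{DHT17}. In short: the deterministic/randomized outline is essentially the intended one, but the crux of the theorem -- the constant $0$-sensitivity of the desensitized function -- is missing from your proposal, and the blanket multiplicativity claim used for $\qqc$ and $\cert$ is incorrect as stated.
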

\begin{proof}[of Theorem~\ref{theo: intro sensitivity separations}]
Let us begin with the transitive functions $F_{\ref{thm:R_0vsD}}$ from Section~\ref{subsec: F1.1} which will desensitize to get the desired claim. 
From Theorem~\ref{thm:D(abbl)} and Observation~\ref{obs: UC(Fnn)} we have $\dqc(F_{\ref{thm:R_0vsD}}) \geq \widetilde{\Omega}(\UC_{min}(F_{\ref{thm:R_0vsD}})^{2})$. Note that for our function $\UC_{min} = \UC_1$. 

Let $F_{\desentransform}$ be the desensitized version of  
$F_{\ref{thm:R_0vsD}}$. Define $F_{\ref{theo: intro sensitivity separations}}$ to be $\OR_{3 \UC_{\min}(F_{\desentransform})} \circ F_{\desentransform}$.
From Theorem~\ref{thm: Tal's result}, Observation~\ref{obs: desensitized version is transitive} we have the theorem $F_{\ref{theo: intro sensitivity separations}}$ being the required function.
\end{proof}

\section{Separation between quantum query complexity and certificate complexity}
\label{sec: qcc v. certificate}
\cite{ABK16} constructed functions that demonstrated quadratic separation between quantum query complexity and certificate complexity. Their function was not transitive. We modify their function to obtain a transitive function that gives a similar separation.



We start this section by constructing an encoding scheme for the inputs to the $\ksum$ function such that the resulting function $\encodedksum$ is transitive. We then, similar to~\cite{ABK16}, define $\mathsf{ENC}-\blockksum$ function. Composing $\encodedksum$ with $\mathsf{ENC}-\blockksum$ as outer function gives us $F_{\ref{thm:Qvs C}}$.

\subsection{Function definition}
Recall that, from Definition~\ref{defi: k-sum}, for $\Sigma= [n^k]$, the function $\ksum: \Sigma^n \to \zone$ is defined as follows: on input $x_1,x_2, \dots, x_n \in \Sigma$, if there exists $k$ element $x_{i_1}, \dots, x_{i_k}$, $i_1, \dots, i_k \in [n]$, that sums to $0~(\text{mod}~|\Sigma|)$ then output $1$, otherwise output $0$. We first define an encoding scheme for $\Sigma$.

\subsubsection{Encoding scheme}
\label{sec: Encoding3}
Similar to Section~\ref{sec: encoding scheme of variant 1} we first define the standard form of the encoding of $x \in \Sigma$ and then extend it by action of suitable group action to define all encodings that represent $x \in \Sigma$ where $\Sigma$ is of size $n^k$ for some $k \in \N$.

Fix some $x \in \Sigma$ and let $x = x_1 x_2 \dots x_{k \log n}$ be the binary representation of $x$. The standard form of encoding of $x$ is defined as follows: For all $i\in [ k\log n]$ we encode $x_i$ with with $4(k \log n +2)$ bit Boolean string satisfying the following three conditions:
\begin{enumerate}
    \item $x_i= x_{i1} x_{i2} x_{i3} x_{i4}$ where each $x_{ij}$, for $j \in [4]$, is a $(k\log n+2)$ bit string,
    
    \item  if $x_i =1$ then $|x_{i1}|=1, |x_{i2}|=0 , |x_{i3}|=2, |x_{i4}|= i + 2$, and
    
    \item   if $x_i =0$ then $|x_{i1}|=0, |x_{i2}|=1, |x_{i3}|=2, |x_{i4}|= i + 2$.
\end{enumerate}
Having defined the standard form, other valid encodings of $x_i= (x_{i1} x_{i2} x_{i3} x_{i4})$ are obtained by the  action of permutations $(12)(34), (13)(24) \in \sym_4$ on the indices $\{i1, i2,i3, i4\}$.
Finally if $x = \{(x_{ij}| i\in [ k\log n], j \in [4]\} $, then $\{(x_{\sigma(i)\gamma(j)})| \sigma \in \sym_{k\log n}, \gamma \in T \subset \sym_4\}$ is the set of all valid encoding for $x \in \Sigma$.

The decoding scheme follows directly from the encoding scheme. Given $y \in \zone^{k \log n(4k \log n + 8)}$, first break $y$ into $k \log n $ blocks each of size $4k \log n + 8$ bits. If each block is a valid encoding then output the decoded string else output that $y$ is not a valid encoding for any element from $\Sigma$.

\subsubsection{Definition of the encoded function}
$\encodedksum$ is a Boolean function that defined on $n$-bit as follows: Split the $n$-bit input into block of size $4k\log n(k \log n +2)$. We say such a block is a valid block iff it follows the encoding scheme in Section~\ref{sec: Encoding3} i.e. represents a number from the alphabet $\Sigma$. The output value of the function is $1$ iff there exists $k$ valid block such that the number represented by the block in $\Sigma$ sums to $0~(\text{mod}~ |\Sigma|)$.

$\mathsf{ENC}-\blockksum$ is a special case of the $\encodedksum$  function. We define it next. 
$\mathsf{ENC}-\blockksum$ is a Boolean function that is defined on $n$-bit as follows: The input string is split into blocks of size $4k\log n(k \log n +2)$ and we say such block is a valid block iff it follows the encoding scheme of Section~\ref{sec: Encoding3} i.e. represents a number from the alphabet $\Sigma$. The output value of the function is $1$ iff there exists $k$ valid block such that the number represented by the block in $\Sigma$ sums to $0(\text{mod}~ |\Sigma|)$ and the number of $1$ in the other block is at least $6 \times (k \log n)$. Finally, similar to~\cite{ABK16} define $F_{\ref{thm:Qvs C}}:\zone^{n^2} \to \zone$ to be $\mathsf{ENC}-\blockksum \circ \encodedksum$, with $k = \log n$.

The proof of Theorem~\ref{thm:Qvs C} is the same as that of~\cite{ABK16}. We give the proof here for completeness.

\begin{proof}[of Theorem~\ref{thm:Qvs C}]
We first show that the certificate complexity of $F_{\ref{thm:Qvs C}}$ is $\mathsf{O}(4 n k^2 \log n(k \log n+2)$. For this we show that every input to $\mathsf{ENC}-\blockksum$, the outer function of $F_{\ref{thm:Qvs C}}$, has a certificate with $\widetilde{O}(k \times (4k\log n(k \log n +2)))$ many $0$'s and $O(n)$ many $1$'s. Also the inner function of $F_{\ref{thm:Qvs C}}$, i.e. $\encodedksum$, has $1$-certificate of size $\mathsf{O}(4k^2 \log n(k \log n+2)$ and $0$-certificate of size $\mathsf{O}(n)$. Hence, the $F_{\ref{thm:Qvs C}}$ function has certificate of size $\mathsf{O}(4 n k^2 \log n(k \log n+2))$.

Every $1$-input of $\mathsf{ENC}-\blockksum$ has $k$ valid encoded blocks such that the number represented by them sums to $(0~\text{mod}~|\Sigma|)$. This can be certified using at most $\widetilde{\mathsf{O}}(k \times (4k\log n(k \log n +2)))$ number of $0$'s and all the $1$'s from every other block. 

There are two types of $0$-inputs of $\mathsf{ENC}-\blockksum$. The first type of $0$-input has at least one block in which a number of $1$ is less than $6 \times (k \log n)$ and the zeros of that block is a $0$-certificate of size $\widetilde{O} (4k\log n(k \log n +2))$.
The other type of $0$-input is such that every block contains at least $6 \times (k \log n)$ number of $1$'s. This type of $0$-input can be certified by providing all the $1$'s in every block, which is at most $\mathsf{O}(n)$. This is because using all the $1$'s we can certify that even if the blocks were valid, no $k$-blocks of them are such that the number represented by them sums to $0~(\text{mod}~|\Sigma|)$.  

Next, we prove $\Omega(n^2)$ lower bound on the quantum query complexity of $F_{\ref{thm:Qvs C}}$. From Theorem~\ref{theo: tight composition}, $\qqc(F_{\ref{thm:Qvs C}}) = \Omega(\qqc(\mathsf{ENC}-\blockksum) \qqc(\encodedksum))$. Note that the inputs to  $\ksum$ are coming from some alphabet set $\Sigma$ where size of $\Sigma$ is at most $n^k$. To represent any such $x \in \sigma$ we need at most $k \log n$ many bits. Now if we take the alphabet set a bit larger that is if we need $10 log n$ bits to represent any element from the larger alphabet set, then $\ksum$ a sub-function of $\mathsf{\blockksum}$. Now our encoding scheme being again of size $O(k\log n)$ it follows that $\encodedksum$ is a sub-function of $\mathsf{ENC-\blockksum}$. Since quantum query complexity of $\mathsf{ENC-\blockksum}$ is ${\Omega}(\qqc(\encodedksum/ {k \log n}))$ , from Theorem~\ref{theo: ksum qcc lower bound} the quantum query complexity of the $\mathsf{ENC-\blockksum}$ function is $\Omega\left(\frac{n^{\frac{k}{k+1}}}{k^{\frac{3}{2}}\log n(k \log n +2)}\right)$. Thus
\begin{align*}
    \qqc(F_{\ref{thm:Qvs C}}) = \Omega\left(\frac{n^{\frac{2k}{k+1}}}{k^{3}\log n(k \log n +2)}\right).
\end{align*}
Hence, $\qqc(F_{\ref{thm:Qvs C}})= \widetilde{\Omega}(n^2)$, taking $k= \log n$.
\end{proof}


\section{Conclusion}
As far as we know, this is the first paper that presents a thorough investigation on the relationships between various pairs of complexity measures for transitive function. 

The current best-known relationships and best-known separations between various pairs of measures for transitive functions are summarized in Table~\ref{table: main_table}. 
Unfortunately, a number of cells in the table is not tight. 
In this context, we would like to point out some important directions:
\begin{itemize}
    \item For some of these cells, the separation results for transitive functions are weaker than that of the general functions. A natural question is the following: why can't we design a transitive version of the general functions that achieve the same separation?
    Thus following is a natural question.
    \begin{open}
    For a pair of complexity measures for Boolean functions whose best-known separations are achieved via cheat sheets, obtain similar separations for transitive Boolean functions.
    \end{open}

    \item A total function was constructed in \cite{BT} that demonstrates quadratic separations between approximate degrees with sensitivity and several other complexity measures. It is thus natural to investigate the following open problem.
    \begin{open}
    Come up with transitive functions that achieve similar separations for those pair of measures whose best-known separations are shown by~\cite{BT}.
    \end{open}
  
    \item Recently \cite{BGJ+21}, \cite{Balodis21} and \cite{BDG+22} came up with new classes of Boolean functions, starting with the HEX (see~\cite{BGJ+21}) and EAH (see~\cite{BDG+22}) functions,
    that exhibit improved separations between certificate complexity and other complexity measures using the. 
    
    In light of these recent developments is important to ask whether similar separations can be shown for transitive functions. Following an open problem is a natural starting point.
    \begin{open}
    Can the HEX and EAH functions be modified to transitive functions, while preserving their desired complexity measures up to poly-logarithmic factors?
    \end{open}
    
    \item While we have been concerned only with lower bounds in this paper, it is an exciting research direction to bridge the gap between complexity measures of transitive Boolean functions by providing improved upper bounds.
    \begin{open}
    Bridge the gaps in Table~\ref{table: main_table} by coming up with better upper bounds on complexity measures for transitive functions.
    \end{open}
    \end{itemize}
  Even with the recent results of~\cite{Huang} and~\cite{ABK+}, there are significant gaps between the best-known lower and upper bounds in this case which gives another set of open problems to investigate in the study of combinatorial measures of transitive Boolean functions.
   

\nocite{*}
\bibliographystyle{abbrvnat}
\bibliography{ref}

\begin{thebibliography}{46}
\providecommand{\natexlab}[1]{#1}
\providecommand{\url}[1]{\texttt{#1}}
\expandafter\ifx\csname urlstyle\endcsname\relax
  \providecommand{\doi}[1]{doi: #1}\else
  \providecommand{\doi}{doi: \begingroup \urlstyle{rm}\Url}\fi

\bibitem[Aaronson(2008)]{Aar08}
S.~Aaronson.
\newblock Quantum certificate complexity.
\newblock \emph{Journal of Computer and System Sciences}, 74\penalty0 (3):\penalty0 313--322, 2008.
\newblock \doi{10.1016/j.jcss.2007.06.020}.

\bibitem[Aaronson et~al.(2016)Aaronson, Ben{-}David, and Kothari]{ABK16}
S.~Aaronson, S.~Ben{-}David, and R.~Kothari.
\newblock Separations in query complexity using cheat sheets.
\newblock In \emph{{STOC}}, pages 863--876, 2016.
\newblock \doi{10.1145/2897518.2897644}.

\bibitem[Aaronson et~al.(2020)Aaronson, Ben{-}David, Kothari, and Tal]{ADKT}
S.~Aaronson, S.~Ben{-}David, R.~Kothari, and A.~Tal.
\newblock Quantum implications of {H}uang's sensitivity theorem.
\newblock \emph{CoRR}, abs/2004.13231, 2020.
\newblock URL \url{https://arxiv.org/abs/2004.13231}.

\bibitem[Aaronson et~al.(2021)Aaronson, Ben{-}David, Kothari, Rao, and Tal]{ABK+}
S.~Aaronson, S.~Ben{-}David, R.~Kothari, S.~Rao, and A.~Tal.
\newblock Degree vs. approximate degree and quantum implications of {H}uang's sensitivity theorem.
\newblock In \emph{{STOC}}, pages 1330--1342, 2021.
\newblock \doi{10.1145/3406325.3451047}.

\bibitem[Ambainis(2016)]{Ambainis16}
A.~Ambainis.
\newblock Superlinear advantage for exact quantum algorithms.
\newblock \emph{{SIAM} Journal on Computing}, pages 617--631, 2016.
\newblock \doi{10.1137/130939043}.

\bibitem[Ambainis et~al.(2017)Ambainis, Balodis, Belovs, Lee, Santha, and Smotrovs]{ABBL+17}
A.~Ambainis, K.~Balodis, A.~Belovs, T.~Lee, M.~Santha, and J.~Smotrovs.
\newblock Separations in query complexity based on pointer functions.
\newblock \emph{Journal of the {ACM}}, 64\penalty0 (5):\penalty0 32:1--32:24, 2017.
\newblock \doi{10.1145/3106234}.

\bibitem[Arora and Barak(2009)]{CCBB}
S.~Arora and B.~Barak.
\newblock \emph{Computational Complexity - {A} Modern Approach}.
\newblock Cambridge University Press, 2009.
\newblock ISBN 978-0-521-42426-4.
\newblock URL \url{http://www.cambridge.org/catalogue/catalogue.asp?isbn=9780521424264}.

\bibitem[Balodis(2021)]{Balodis21}
K.~Balodis.
\newblock Several separations based on a partial {Boolean} function.
\newblock \emph{arXiv:2103.05593}, 2021.
\newblock URL \url{https://arxiv.org/abs/2103.05593}.

\bibitem[Balodis et~al.(2021)Balodis, Ben-David, Göös, Jain, and Kothari]{BDG+22}
K.~Balodis, S.~Ben-David, M.~Göös, S.~Jain, and R.~Kothari.
\newblock Unambiguous {DNF}s and {Alon-Saks-Seymour}.
\newblock In \emph{FOCS}, pages 116--124, 2021.
\newblock \doi{10.1109/FOCS52979.2021.00020}.

\bibitem[Beals et~al.(2001)Beals, Buhrman, Cleve, Mosca, and de~Wolf]{BBCM+01}
R.~Beals, H.~Buhrman, R.~Cleve, M.~Mosca, and R.~de~Wolf.
\newblock Quantum lower bounds by polynomials.
\newblock \emph{J. {ACM}}, 48\penalty0 (4):\penalty0 778--797, 2001.
\newblock URL \url{https://doi.org/10.1145/502090.502097}.

\bibitem[Belovs and Spalek(2013)]{BSK13}
A.~Belovs and R.~Spalek.
\newblock Adversary lower bound for the k-sum problem.
\newblock In \emph{{ITCS}}, pages 323--328, 2013.
\newblock \doi{10.1145/2422436.2422474}.

\bibitem[Ben{-}David et~al.(2017)Ben{-}David, Hatami, and Tal]{DHT17}
S.~Ben{-}David, P.~Hatami, and A.~Tal.
\newblock Low-sensitivity functions from unambiguous certificates.
\newblock In \emph{{ITCS}}, volume~67, pages 28:1--28:23, 2017.
\newblock \doi{10.4230/LIPIcs.ITCS.2017.28}.

\bibitem[Ben{-}David et~al.(2020)Ben{-}David, Childs, Gily{\'{e}}n, Kretschmer, Podder, and Wang]{DCK+20}
S.~Ben{-}David, A.~M. Childs, A.~Gily{\'{e}}n, W.~Kretschmer, S.~Podder, and D.~Wang.
\newblock Symmetries, graph properties, and quantum speedups.
\newblock In \emph{{FOCS}}, pages 649--660, 2020.
\newblock \doi{10.1109/FOCS46700.2020.00066}.

\bibitem[Ben-David et~al.(2021)Ben-David, G{\"o}{\"o}s, Jain, and Kothari]{BGJ+21}
S.~Ben-David, M.~G{\"o}{\"o}s, S.~Jain, and R.~Kothari.
\newblock Unambiguous {DNF}s from {Hex}.
\newblock \emph{Electronic Colloquium on Computational Complexity}, 28:\penalty0 16, 2021.

\bibitem[Bennett et~al.(1997)Bennett, Bernstein, Brassard, and Vazirani]{bennett1997strengths}
C.~H. Bennett, E.~Bernstein, G.~Brassard, and U.~Vazirani.
\newblock Strengths and weaknesses of quantum computing.
\newblock \emph{SIAM journal on Computing}, 26\penalty0 (5):\penalty0 1510--1523, 1997.
\newblock \doi{10.1137/S0097539796300933}.

\bibitem[Brassard et~al.(2002)Brassard, Hoyer, Mosca, and Tapp]{brassard2002quantum}
G.~Brassard, P.~Hoyer, M.~Mosca, and A.~Tapp.
\newblock Quantum amplitude amplification and estimation.
\newblock \emph{Contemporary Mathematics}, 305:\penalty0 53--74, 2002.

\bibitem[Buhrman and de~Wolf(2002)]{BW}
H.~Buhrman and R.~de~Wolf.
\newblock Complexity measures and decision tree complexity: a survey.
\newblock \emph{Theoretical Computer Science}, 288\penalty0 (1):\penalty0 21--43, 2002.
\newblock \doi{10.1016/S0304-3975(01)00144-X}.

\bibitem[Bun and Thaler(2020)]{BT}
M.~Bun and J.~Thaler.
\newblock A nearly optimal lower bound on the approximate degree of {AC}\({}^{\mbox{0}}\).
\newblock \emph{{SIAM} Journal on Computing}, 49\penalty0 (4), 2020.
\newblock \doi{10.1137/17M1161737}.

\bibitem[Chakraborty(2011)]{Chakraborty11}
S.~Chakraborty.
\newblock On the sensitivity of cyclically-invariant {Boolean} functions.
\newblock \emph{Discrete Mathematics and Theoretical Computer Science}, 13\penalty0 (4):\penalty0 51--60, 2011.
\newblock \doi{10.46298/dmtcs.552}.

\bibitem[Chakraborty et~al.(2021)Chakraborty, Kayal, and Paraashar]{CKP21}
S.~Chakraborty, C.~Kayal, and M.~Paraashar.
\newblock Separations between combinatorial measures for transitive functions.
\newblock \emph{arXiv:2103.12355}, 2021.
\newblock URL \url{https://arxiv.org/abs/2103.12355}.

\bibitem[Chakraborty et~al.(2022)Chakraborty, Kayal, and Paraashar]{SKP22}
S.~Chakraborty, C.~Kayal, and M.~Paraashar.
\newblock Separations between combinatorial measures for transitive functions.
\newblock In \emph{{ICALP}}, volume 229, pages 36:1--36:20, 2022.
\newblock URL \url{https://doi.org/10.4230/LIPIcs.ICALP.2022.36}.

\bibitem[Drucker(2011)]{Drucker11}
A.~Drucker.
\newblock Block sensitivity of minterm-transitive functions.
\newblock \emph{Theoretical Computer Science}, 412\penalty0 (41):\penalty0 5796--5801, 2011.
\newblock \doi{10.1016/j.tcs.2011.06.025}.

\bibitem[Gao et~al.(2013)Gao, Mao, Sun, and Zuo]{GaoMSZ13}
Y.~Gao, J.~Mao, X.~Sun, and S.~Zuo.
\newblock On the sensitivity complexity of bipartite graph properties.
\newblock \emph{Theoretical Computer Science}, 468:\penalty0 83--91, 2013.
\newblock \doi{10.1016/j.tcs.2012.11.006}.

\bibitem[Gilmer et~al.(2016)Gilmer, Saks, and Srinivasan]{GSS16}
J.~Gilmer, M.~E. Saks, and S.~Srinivasan.
\newblock Composition limits and separating examples for some {Boolean} function complexity measures.
\newblock \emph{Combinatorica}, 36\penalty0 (3):\penalty0 265--311, 2016.
\newblock \doi{10.1007/s00493-014-3189-x}.

\bibitem[G{\"{o}}{\"{o}}s et~al.(2018{\natexlab{a}})G{\"{o}}{\"{o}}s, Jayram, Pitassi, and Watson]{GPJW18}
M.~G{\"{o}}{\"{o}}s, T.~S. Jayram, T.~Pitassi, and T.~Watson.
\newblock Randomized communication versus partition number.
\newblock \emph{{ACM } Transactions on Computation Theory}, 10\penalty0 (1):\penalty0 4:1--4:20, 2018{\natexlab{a}}.
\newblock \doi{10.1145/3170711}.

\bibitem[G{\"{o}}{\"{o}}s et~al.(2018{\natexlab{b}})G{\"{o}}{\"{o}}s, Pitassi, and Watson]{GPW18}
M.~G{\"{o}}{\"{o}}s, T.~Pitassi, and T.~Watson.
\newblock Deterministic communication vs. partition number.
\newblock \emph{{SIAM} Journal on Computing}, 47\penalty0 (6):\penalty0 2435--2450, 2018{\natexlab{b}}.
\newblock \doi{10.1137/16M1059369}.

\bibitem[Huang(2019)]{Huang}
H.~Huang.
\newblock Induced subgraphs of hypercubes and a proof of the sensitivity conjecture.
\newblock \emph{Annals of Mathematics}, 190\penalty0 (3):\penalty0 949--955, 2019.
\newblock \doi{10.4007/annals.2019.190.3.6}.

\bibitem[Kimmel(2013)]{Kimmel13}
S.~Kimmel.
\newblock Quantum adversary (upper) bound.
\newblock \emph{Chicago Journal of Theoritical Computer Science}, 2013.
\newblock \doi{10.4086/cjtcs.2013.004}.

\bibitem[Kulkarni and Tal(2016)]{KT16}
R.~Kulkarni and A.~Tal.
\newblock On fractional block sensitivity.
\newblock \emph{Chicago Journal of Theoritical Computer Science}, 2016.
\newblock URL \url{http://cjtcs.cs.uchicago.edu/articles/2016/8/contents.html}.

\bibitem[Lee and Roland(2013)]{lee2013strong}
T.~Lee and J.~Roland.
\newblock A strong direct product theorem for quantum query complexity.
\newblock \emph{Computational Complexity}, 22\penalty0 (2):\penalty0 429--462, 2013.
\newblock \doi{https://doi.org/10.1007/s00037-013-0066-8}.

\bibitem[Lee et~al.(2011)Lee, Mittal, Reichardt, Spalek, and Szegedy]{LMR+11}
T.~Lee, R.~Mittal, B.~W. Reichardt, R.~Spalek, and M.~Szegedy.
\newblock Quantum query complexity of state conversion.
\newblock In \emph{{FOCS}}, pages 344--353, 2011.
\newblock \doi{10.1109/FOCS.2011.75}.

\bibitem[Li and Sun(2017)]{LiS17}
Q.~Li and X.~Sun.
\newblock On the sensitivity complexity of k-uniform hypergraph properties.
\newblock In \emph{{STACS}}, volume~66, pages 51:1--51:12, 2017.
\newblock \doi{10.4230/LIPIcs.STACS.2017.51}.

\bibitem[Montanaro(2014)]{Montanaro14}
A.~Montanaro.
\newblock A composition theorem for decision tree complexity.
\newblock \emph{Chicago Journal of Theoritical Computer Science}, 2014.
\newblock \doi{10.4086/cjtcs.2014.006}.

\bibitem[Mukhopadhyay and Sanyal(2015)]{MS}
S.~Mukhopadhyay and S.~Sanyal.
\newblock Towards better separation between deterministic and randomized query complexity.
\newblock In \emph{{FSTTCS}}, volume~45, pages 206--220, 2015.
\newblock \doi{10.4230/LIPIcs.FSTTCS.2015.206}.

\bibitem[Nisan and Szegedy(1994)]{NS94}
N.~Nisan and M.~Szegedy.
\newblock On the degree of {Boolean} functions as real polynomials.
\newblock \emph{Computational Complexity}, 4:\penalty0 301--313, 1994.
\newblock \doi{10.1007/BF01263419}.

\bibitem[Nisan and Wigderson(1995)]{NW94}
N.~Nisan and A.~Wigderson.
\newblock On rank vs. communication complexity.
\newblock \emph{Combinatorica}, 15\penalty0 (4):\penalty0 557--565, 1995.
\newblock \doi{10.1007/BF01192527}.

\bibitem[Radhakrishnan and Sanyal(2016)]{RS}
J.~Radhakrishnan and S.~Sanyal.
\newblock The zero-error randomized query complexity of the pointer function.
\newblock In \emph{{FSTTCS} 2016}, pages 16:1--16:13, 2016.
\newblock \doi{10.4230/LIPIcs.FSTTCS.2016.16}.

\bibitem[Reichardt(2011)]{Reichardt11}
B.~Reichardt.
\newblock Reflections for quantum query algorithms.
\newblock In \emph{{SODA}}, pages 560--569, 2011.
\newblock \doi{10.1137/1.9781611973082.44}.

\bibitem[Rubinstein(1995)]{Rub95}
D.~Rubinstein.
\newblock Sensitivity vs. block sensitivity of {Boolean} functions.
\newblock \emph{Combinatorica}, 15\penalty0 (2):\penalty0 297--299, 1995.
\newblock \doi{10.1007/BF01200762}.

\bibitem[Snir(1985)]{Snir85}
M.~Snir.
\newblock Lower bounds on probabilistic linear decision trees.
\newblock \emph{Theoretical Computer Science}, 38:\penalty0 69--82, 1985.
\newblock \doi{10.1016/0304-3975(85)90210-5}.

\bibitem[Sun(2007)]{Sun07}
X.~Sun.
\newblock Block sensitivity of weakly symmetric functions.
\newblock \emph{Theoretical Computer Science}, 384\penalty0 (1):\penalty0 87--91, 2007.
\newblock \doi{10.1016/j.tcs.2007.05.020}.

\bibitem[Sun(2011)]{Sun11}
X.~Sun.
\newblock An improved lower bound on the sensitivity complexity of graph properties.
\newblock \emph{Theoretical Computer Science}, 412\penalty0 (29):\penalty0 3524--3529, 2011.
\newblock \doi{10.1016/j.tcs.2011.02.042}.

\bibitem[Sun et~al.(2004)Sun, Yao, and Zhang]{SYZ04}
X.~Sun, A.~C. Yao, and S.~Zhang.
\newblock Graph properties and circular functions: How low can quantum query complexity go?
\newblock In \emph{{CCC}}, pages 286--293, 2004.
\newblock \doi{10.1109/CCC.2004.1313851}.

\bibitem[Tal(2013)]{Tal13}
A.~Tal.
\newblock Properties and applications of {Boolean} function composition.
\newblock In \emph{{ITCS}}, pages 441--454, 2013.
\newblock \doi{10.1145/2422436.2422485}.

\bibitem[Tardos(1989)]{Tardos89}
G.~Tardos.
\newblock Query complexity, or why is it difficult to seperate ${NP^A} \cap co{NP^{A}}$ from ${P}^{A}$ by random oracles {A}?
\newblock \emph{Combinatorica}, 9\penalty0 (4):\penalty0 385--392, 1989.
\newblock \doi{https://link.springer.com/article/10.1007/BF02125350}.

\bibitem[Tur{\'{a}}n(1984)]{Turan84}
G.~Tur{\'{a}}n.
\newblock The critical complexity of graph properties.
\newblock \emph{Information Processing Letters}, 18\penalty0 (3):\penalty0 151--153, 1984.
\newblock URL \url{https://doi.org/10.1016/0020-0190(84)90019-X}.

\end{thebibliography}
\label{sec:biblio}

\newpage
\appendix

\section{Known lower bounds for complexity measures for the class of transitive function}
The following table represents the individual known separations and the known example for different complexity measures for the class of transitive function:
\begin{table}[!h]
    \centering
    \begin{tabular}{|c||c|c|}
        \hline
$\textbf{Measure}$ & $\textbf{known lower bounds}$ & $\textbf{Known example}$ \\
\hline
{$\dqc$} &$\Omega(\sqrt{N})$ &$O(\sqrt{N})$  \\
 &\cite{SYZ04} &\cite{SYZ04}\\
\hline
{$\roc$ } & $\Omega(\sqrt{N})$& $O(\sqrt{N})$ \\
 &  \cite{SYZ04}& \cite{SYZ04}\\
\hline
{$\rqc$} &$\Omega({N^{\frac{1}{3}}})$ & $O(\sqrt{N})$ \\
& $\mathsf{bs(f)}= O(\rqc(f))$& \cite{SYZ04}\\
\hline
$\mathsf{C}$ & $\Omega(\sqrt{N})$ & $O(\sqrt{N})$  \\
& & $Tribe(\sqrt{N},\sqrt{N})$ \\
\hline
$\mathsf{RC}$& $\Omega({N^{\frac{1}{3}}})$&$O(\sqrt{N})$   \\
& $\mathsf{bs(f)}= O(\mathsf{RC}(f))$
& $Tribe(\sqrt{N},\sqrt{N})$ \\
\hline
$\mathsf{bs}$ &$\Omega({N^{\frac{1}{3}}})$& $\Tilde{O}(N^{\frac{3}{7}})$ \\
 &\cite{Sun07} & \cite{Sun07}, \cite{Drucker11}\\
\hline
$\mathsf{s}$ &$\Omega({N^{\frac{1}{8}}})$ &  $\Theta(N^{\frac{1}{3}})$ \\
 &$\deg(f)= O(\mathsf{s(f)})^2$ & \cite{Chakraborty11}\\
\hline
{$\lambda$} &$\Omega({N^{\frac{1}{12}}})$ &   $\Theta(N^{\frac{1}{3}})$  \\
 &$\mathsf{C(f)}= O(\mathsf{\lambda(f)})^6$  &\cite{Chakraborty11} \\
\hline
{$\qec$ } &$\Omega(N^{\frac{1}{4}})$ & $O(\sqrt{N})$  \\
& $\qqc(f) = O(\qec(f))$ & \cite{SYZ04}\\
\hline
{$\deg$} &$\Omega({N^{\frac{1}{4}}})$ &  $O(\sqrt{N})$ \\
 &$\deg(f) = \Omega( \adeg(f))$ & \cite{SYZ04} \\
\hline
{$\qqc$} &$\Omega(N^{\frac{1}{4}})$ & $\Tilde{O}(N^{\frac{1}{4}})$\\
&\cite{SYZ04}& \cite{SYZ04} \\
\hline
$\adeg$ &$\Omega({N^{\frac{1}{4}}})$  &$\Tilde{O}(N^{\frac{1}{4}})$ \\
&\cite{KT16} & \cite{SYZ04} \\
\hline
    \end{tabular}
    \caption{In each row, for the measure $A$, the two entries $a,b$ represents: 
(1) (Known lower bound) for all transitive Boolean function $f$,
$A(f) = \Omega(a)$, and  (2) (Known example) there exists a \emph{transitive} function $g$ such that $A(g) = O(b)$, where $a$ and $b$ are some polynomial in $N$.}
    \label{tab:my_label 3}
\end{table}

\end{document}